\newcommand{\possessivecite}[1]{\citeauthor{#1}'s \citeyearpar{#1}}
\Crefname{assumption}{Assumption}{Assumptions}
\newtheorem{theorem}{Theorem}
\newtheorem{assumption}{Assumption}
\newtheorem{definition}{Definition}
\newtheorem{lemma}{Lemma}
\newtheorem{corollary}{Corollary}
\theoremstyle{definition}
\newtheorem{remark}{Remark}
\theoremstyle{definition}
\newtheorem{example}{Example}
\newcommand\RedeclareMathOperator{\@ifstar{\def\rmo@s{m}\rmo@redeclare}{\def\rmo@s{o}\rmo@redeclare}}
\newcommand\rmo@redeclare[2]{\begingroup \escapechar\m@ne\xdef\@gtempa{{\string#1}}\endgroup
  \expandafter\@ifundefined\@gtempa
     {\@latex@error{\noexpand#1undefined}\@ehc}\relax
  \expandafter\rmo@declmathop\rmo@s{#1}{#2}}
\newcommand\rmo@declmathop[3]{\DeclareRobustCommand{#2}{\qopname\newmcodes@#1{#3}}}
\newcommand{\longdash}[1][2em]{\makebox[#1]{$\m@th\smash-\mkern-7mu\cleaders\hbox{$\mkern-2mu\smash-\mkern-2mu$}\hfill\mkern-7mu\smash-$}}
\newcommand{\omitskip}{\kern-\arraycolsep}
\newcommand{\T}{\intercal}
\DeclareMathOperator*{\KL}{\mathcal{D}_\text{KL}}
\newcommand{\iid}{\overset{\textsf{iid}}{\sim}}
\DeclareMathOperator{\var}{\mathsf{var}}
\DeclareMathOperator{\proj}{\mathsf{Proj}}
\DeclareMathOperator{\E}{\mathbb{E}}
\DeclareMathOperator*{\argmax}{arg\,max}
\newcommand*\dd{\mathop{}\!\mathrm{d}}
\newcommand{\I}{\mathbb{I}}
\newcommand{\distconvto}{\overset{d}{\Rightarrow}}
\newcommand{\unif}{\mathsf{unif}}
\newcommand{\pois}{\mathsf{Pois}}
\newcommand{\N}{\mathcal{N}}
\RedeclareMathOperator{\Re}{\text{Re}}
\RedeclareMathOperator{\Im}{\text{Im}} 
\begin{document}

\begin{frontmatter}
\title{Empirical Bayes for Large-scale Randomized Experiments: a Spectral Approach}
\runtitle{A Spectral Approach for Large-scale Empirical Bayes}

\begin{aug}
\author{\fnms{F.~Richard}~\snm{Guo}\thanksref{m1,t2}\ead[label=e1]{ricguo@stat.washington.edu}},
\author{\fnms{James}~\snm{McQueen}\thanksref{m2}\ead[label=e2]{jmcq@amazon.com}}
\and
\author{\fnms{Thomas~S.}~\snm{Richardson}\thanksref{m1,m2}\ead[label=e4]{rchatho@amazon.com}\ead[label=e3]{thomasr@u.washington.edu}}
\runauthor{Guo, McQueen and Richardson}

\affiliation{University of Washington, Seattle\thanksmark{m1} and Amazon.com, Inc.\thanksmark{m2}}

\thankstext{t2}{Research conducted when the first author interned with Amazon.}

\address{Department of Statistics\\
University of Washington\\
Box 354322\\
Seattle, WA 98195 \\
\printead{e1}\\
\phantom{E-mail: }\printead*{e3}\\}

\address{500 9th Ave N\\
Seattle, WA 98109\\
\printead{e2}\\
\phantom{E-mail:\ }\printead*{e4}}
\end{aug}

\begin{abstract}
Large-scale randomized experiments, sometimes called A/B tests, are increasingly prevalent in many industries. Though such experiments are often analyzed via frequentist $t$-tests, arguably such analyses are deficient: $p$-values are hard to interpret and not easily incorporated into decision-making. As an alternative, we propose an empirical Bayes approach, which assumes that the treatment effects are realized from a ``true prior''. This requires inferring the prior from previous experiments. Following Robbins, we estimate a family of marginal densities of empirical effects, indexed by the noise scale. We show that this family is characterized by the heat equation. We develop a spectral maximum likelihood estimate based on a Fourier series representation, which can be efficiently computed via convex optimization. In order to select hyperparameters and compare models, we describe two model selection criteria. We demonstrate our method on simulated and real data, and compare posterior inference to that under a Gaussian mixture model of the prior.
\end{abstract}

\begin{keyword}
\kwd{A/B testing}
\kwd{empirical Bayes}
\kwd{Fourier series}
\kwd{heat equation}
\kwd{method of sieves}
\kwd{score matching}
\end{keyword}

\end{frontmatter}

\section{Introduction}
Consider a randomized experiment with a binary treatment and a continuous outcome. We use $c$ and $t$ to denote the control and treatment groups respectively. We wish to estimate the (true) average treatment effect (ATE) $\Delta := \mu_{t} - \mu_{c}$, the difference in the population means.
Subjects are randomly assigned: $n_{c}$ to control and $n_{t}$ to treatment. An unbiased estimator for $\Delta$ is the difference in the empirical means
\begin{equation}
\hat{\Delta} = \bar{X}_t - \bar{X}_c,
\end{equation}
which we will call the ``observed effect''. 

Suppose $n = n_t + n_c \rightarrow \infty$ and $n_t / n \rightarrow \gamma \in (0,1)$. By the central limit theorem, one can show that 
\begin{equation} \label{eqs:t-normal}
T := \frac{(\bar{X}_t - \mu_t) - (\bar{X}_c - \mu_c)}{\sqrt{s_t^2 / n_t + s_c^2 / n_c}} = \frac{\hat{\Delta} - \Delta}{\sqrt{s_t^2 / n_t + s_c^2 / n_c}} \distconvto \N(0,1),
\end{equation}
where $s_t^2$, $s_c^2$ are treatment/control group variances, and ``$\distconvto$'' denotes convergence in distribution. If the observations from treatment/control groups are further assumed to be normally distributed, then under finite $n$, approximately $T \sim t_{\nu}$, which is known as an unequal-variance or Welch's $t$-test \citep{welch1947generalization}. Formulae are available for approximating the degrees of freedom $\nu$ from data, e.g., those proposed by \citet{satterthwaite1946approximate} and \citet{welch1947generalization}. $t$-tests are routinely used for the analysis of randomized experiments. In this paper, we focus on the setting where $n$ is typically very large. Note that $\nu \rightarrow \infty$ as $n \rightarrow \infty$, so the normal limit in \cref{eqs:t-normal} is recovered.

For the analysis of a single experiment, it is common to compute a $p$-value from the $t$-statistic. However, in modern technological and industrial settings, a large number of randomized experiments are run every day. In this setting, continuing to use the $t$-test for each experiment separately is inefficient. As we will see, doing so essentially ignores the information from the population of experiments, even though two experiments may seem unrelated. A similar phenomenon, where estimation benefits from sharing information across seemingly unrelated data, is known as Stein's paradox; the reader is referred to \citet{stein1956inadmissibility} and \citet{efron1977stein} for more discussions. 
Also, while the $t$-test is proper for testing whether the true effect is below or above zero, it is inconvenient for industrial decision making, which may be targeted at a specified function that maps the decision and the true effect to a loss/utility.
Lastly, it is often difficult to communicate the precise interpretation of the $p$-value to a wider audience; see \citet{amrhein2019scientists} for a recent debate.  In the following, we argue that an empirical Bayes approach is a more appealing framework for the analysis of large-scale experiments that overcomes these issues. 

The rest of this paper is organized as follows. In \Cref{sec:empirical-bayes}, we review the foundational idea of Robbins and illustrate how an empirical Bayes analysis of experiments can be conducted and its benefits. In \Cref{sec:parametric}, we briefly review parametric estimation of the prior with normal mixtures. In \Cref{sec:nonparametric}, we show that the estimation problem is characterized by a heat equation and develop a nonparametric maximum likelihood estimate based on a spectral representation. We present an efficient algorithm, consider two model selection criteria, and prove consistency in terms of the method of sieves. In \Cref{sec:numerical}, we compare the performance of our spectral estimator with the mixture-of-Gaussian estimator on simulated and real data. Finally, discussion and bibliographic remarks are presented in \Cref{sec:discussion}.  \section{An empirical Bayes framework} \label{sec:empirical-bayes}
\subsection{The basic idea}
The study of empirical Bayes was pioneered by Herbert Robbins \citep{robbins1956empirical,robbins1964empirical,robbins1983some}. The motivation was to estimate parameters associated with ``many structurally similar problems'' \citep{lai1986contributions}. For example, in \citet{robbins1977prediction}, he considered predicting the number of accidents in the upcoming year based on the number of accidents from the current year for a population of $n$ taxi drivers. Suppose driver $i$ is associated with the rate of accidents $\lambda_i$, and we model the number of accidents $X_i \mid \lambda_i \sim \pois(\lambda_i)$ independently for each driver. Further, suppose that the rate of accidents is the same from year to year, then the expected number of accidents in the coming year is $\E [X_i^{\ast} \mid \lambda_i] = \lambda_i$, for which $X_i$ is the maximum likelihood estimate that treats each driver separately. The idea of Robbins is to assume that $\lambda_i$'s are drawn independently from some unknown distribution $G$, which we will call the ``true prior''. By doing so, he further showed that the posterior mean for $\lambda_i$ can be estimated \emph{nonparametrically} by the formula
\begin{equation*}
\widehat{\E}[\lambda_i \mid X_i=x] = (x + 1) n_{x+1}/ n_{x},
\end{equation*}
where $n_{x+1}$ is the number of drivers in the sample with $x+1$ accidents in the current year. As we can see, the formula utilizes information from other ``seemingly unrelated'' drivers. 

The idea has been recently applied to large-scale hypothesis testing for microarray data; see \citet{efron2003robbins} for an overview. In the context of large-scale experimentation, the idea seems readily applicable if we assume that the true treatment effects $\Delta_i$ are realized independently from some unknown distribution $G$. True effects $\Delta_i$ should be comparable across experiments for the population distribution $G$ to be meaningful. For the rest of the paper, we will assume that the true effects are measured in the same unit and are normalized to the same time duration. 

The actual data generating process for the outcome of an experiment also involves several nuisance parameters, for which we would like to \emph{refrain} from assuming an (even unknown) prior. For example, it seems unnatural to specify priors on the means and variances of treatment and control groups; and as we will see, it is also unnecessary if our goal is to infer the posterior over the true effect. Note that this deviates from the usual ``fully Bayesian'' analysis where priors are typically specified for nuisance parameters. 

\subsection{Nuisance-free asymptotic likelihood} We now consider a likelihood for the true effect that does not involve nuisance parameters, and hence frees us from specifying prior on the nuisance. Observe that \cref{eqs:t-normal} can be turned into an asymptotically efficient likelihood on $\Delta$, given by
\begin{equation}
L(\Delta; \hat{\Delta}, \hat{s}) = \phi_{\hat{s}}(\Delta - \hat{\Delta}), \quad \hat{s} = \sqrt{s_t^2 / n_t + s_c^2 / n_c},
\end{equation}
where $\hat{s}$ is the estimated asymptotic scale and $\phi_{s}(\cdot) = s^{-1} \phi(\cdot/s)$ is the Gaussian density function with variance $s^2$. In other words, the observational model relative to $\Delta$ is asymptotically a normal experiment;\footnote{Here the term ``experiment'' is a statistical experiment in the sense of \citet{van2000asymptotic}.} see \citet[Chap 9]{van2000asymptotic} for more on limits of experiments. Since the sample size in large-scale experimentation is typically very large, we make no distinction between $\hat{s}$ and the true scale $s := \sqrt{\sigma_t^2 / n_t + \sigma_c^2 / n_c}$, with $\sigma_t^2, \sigma_c^2$ being the population variances. In the following, we will treat the scale $s_i$ for an experiment $i$ as fixed and given. 

An experiment $i$ is \emph{represented} by the tuple $(\Delta_i, \hat{\Delta}_i, s_i)$. Following the idea of Robbins, we make the following assumption.
\begin{assumption} \label{assump:iid}
$\Delta_i \iid G$ for some distribution $G$.
\end{assumption}
This supposes an underlying population of experiments from which the experiments performed are randomly sampled. 
Then under the empirical Bayes framework, the observational model for an experiment is asymptotically an \emph{additive Gaussian noise} model with \emph{heterogenous scales}
\begin{equation}
\Delta_i \iid G, \quad \hat{\Delta}_i = \Delta_i + s_i Z_i, \label{eqs:observational}
\end{equation}
where $Z_i$ is a standard normal variable that is independent of $\Delta_i$ and independent across experiments. The prior $G$ is an unknown distribution on $\mathbb{R}$. Since the true effects $\{\Delta_i\}$ are unobserved, estimating $G$ from $\{(\hat{\Delta}_i, s_i)\}$ is a problem of deconvolution \citep{delaigle2008density}. 

We comment that this asymptotic treatment is also adopted by several authors, including \citet{deng2015objective, goldberg2017decision, azevedo2019b}. Among others, we note that \citet{deng2015objective} specifies the true prior on $\Delta_i / \tilde{s}_i$ instead of $\Delta_i$, with $\tilde{s}_i:=\sqrt{\sigma_t^2(1+n_t/n_c) + \sigma_c^2(1+n_c/n_t)}$. We find this specification not the most natural, since $\Delta_i / \tilde{s}_i$ is not an objective truth because $\tilde{s}_i$ depends on $n_t / n_c$ and hence the design of experiment. In our model \cref{eqs:observational}, the prior is only assumed on the objective quantity $\Delta_i$, while $s_i$ is treated as given. 

\subsection{Posterior analysis of experiments} It follows from \cref{eqs:observational} that if the prior $G$ is known, we can find the posterior distribution of the true effect $\Delta_i$. If $G$ has density $g$ (with respect to Lebesgue), then the posterior density is given by Bayes' rule
\begin{equation} \label{eqs:posterior-density}
p_{s_i}(\Delta_i \mid \hat{\Delta}_i) = \frac{g(\Delta_i) \phi_{s_i} (\Delta_i - \hat{\Delta}_i)}{p_{s_i}(\hat{\Delta}_i)},
\end{equation}
where 
\begin{equation} \label{eqs:marginal-density}
p_{s_i}(\hat{\Delta}_i) = \int g(u) \phi_{s_i}(\hat{\Delta}_i - u) \dd u
\end{equation}
is the marginal density for $\hat{\Delta}_i$. Technically, $s$ is not treated as random, but rather as a fixed covariate. Therefore, we write $s$ in the subscript.

Before going into the estimation of $G$, we briefly discuss what one should do if $G$ is \emph{known}. In short, we will show that if the true prior is employed in computing the posterior, then (i) the posterior has an exact frequency calibration over the sampling distribution of true effects, and (ii) the Bayes optimal decision for a loss function minimizes the expected loss with respect to that sampling distribution. Therefore, if a loss function is well-specified, then the standard Bayesian decision theory may be applied directly. The properties (i) and (ii) exactly address the issues raised in the Introduction against $t$-tests. 

\subsubsection{Frequency calibration of the posterior}
Using the true prior, the posterior probability $P_{s_i}(\Delta_i \in A \mid \hat{\Delta}_i)$ for a measurable set $A \subset \mathbb{R}$ possesses a frequency interpretation. In the following, we fix $s_i$. With $A$ also fixed, $G_{A}(x):=P_{s_i}(\Delta_i \in A \mid \hat{\Delta}_i = x): \mathbb{R} \rightarrow [0,1]$ is a measurable map from the observable to a posterior probability.

\begin{theorem} \label{thm:bayes-freq-coverage}
Given any $\varphi \in [0,1]$ and any measurable $A \subset \mathbb{R}$, if $P(G_{A}(\hat{\Delta}_i) = \varphi) > 0$, then it holds that
\begin{equation}
P \left(\Delta_i \in A \,\left|\, G_{A}(\hat{\Delta}_i) = \varphi \right.\right) = \varphi.
\end{equation}
\end{theorem}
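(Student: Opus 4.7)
The plan is to prove the statement as a direct application of the tower property of conditional expectation, noting that $G_A(\hat\Delta_i)$ is by construction the conditional probability of $\{\Delta_i \in A\}$ given $\hat\Delta_i$. The only structural fact I need is that conditioning further on a $\sigma$-subalgebra generated by a measurable function of the conditioning variable leaves the posterior unchanged.

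First, I set $Y := G_A(\hat\Delta_i)$, which is a real-valued random variable, measurable with respect to $\sigma(\hat\Delta_i)$ since $G_A$ is measurable. By the definition of $G_A$ as a regular conditional probability associated with the posterior density \cref{eqs:posterior-density}, I have
\begin{equation*}
\E\bigl[\mathbb{1}\{\Delta_i \in A\} \bigm| \hat\Delta_i\bigr] = G_A(\hat\Delta_i) = Y \qquad \text{a.s.}
\end{equation*}

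Next, since $\sigma(Y) \subseteq \sigma(\hat\Delta_i)$, the tower property yields
\begin{equation*}
\E\bigl[\mathbb{1}\{\Delta_i \in A\} \bigm| Y\bigr] = \E\Bigl[\E\bigl[\mathbb{1}\{\Delta_i \in A\} \bigm| \hat\Delta_i\bigr] \Bigm| Y\Bigr] = \E[Y \mid Y] = Y \qquad \text{a.s.}
\end{equation*}
Equivalently, $P(\Delta_i \in A \mid Y) = Y$ almost surely. Under the assumption that $P(Y = \varphi) > 0$, I can evaluate the regular conditional probability on the atom $\{Y = \varphi\}$: integrating the identity $P(\Delta_i \in A \mid Y) = Y$ against $\mathbb{1}\{Y = \varphi\}$ and dividing by $P(Y = \varphi)$ gives $P(\Delta_i \in A \mid Y = \varphi) = \varphi$, which is the claim.

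There is essentially no obstacle here beyond being careful about the fact that conditioning on the event $\{Y = \varphi\}$ is well-defined precisely because its probability is positive, so elementary (rather than regular) conditional probability suffices and no additional regularity of $g$ or of the pushforward of $\hat\Delta_i$ is required. The substantive content of the theorem is conceptual, namely that a correctly-specified Bayesian posterior is automatically calibrated under its own generative model; the entire mathematical derivation is the one-line tower-property computation above.
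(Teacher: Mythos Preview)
Your proof is correct but proceeds along a different route from the paper. The paper argues concretely with densities: it writes $P(\Delta_i \in A,\, G_A(\hat\Delta_i)=\varphi)$ as a double integral using the joint density $g(u)\phi_{s_i}(x-u)$, applies Fubini to swap the order of integration over $u\in A$ and $x\in\{G_A=\varphi\}$, recognizes the inner integral as $G_A(x)\,p_{s_i}(x)$, and then uses that $G_A(x)=\varphi$ on the integration set so that the factor $\varphi$ pulls out. You instead work abstractly with the tower property, observing that $Y:=G_A(\hat\Delta_i)$ is $\sigma(\hat\Delta_i)$-measurable and equals $\E[\mathbb{1}\{\Delta_i\in A\}\mid\hat\Delta_i]$, so $\E[\mathbb{1}\{\Delta_i\in A\}\mid Y]=Y$ almost surely; the positive-probability assumption then lets you read off the elementary conditional probability on the atom $\{Y=\varphi\}$. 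Your argument is shorter and more general, since it nowhere uses the existence of densities, the Gaussian likelihood, or the explicit form of the posterior; the paper's computation, on the other hand, is fully explicit and makes transparent that the result is really just Fubini applied to the joint density, which some readers may find more concrete.
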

Suppose $\varphi = 40\%$, this says that with respect to hypothetical replications over $(\Delta_i, \hat{\Delta}_i)$, out of those instances whose posterior probabilities of hypothesis $A$ is $40\%$, exactly $40\%$ of them actually have their associated $\Delta_i$ satisfying the hypothesis. We leave the proof to \cref{apx:bayes}. 
When $P(G_{A}(\hat{\Delta}_i) = \varphi) = 0$, by integrating the density over an infinitesimal interval and taking the limit, the statement can be generalized to the following. Note that the quantity should be interpreted as a conditional expectation. 
\begin{corollary}
Given any $\varphi \in [0,1]$ and any measurable $A \subset \mathbb{R}$, it holds that
\begin{equation*}
P\left(\Delta_i \in A \,\left|\, G_{A}(\hat{\Delta}_i)=\varphi \right. \right) = \varphi.
\end{equation*}
\end{corollary}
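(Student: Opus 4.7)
The plan is to interpret the conditional probability $P(\Delta_i \in A \mid G_A(\hat{\Delta}_i) = \varphi)$ in the sense of regular conditional expectation — well-defined for almost every $\varphi$ under the law of $G_A(\hat{\Delta}_i)$ — and to show that, as a function of $\varphi$, it agrees with the identity map almost everywhere. This subsumes \Cref{thm:bayes-freq-coverage}: on every $\varphi$ for which the conditioning event $\{G_A(\hat{\Delta}_i)=\varphi\}$ has positive probability, the conditional probability is defined pointwise and the earlier theorem recovers the conclusion.

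The key observation I would use is the definitional identity
\[
G_A(\hat{\Delta}_i) = \E\bigl[\I\{\Delta_i \in A\} \bigm| \hat{\Delta}_i\bigr],
\]
together with the inclusion $\sigma(G_A(\hat{\Delta}_i)) \subseteq \sigma(\hat{\Delta}_i)$, which holds because $G_A(\hat{\Delta}_i)$ is by construction a Borel function of $\hat{\Delta}_i$. Applying the tower property of conditional expectation then gives
\[
\E\bigl[\I\{\Delta_i \in A\} \bigm| G_A(\hat{\Delta}_i)\bigr] = \E\bigl[G_A(\hat{\Delta}_i) \bigm| G_A(\hat{\Delta}_i)\bigr] = G_A(\hat{\Delta}_i).
\]
By the Doob--Dynkin representation of a $\sigma(G_A(\hat{\Delta}_i))$-measurable random variable as a Borel function of $G_A(\hat{\Delta}_i)$, this chain of equalities translates into the statement that the map $\varphi \mapsto P(\Delta_i \in A \mid G_A(\hat{\Delta}_i)=\varphi)$ coincides with $\varphi \mapsto \varphi$ for almost every $\varphi$ under the law of $G_A(\hat{\Delta}_i)$, which is exactly the assertion of the corollary.

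The main point to handle carefully — and essentially the only subtlety — is the interpretation of conditioning on a null event. Rather than defining the conditional probability via a limit of integrals of the joint density over shrinking neighborhoods of $\varphi$ (which would implicitly invoke smoothness of the marginal of $G_A(\hat{\Delta}_i)$ and is awkward for arbitrary measurable $A$), I would anchor everything in the Kolmogorov definition through regular conditional expectation. Once that convention is stated, the argument reduces to the one-line tower-property calculation above, so I do not anticipate a substantive technical obstacle beyond making the measure-theoretic reading explicit.
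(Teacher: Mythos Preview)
Your argument is correct and takes a different route from the paper's. The paper does not actually supply a detailed proof of the corollary; it merely remarks that when the conditioning event is null one should ``integrate the density over an infinitesimal interval and take the limit,'' and that the resulting quantity should be read as a conditional expectation. Your approach bypasses that limiting construction entirely: by recognizing $G_A(\hat{\Delta}_i)$ as a version of $\E[\I\{\Delta_i\in A\}\mid \hat{\Delta}_i]$ and applying the tower property to pass from $\sigma(\hat{\Delta}_i)$ down to the coarser $\sigma(G_A(\hat{\Delta}_i))$, you obtain the identity $\E[\I\{\Delta_i\in A\}\mid G_A(\hat{\Delta}_i)] = G_A(\hat{\Delta}_i)$ almost surely in one step. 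The paper's sketch implicitly leans on density regularity for the law of $G_A(\hat{\Delta}_i)$ so that the shrinking-interval ratio is well behaved; your argument needs none of that and works for arbitrary measurable $A$, at the usual price that the conclusion holds only for almost every $\varphi$ under the law of $G_A(\hat{\Delta}_i)$ --- which, as you correctly point out, is the proper scope of any such statement in the Kolmogorov framework and already subsumes the positive-probability case of \cref{thm:bayes-freq-coverage}.
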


\subsubsection{Bayes optimal decision making}\label{sec:bayes-optimal-decision} When the true prior is used, the Bayes rule $d_{B}$ is optimal with respect to the sampling distribution. That is, if $d_{B}$ is used throughout, it will minimize the expected loss accumulated over future experiments; see \cref{apx:bayes} for a brief review of Bayesian decision theory.  
More concretely, as in \citet{azevedo2019b}, consider the decision over whether to ``launch'' a change/feature based on an experiment. The action space is $\mathcal{A} = \{0,1\}$, respectively denoting launching and not launching. Suppose in this context $\Delta$ denotes the gain in launching the treatment. Consider the following loss function
\begin{equation*}
l(\Delta, a) = \begin{cases} -(\Delta - C), &\quad a=1 \\
0, &\quad a=0 \end{cases},
\end{equation*}
where $C \geq 0$ is the cost for implementing the treatment. Then the Bayes optimal decision $d_{B}(\hat{\Delta})$ should minimize the posterior risk
\begin{equation*}
\E [l(\Delta, a) \mid \hat{\Delta}] = \begin{cases} -(\E[\Delta \mid \hat{\Delta}] - C), &\quad a = 1 \\
0, &\quad a=0
\end{cases}.
\end{equation*}
Clearly, we have
\begin{equation}
d_{B}(\hat{\Delta}) = \I\{\E[\Delta \mid \hat{\Delta}] > C\}, \label{eqs:bayes-rule-launch}
\end{equation}
which says that the change/feature should be launched whenever the posterior mean effect exceeds the cost.

 \section{Parametric estimation} \label{sec:parametric}
As we have seen, the key to implementing the framework is to estimate the underlying true prior $G$ with data $\{(\hat{\Delta}_i, s_i)\}$ from past experiments. A straightforward approach would be to estimate $G$ within a parametric family, which has been considered in the literature: \citet{goldberg2017decision} and \citet{azevedo2019b} considered estimating $G$ as a scaled, shifted $t$-distribution with unknown degrees of freedom; \citet{deng2015objective} modeled $G$ as a mixture of Gaussian densities with a point mass at zero. 

For completeness, in \cref{apx:sec:EM} we describe an expectation-maximization (EM) algorithm \citep{dempster1977maximum} for estimating $G$ with a mixture of $K$ Gaussian components, where $K$ is pre-specified or selected in some manner. However, we should emphasize that estimating $G$ within a parametric family is inevitably subject to \emph{model misspecification}, which motivates the development of nonparametric methods that make much weaker assumptions on the prior. It should be noted that fitting a parametric prior under additive noise is essentially fitting a latent variable model, and hence the log-likelihood is usually non-convex. The EM algorithm is often used, but it only converges to a local maximum, can have an extremely slow rate of convergence (when $K$ is over-specified) \citep{dwivedi2018singularity}, and can be numerically unstable as well \citep{archambeau2003convergence}. In the next Section, we propose a nonparametric estimator based on convex optimization, which is free from these issues.

 \section{Nonparametric spectral estimation} \label{sec:nonparametric}
In this Section, we develop a nonparametric estimation strategy based on a spectral characterization of the problem. Before describing the characterization, we first review empirical Bayes formulae that relate posterior cumulants to the derivatives of marginal densities. 

\subsection{Tweedie's formula}
Certain posterior quantities are of special interest to a decision maker. For example, as we saw from \cref{eqs:bayes-rule-launch}, the Bayes optimal decision rule for launching was determined by the posterior mean. Under our modeling assumption $\Delta \sim G, \ \hat{\Delta} \mid \Delta \sim \N(\Delta, s^2)$, \citet{robbins1956empirical} presented the following formula due to Maurice Tweedie
\begin{equation}
\E[\Delta \mid \hat{\Delta}] = \hat{\Delta} + s^2 \ell_{s}'(\hat{\Delta}), 
\end{equation}
where $\ell_s(\hat{\Delta}) := \log p_{s}(\hat{\Delta})$ is the logarithmic marginal density for the observed effect under a given scale $s$. The two terms on the RHS are, respectively, the unbiased estimate of $\Delta$ and a Bayesian shrinkage term that pulls the estimate towards a marginal mode ($\ell'_s(\hat{\Delta}) > 0$ if $\hat{\Delta}$ is on the left of a mode). Moreover, the formula only depends on the prior $G$ implicitly through the score (with respect to location) of the marginal density. As a result, one can estimate the posterior mean by directly estimating the marginal density, which circumvents estimating $G$ in the first place --- this strategy is called ``$f$-modeling'' by \citet{efron2014two}, as opposed to ``$g$-modeling'' that starts with estimating the prior. 

Robbins generalized \Cref{eqs:tweedie-mean} to exponential families.

\begin{lemma}[{\citet{robbins1956empirical}}] \label{lem:tweedie-robbins}
Suppose $\eta \sim G$ with density $g$ and 
\begin{equation*}
X \mid \eta \sim f_{\eta}(x) = e^{\eta x - \psi(\eta)} f_{0}(x),
\end{equation*}
where $f_0(x)$ is the conditional density $f_{\eta}(x)$ when $\eta = 0$, and $\psi(\eta)$ the cumulant generating function that ensures normalization. Let $f(x) = \int f_{\eta}(x) g(\eta) \dd \eta$ be the marginal density for $X$, and define $\lambda(x) = \log(f(x) / f_0(x))$. Then for $k=1,2,\dots$, the $k$-th posterior cumulant of $\eta$ given $X$ is the $k$-th derivative of $\lambda(x)$ evaluated at $X$. 
\end{lemma}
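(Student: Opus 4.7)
The plan is to observe that the posterior moment-generating function of $\eta$ given $X = x$ has the remarkably clean form $\exp(\lambda(x+t) - \lambda(x))$, whence the $k$-th cumulant is simply $\lambda^{(k)}(x)$.

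First, by Bayes' rule the posterior density is
\begin{equation*}
g(\eta \mid x) = \frac{e^{\eta x - \psi(\eta)}\, g(\eta)}{\int e^{\eta' x - \psi(\eta')}\, g(\eta')\, d\eta'}.
\end{equation*}
Dividing $f(x) = f_0(x) \int e^{\eta x - \psi(\eta)}\, g(\eta)\, d\eta$ by $f_0(x)$ and taking logs identifies the denominator above with $e^{\lambda(x)}$. Then, for $t$ in a neighborhood of $0$,
\begin{equation*}
\E[e^{t\eta} \mid X = x] = \frac{\int e^{\eta(x+t) - \psi(\eta)}\, g(\eta)\, d\eta}{e^{\lambda(x)}} = e^{\lambda(x+t) - \lambda(x)},
\end{equation*}
because the numerator is exactly the integral that defines $e^{\lambda(\cdot)}$, only now evaluated at $x+t$ instead of $x$; the exponential-family structure lets the tilting factor $e^{t\eta}$ be absorbed into a shift of the argument of $\lambda$.

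Taking logs gives the posterior cumulant generating function $K(t;x) = \lambda(x+t) - \lambda(x)$, and differentiating $k$ times and setting $t = 0$ yields $\lambda^{(k)}(x)$, which by definition is the $k$-th cumulant of $\eta \mid X = x$; replacing $x$ by $X$ then gives the claimed identity as random variables. The only substantive issue is regularity: one needs the posterior m.g.f.\ to be finite on an open neighborhood of $0$ so that all cumulants exist and differentiation under the integral sign is justified. For the Gaussian case of interest in this paper, where $f_\eta(x) \propto e^{\eta x - \eta^2/2}$ (up to a variance parameter), this is immediate under mild tail conditions on $G$; more generally it follows from standard exponential-family smoothness results and is implicitly assumed in the statement of the lemma.
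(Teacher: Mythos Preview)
The paper does not supply its own proof of this lemma; it is stated with attribution to \citet{robbins1956empirical} and then invoked to derive \cref{lem:tweedie}. Your argument is correct and is in fact the classical one: by writing $e^{\lambda(x)} = \int e^{\eta x - \psi(\eta)} g(\eta)\,\dd\eta$ one sees immediately that the posterior moment-generating function is $e^{\lambda(x+t)-\lambda(x)}$, so the cumulant generating function is $\lambda(x+t)-\lambda(x)$ and the $k$-th cumulant is $\lambda^{(k)}(x)$. Your remark about the regularity assumption (finiteness of the posterior m.g.f.\ near zero, equivalently that $x$ lies in the interior of the set where $\lambda$ is finite) is appropriate and is indeed implicit in the statement.
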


Specifically for $k=1,2$, it follows that
\begin{equation} \label{eqs:tweedie-first-two}
\E[\eta \mid X] = \lambda'(X), \quad \var[\eta \mid X] = \lambda''(X). 
\end{equation}
We specialize this result to the Gaussian case. 

\begin{lemma}[Tweedie's formulae] \label{lem:tweedie}
Under the additive Gaussian noise model \cref{eqs:observational}, we have
\begin{equation} \label{eqs:tweedie-mean}
\E_{s}[\Delta \mid \hat{\Delta}] = \hat{\Delta} + s^2 \ell_s'(\hat{\Delta}), 
\end{equation}
and
\begin{equation} \label{eqs:tweedie-var}
\var_{s}[\Delta \mid \hat{\Delta}] = s^2 \left(1 + s^2 \ell_s''(\hat{\Delta}) \right).
\end{equation}
\end{lemma}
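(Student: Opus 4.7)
The plan is to obtain the two identities as direct specializations of \Cref{lem:tweedie-robbins} to the Gaussian location family, after an affine reparameterization.

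First, I would cast the observational model in the exponential-family form demanded by \Cref{lem:tweedie-robbins}. For fixed $s > 0$, expanding the square in $\phi_s(\hat{\Delta} - \Delta)$ gives
\begin{equation*}
\phi_s(\hat{\Delta}-\Delta) = \exp\!\left(\eta \hat{\Delta} - \tfrac{s^2 \eta^2}{2}\right) \phi_s(\hat{\Delta}), \qquad \eta := \Delta/s^2.
\end{equation*}
Thus, as a function of the sufficient statistic $x = \hat{\Delta}$, the conditional law sits in an exponential family with natural parameter $\eta$, cumulant function $\psi(\eta) = s^2 \eta^2/2$, and carrier $f_0(x) = \phi_s(x)$. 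Because $\Delta \mapsto \eta$ is a bijection, the prior $G$ on $\Delta$ induces a well-defined prior on $\eta$, and the posterior moments of $\Delta$ are related to those of $\eta$ by $\E[\Delta\mid \hat{\Delta}] = s^2\, \E[\eta\mid \hat{\Delta}]$ and $\var[\Delta\mid \hat{\Delta}] = s^4\, \var[\eta\mid \hat{\Delta}]$.

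Next, I would compute $\lambda(x) = \log(p_s(x)/f_0(x)) = \ell_s(x) - \log\phi_s(x)$. Since $-\log\phi_s(x) = x^2/(2s^2) + \tfrac{1}{2}\log(2\pi s^2)$, differentiating yields
\begin{equation*}
\lambda'(x) = \ell_s'(x) + x/s^2, \qquad \lambda''(x) = \ell_s''(x) + 1/s^2.
\end{equation*}
Applying \Cref{lem:tweedie-robbins} with $k=1$ gives $\E[\eta\mid \hat{\Delta}] = \ell_s'(\hat{\Delta}) + \hat{\Delta}/s^2$; multiplying through by $s^2$ recovers \cref{eqs:tweedie-mean}. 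Similarly $k=2$ gives $\var[\eta\mid \hat{\Delta}] = \ell_s''(\hat{\Delta}) + 1/s^2$; multiplying through by $s^4$ recovers \cref{eqs:tweedie-var}.

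There is essentially no technical obstacle here beyond getting the parameterization right so that \Cref{lem:tweedie-robbins} applies verbatim and then bookkeeping the $s^2$ Jacobian factor between $\Delta$ and $\eta$. If one preferred to avoid citing the general result, an equivalent route is to differentiate under the integral in $p_s(\hat{\Delta}) = \int g(u)\phi_s(\hat{\Delta}-u)\,du$ using $\phi_s'(z) = -(z/s^2)\phi_s(z)$ and $\phi_s''(z) = (z^2/s^4 - 1/s^2)\phi_s(z)$, identify the resulting integrals as $\E[\hat{\Delta}-\Delta\mid\hat{\Delta}] p_s(\hat{\Delta})$ and $\E[(\hat{\Delta}-\Delta)^2\mid\hat{\Delta}] p_s(\hat{\Delta})$ via \cref{eqs:posterior-density}, and then convert $\ell_s'' = p_s''/p_s - (\ell_s')^2$ using the already-established mean identity; this is a bit more computation but uses only elementary manipulations and mild regularity to justify the differentiation under the integral sign.
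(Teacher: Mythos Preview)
Your proposal is correct and follows essentially the same argument as the paper: reparameterize by $\eta=\Delta/s^2$ to put the Gaussian location model in the exponential-family form of \Cref{lem:tweedie-robbins}, compute $\lambda=\ell_s-\log\phi_s$, differentiate, and scale by $s^2$ and $s^4$ to pass from the cumulants of $\eta$ back to those of $\Delta$. The alternative direct-differentiation route you sketch at the end is not in the paper but is a valid independent derivation.
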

\begin{proof}
In the context of \cref{lem:tweedie-robbins}, let $\eta = \Delta / s^2$. $G$ in the lemma becomes our ``$G$'' scaled by $s^{-2}$. The conditional density for $\hat{\Delta} \mid \eta $ can be expressed as 
\begin{equation*}
f_{\eta}(\hat{\Delta}) = \phi_{s} (\hat{\Delta} - \Delta) = \exp(\eta \hat{\Delta} - \eta^2 s^2 / 2) f_0(\hat{\Delta}), \quad f_0(\hat{\Delta}) = \phi_{s}(\hat{\Delta}).
\end{equation*}
Therefore, 
\begin{equation*}
\lambda(\hat{\Delta}) = \log f(\hat{\Delta}) - 
\log f_0(\hat{\Delta}) = \ell_{s}(\hat{\Delta}) + \hat{\Delta}^2 / (2 s^2) + \text{const},
\end{equation*}
where $\ell_s(\hat{\Delta}) = \log p_s(\hat{\Delta})$ from \cref{eqs:marginal-density}. It follows from \cref{eqs:tweedie-first-two} that 
\begin{equation*}
\E[\Delta \mid \hat{\Delta}] = s^2 \E[\eta \mid \hat{\Delta}] = s^2 \lambda'(\hat{\Delta}) = \hat{\Delta} + s^2 \ell_s(\hat{\Delta}),
\end{equation*}
and
\begin{equation*}
\var[\Delta \mid \hat{\Delta}] = \var[s^2 \eta \mid \hat{\Delta}] = s^4 \lambda''(\hat{\Delta}) = s^2(1 + s^2 \ell''_s(\hat{\Delta})).
\end{equation*}
\end{proof}

Tweedie's formulae suggest that the family of marginal densities $\{p_s(\hat{\Delta}): s \geq 0\}$ is a natural estimand, from which we can obtain estimates for the posterior mean and variance. This $f$-modeling (marginal-modeling) strategy is adopted by \citet{efron2003robbins,efron2011tweedie,efron2014two}, where he modeled $\log p(x)$ as a linear function of polynomials in $x$, fitted with a simple Poisson regression via Lindsey's method \citep{efron1996using}. However, his strategy is not applicable here since we have a density $p_s$ to be estimated for \emph{every} $s \geq 0$. The \emph{heterogeneity} in the likelihood poses a new challenge to $f$-modeling, which, to our knowledge, has not been addressed previously. Besides, as noted by \citet{wager2014geometric}, Efron's method does not constrain the modeled marginal densities to those representable under \cref{eqs:observational}, and hence is subject to misspecification and inefficiency.

\subsection{Characterization via the heat equation}
Estimating the density family $\{p_s: s \geq 0\}$ subject to our modeling assumption is equivalent to estimating the prior, since the prior density $g = p_0$. In the literature, recovering $G$ from observations with additive noise, with known or unknown heterogenous scales has been considered by \citet{delaigle2008density}, who proposed a deconvolution kernel estimate based on inverting an empirical characteristic function \citep{carroll1988optimal,StefanskiCarroll1990,fan1991asymptotic,fan1991global}. In addition to the usual problem of bandwidth selection, inverting an empirical estimate is not the most efficient. To improve efficiency, we consider estimating $\{p_s: s \geq 0\}$ based on maximum likelihood. In the following, we first answer a prerequisite question --- what is the space of all density families representable under the additive Gaussian noise model?

For ease of exposition, we make a change of variable $s = \sqrt{t}$ for $t \geq 0$, and reindex $p_t := p_{s=\sqrt{t}}$ whenever the letter $t$ is used. The modeling assumption poses non-trivial constraints on the density family $\{p_t: t \geq 0\}$. For example, the first two moments are related by
\begin{equation*}
\E_{t} \hat{\Delta} = \text{const}, \quad \var_{t} \hat{\Delta} = \var \Delta + t.
\end{equation*}
More generally, we have the following characterization.

\begin{theorem} \label{thm:heat}
The density family $\{p_t: t \geq 0\}$ on $\mathbb{R}$ can be represented as
\begin{equation}
p_t(x) = \int \phi_{\sqrt{t}}(x-u) g(u) \dd u \label{eqs:green}
\end{equation}
for a probability density $g$ on $\mathbb{R}$ if and only if
\begin{equation}
\frac{\partial}{\partial t} p_t(x) = \frac{1}{2} \frac{\partial^2}{\partial x^2} p_t(x), \quad t\geq0,\  x\in \mathbb{R}, \label{eqs:heat}
\end{equation}
and $p_0 = g$.
\end{theorem}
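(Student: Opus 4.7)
The plan is to handle both directions via the Fourier transform, which avoids having to invoke uniqueness theorems for the heat equation (e.g., Widder's theorem) and treats both implications symmetrically. Throughout I would write $\hat{f}(\xi) = \int e^{-i \xi x} f(x) \dd x$ for the Fourier transform (characteristic function when $f$ is a density), and use the standard facts that $\widehat{\phi_{\sqrt{t}}}(\xi) = e^{-t \xi^2 / 2}$ and that $\widehat{f \ast h} = \hat{f} \cdot \hat{h}$.

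For the ``only if'' direction, I would start by verifying directly that the heat kernel itself satisfies the heat equation, i.e.\ $\partial_t \phi_{\sqrt{t}}(x) = \tfrac{1}{2} \partial_x^2 \phi_{\sqrt{t}}(x)$ for $t>0$, which is a routine computation from the explicit formula $\phi_{\sqrt{t}}(x) = (2\pi t)^{-1/2} e^{-x^2/(2t)}$. Then, given \cref{eqs:green}, I would differentiate under the integral sign in both $t$ and $x$. This interchange is justified because for any compact $t$-interval bounded away from $0$, the derivatives $\partial_t \phi_{\sqrt{t}}(x-u)$ and $\partial_x^2 \phi_{\sqrt{t}}(x-u)$ are uniformly dominated (in $u$) by an integrable envelope thanks to Gaussian decay, and $g$ is a probability density. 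The initial condition $p_0 = g$ is obtained by the standard approximate-identity argument: $\phi_{\sqrt{t}} \to \delta_0$ in the sense of distributions as $t \to 0^+$.

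For the ``if'' direction, I would take the spatial Fourier transform of \cref{eqs:heat}. Using that $\widehat{\partial_x^2 p_t}(\xi) = -\xi^2 \hat{p}_t(\xi)$ (justified since $p_t \in L^1$ as a density and its second derivative is assumed to exist in a suitable sense), the PDE becomes, for each fixed $\xi \in \mathbb{R}$, a linear ODE in $t$:
\begin{equation*}
\frac{\partial}{\partial t} \hat{p}_t(\xi) = -\tfrac{1}{2} \xi^2 \, \hat{p}_t(\xi),
\end{equation*}
whose unique solution satisfying $\hat{p}_0 = \hat{g}$ is $\hat{p}_t(\xi) = \hat{g}(\xi) \, e^{-t \xi^2/2} = \hat{g}(\xi) \, \widehat{\phi_{\sqrt{t}}}(\xi)$. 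By the convolution theorem and Fourier inversion (both applicable since all functions involved are integrable probability densities, so characteristic functions determine them uniquely), this yields $p_t = g \ast \phi_{\sqrt{t}}$, which is exactly \cref{eqs:green}.

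The main obstacle is rigorously justifying the Fourier reduction of the PDE in the reverse direction, since the theorem as stated does not specify a regularity class for $\{p_t\}$. I would address this by first noting that, because each $p_t$ is a probability density, its Fourier transform is a bounded continuous function, and then arguing that the heat equation should be understood in at least the tempered-distribution sense, so that the identity $\widehat{\partial_x^2 p_t} = -\xi^2 \hat{p}_t$ holds without extra differentiability assumptions. A parallel care is needed for the forward direction near $t = 0$, where smoothness of $p_t$ only holds for $t > 0$ and the condition $p_0 = g$ must be read as a limit (e.g., in $L^1$ or weakly), consistent with how it is used in the reverse direction.
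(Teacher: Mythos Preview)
Your forward direction (``only if'') is essentially identical to the paper's: both verify directly that the Gaussian kernel $\phi_{\sqrt{t}}(x-u)$ satisfies the heat equation, then pass the derivatives through the integral by linearity, and recover $p_0 = g$ as a limit. Your treatment is slightly more careful about justifying differentiation under the integral sign, which the paper leaves implicit.

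The reverse direction (``if'') is where you diverge. The paper does not argue uniqueness via Fourier; it simply cites the uniqueness theorem for the Cauchy problem from \citet[Theorem 7, Sec.~2.3]{Evans2010PDE}, noting that \cref{eqs:green} is the unique solution among those for which $p_t$ is a density for every $t \geq 0$ (this nonnegativity/integrability is what rules out the classical nonuniqueness pathologies). Your approach instead takes the spatial Fourier transform, reduces to the ODE $\partial_t \hat{p}_t(\xi) = -\tfrac{1}{2}\xi^2 \hat{p}_t(\xi)$, solves it, and inverts. This is correct and is in fact the very computation the paper carries out separately in \cref{apx:fourier} to motivate the spectral representation, though it is not used there as the proof of \cref{thm:heat}. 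The trade-off: the paper's citation is shorter and cleanly isolates the class in which uniqueness holds (probability densities), whereas your Fourier argument is self-contained and symmetric across the two directions but, as you note, requires some care in interpreting the PDE distributionally so that $\widehat{\partial_x^2 p_t} = -\xi^2 \hat{p}_t$ is valid without unstated smoothness. Both routes are sound.
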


\Cref{eqs:heat} is the \emph{heat equation} with thermal diffusivity constant $1/2$, which describes how heat diffuses on $\mathbb{R}$ as time $t$ elapses from zero; see \cref{fig:diffusion-example} for an example. As a result, $p_t(x)$ obeys the \emph{maximum principle} \citep[page 54]{Evans2010PDE}, which says that the maximum of $p_t(x)$ within any interval will not exceed the maximum value previously encountered; and as a consequence, the number of local maxima in $p_t(x)$ is non-increasing in $t$. \cref{eqs:green} is the solution to the heat equation with \emph{initial condition} $p_0 = g$. The solution is based on convolving the initial condition with a Gaussian kernel with variance $t$, which is the Green function (also called the fundamental solution) to the heat equation, i.e., the solution when the initial condition is a delta function. 

\begin{figure}[!ht]
\includegraphics[width=0.65\textwidth]{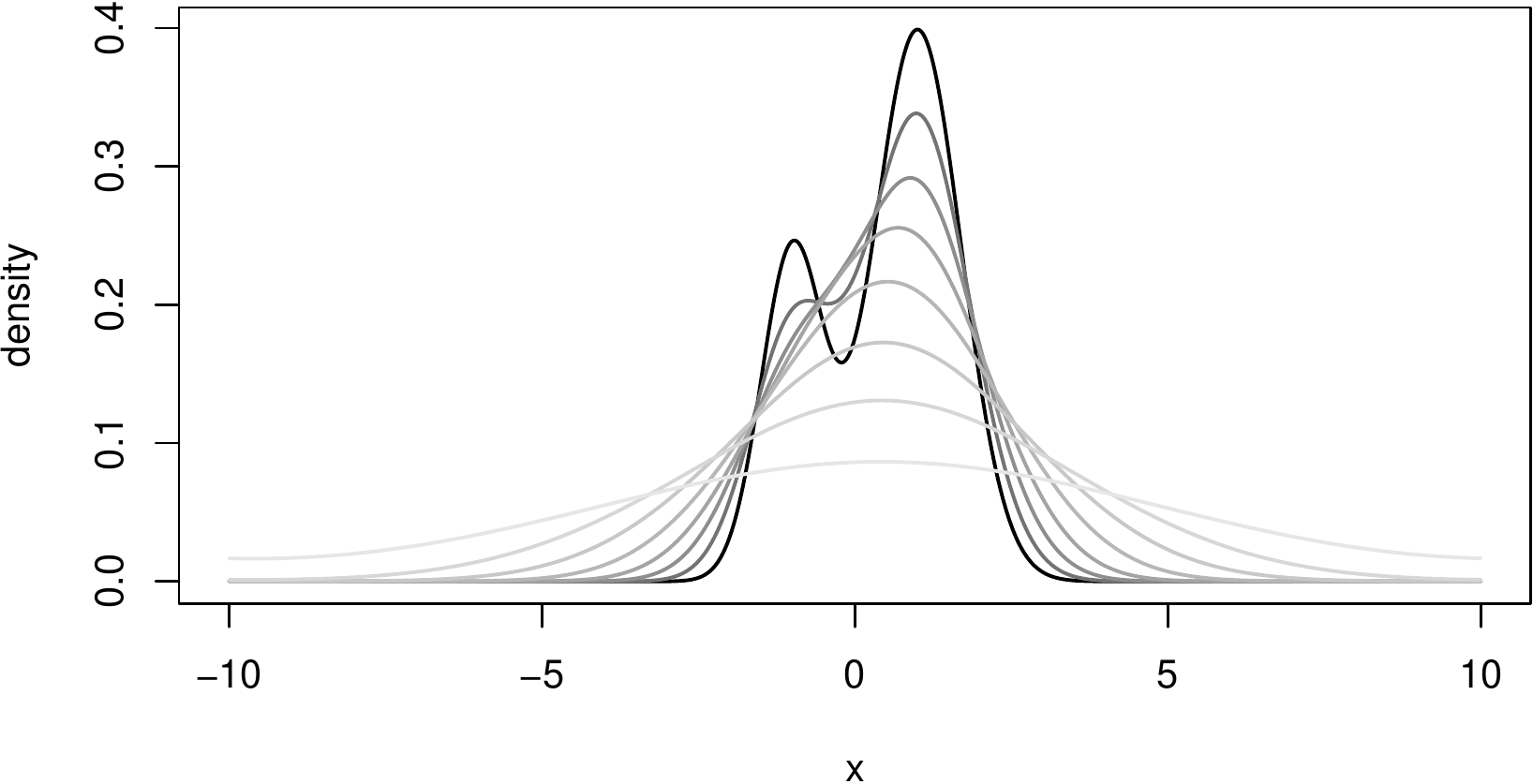}
\caption{Example of heat diffusion by \cref{eqs:heat}. $p_0 = g$ is a bimodal distribution (black), and over time it becomes smoother and flatter (lighter-colored as time elapses).}
\label{fig:diffusion-example}
\end{figure}

\begin{proof}
We need to show that \cref{eqs:green} is the unique solution to \cref{eqs:heat} with initial condition $p_0 = g$. We first verify that it is a solution. Fix $u \in \mathbb{R}$, consider $\varphi_{u}(x,t) := \phi_{\sqrt{t}}(x-u)$. We check that $\varphi_{u}(x,t)$ satisfies \cref{eqs:heat}. We have
\begin{equation*}
\frac{\partial}{\partial t} \varphi_{u}(x,t) = \frac{1}{2} \varphi_{u}(x,t) \left\{ \left( \frac{x-u}{t} \right)^2 - \frac{1}{t} \right\},
\end{equation*}
\begin{equation*}
\frac{\partial}{\partial x} \varphi_{u}(x,t) = - \varphi_{u}(x,t) \left (\frac{x - u}{t} \right),
\end{equation*}
and it follows that
\begin{equation*}
\frac{\partial^2}{\partial x^2} \varphi_{u}(x,t) = \varphi_{u}(x,t) \left \{ \left( \frac{x-u}{t} \right)^2 - \frac{1}{t} \right \} = 2 \frac{\partial}{\partial t} \varphi_{u}(x,t).
\end{equation*}
Then by linearity, $p_t(x) = \int \phi_{u}(x,t) g(u) \dd u$ also satisfies \cref{eqs:heat}. Also, $p_0(x) = \lim_{t \rightarrow 0+} p_t(x) = g(x)$. Further, by \citet[Theorem 7, Sec.~2.3]{Evans2010PDE} $p_t(x) = \int \phi_{u}(x,t) g(u)$ is the unique solution such that $p_t(x)$ is a density for every $t \geq 0$. 
\end{proof}

The general solution to the heat equation \cref{eqs:heat} can be characterized with Fourier transforms; see \cref{apx:fourier} for details. 
However, since the Fourier transform is infinite dimensional, representing the transform is as difficult as representing the density $g$ itself. To proceed, we need to truncate the domain from $\mathbb{R}$ to a compact interval, upon which the Fourier transform becomes a Fourier series. 

We remark that the connection between density estimation and the heat equation is also made by \citet{botev2010kernel} and previously by \citet{chaudhuri2000scale}. Contrary to our case, where $t$ corresponds to the variance of an additive noise, their consideration of $t$ is to optimize the amount of smoothing applied to the empirical measure for constructing a density estimator. 

\subsection{Toric formulation}
For tractability, we consider Fourier series representations, which are only associated with \emph{periodic} functions with a fixed period. By \cref{thm:heat}, $p_0$ equals the prior density $g$.

\begin{assumption} \label{assump:prior}
The prior $G$ has a uniformly continuous density $p_0(x)$ with respect to Lebesgue, supported on a subset of $[-L, L]$ for $0 < L < \infty$.
\end{assumption}
Recall that $p_0(x)$ is uniformly continuous if for every $\varepsilon > 0$, there exists $\delta > 0$ such that for all $x, y \in [-L,L]$ such that $|x-y| < \delta$, we have $|p_0(x) - p_0(y)| < \varepsilon$. Uniform continuity is a weak smoothness condition, which is implied by Lipschitz or H\"older continuity, and some smoothness assumption is usually required to establish consistency. Compact support is assumed for technical convenience (e.g., \citet{StefanskiCarroll1990} and \citet{meister2007deconvolving}), although it can be relaxed by letting $L$ grow with $n$. The domain is assumed to be symmetric about zero, which can be ensured by translation. Since we work with Fourier series, let us introduce $\bar{p}_0$ as the periodicized version of $p_0$:
\begin{equation}
\bar{p}_0(x) = p_0(x) \text{ for $x \in [-L, L]$}, \quad \bar{p}_0(x + 2L) = p_0(x) \text{ for all $x \in \mathbb{R}$}.
\end{equation}
Hence, any interval of length $2L$ is a \emph{torus}, and for convenience we will mostly use the torus $[-L,L]$. Next, we introduce an assumption that reduces the data generating process from $\mathbb{R}$ to the torus. 

\begin{assumption} \label{assump:torus}
The data is generated from $x_i \sim p_{t_i}$ independently, where $p_{t}$ is a probability density on $[-L,L]$ 
\begin{equation} \label{eqs:pt}
p_t(x) := \int_{-\infty}^{\infty} \bar{p}_0(u) \phi_{\sqrt{t}}(x-u) \dd u. 
\end{equation}
\end{assumption}
This a \emph{simplifying assumption} on the data generating mechanism, such that a heat diffusion on $\mathbb{R}$ is reduced to a heat diffusion on a compact domain with periodic boundary conditions. Note that the true data generating mechanism is $x_i \sim p_{t_i}^{\ast}$ with the following density on $\mathbb{R}$
\begin{equation} \label{eqs:true-pt}
p_{t}^{\ast}(x) = \int_{-L}^{L} p_0(u) \phi_{\sqrt{t}}(x - u) \dd u.
\end{equation}
Notice that $p_t$ and $p_t^{\ast}$ differ in terms of their domain of integration. It is easy to check that $p_t(x) \geq 0 $, has period $2L$, and integrates to one on $[-L,L]$:
\begin{equation*}
\begin{split}
\int_{-L}^{L} p_t(x) \dd x &= \int_{-L}^{L} \left \{\int_{-\infty}^{+\infty} \bar{p}_0(u)  \phi_{\sqrt{t}}(x-u) \dd u \right \} \dd x \\
&= \int_{-L}^{L} \left\{ \int_{-\infty}^{+\infty} \bar{p}_0(x - u) \phi_{\sqrt{t}}(u) \dd u  \right \} \dd x \\
&= \int_{-\infty}^{+\infty} \left( \int_{-L}^{L} \bar{p}_0(x - u) \dd x \right )  \phi_{\sqrt{t}}(u) \dd u = 1.
\end{split}
\end{equation*}
Further, the difference between $p_t$ and $p_t^{\ast}$ becomes negligible with a large enough $L$.  Let $t_{\max}$ be an upper bound on the variance $t$.

\begin{lemma} \label{lem:aliasing-error}
Given $t_{\max} < \infty$, for all $x \in [-L, L]$ and $t \in [0, t_{\max}]$ it holds that
\begin{equation}
0 < p_t(x) - p_t^{\ast}(x) < \sqrt{\frac{2}{\pi}} \frac{1}{\sqrt{t_{\max}} \left \{\exp(2L^2 / t_{\max}) - 1 \right\}}.
\end{equation}
\end{lemma}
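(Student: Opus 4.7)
The plan is first to derive an \textit{aliasing identity}: for every $x \in [-L, L]$,
\begin{equation*}
p_t(x) - p_t^{\ast}(x) = \sum_{k \neq 0} p_t^{\ast}(x - 2kL).
\end{equation*}
I would obtain this by writing $\bar{p}_0(u) = \sum_{k \in \mathbb{Z}} p_0(u - 2kL)$ (valid since $p_0$ has support in $[-L, L]$), substituting into \cref{eqs:pt}, and swapping sum and integral via Fubini--Tonelli (everything nonnegative). A change of variable $v = u - 2kL$ inside the $k$-th term rewrites it as $p_t^{\ast}(x - 2kL)$, and peeling off the $k = 0$ term gives the identity. The strict lower bound $p_t(x) - p_t^{\ast}(x) > 0$ is then immediate: each $p_t^{\ast}(x - 2kL)$ is a Gaussian convolution of a probability density, hence strictly positive.

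For the upper bound I would control each aliasing term by a Gaussian tail. Since $p_0$ is a density on $[-L, L]$ and $\phi_{\sqrt t}$ is decreasing in $|\cdot|$,
\begin{equation*}
p_t^{\ast}(y) = \int_{-L}^{L} p_0(v) \phi_{\sqrt t}(y - v) \, dv \le \phi_{\sqrt t}\bigl(|y| - L\bigr), \qquad |y| \ge L,
\end{equation*}
using $\min_{v \in [-L, L]}|y - v| = |y| - L$. For $y = x - 2kL$ with $x \in [-L, L]$ and $|k| \ge 1$ this distance is at least $2(|k|-1)L$. I would next invoke the monotonicity $\phi_{\sqrt t}(a) \le \phi_{\sqrt{t_{\max}}}(a)$ on $\{a^2 \ge t_{\max}\}$, which follows from $\partial_t \log \phi_{\sqrt t}(a) = (a^2 - t)/(2t^2)$, to replace $t$ by $t_{\max}$ in the exponent. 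Pairing $k$ with $-k$ then collapses the bound into the geometric series
\begin{equation*}
\frac{2}{\sqrt{2 \pi t_{\max}}} \sum_{k \ge 1} e^{-2kL^2/t_{\max}} = \frac{2}{\sqrt{2\pi t_{\max}}} \cdot \frac{1}{e^{2L^2/t_{\max}} - 1} = \sqrt{\frac{2}{\pi}} \cdot \frac{1}{\sqrt{t_{\max}}\bigl(e^{2L^2/t_{\max}} - 1\bigr)},
\end{equation*}
which is precisely the constant appearing in the lemma.

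The delicate step is the $|k| = 1$ term: the pointwise distance $|x - 2L| - L$ collapses to zero as $x$ approaches the boundary $\pm L$, so the crude pointwise bound $\phi_{\sqrt t}(0) = 1/\sqrt{2\pi t}$ is too loose by itself. The remedy is to avoid extracting a pointwise supremum of the Gaussian and instead keep the integral $\int_{-L}^{L} p_0(v)\, \phi_{\sqrt t}(x - v - 2L)\, dv$ intact, bounding it by a Mills-ratio-type estimate that produces a factor of $\phi_{\sqrt{t_{\max}}}(2L)$ (the translation distance between successive periodic copies of the support) rather than $\phi_{\sqrt{t_{\max}}}(0)$. Once this tight estimate is in place for $|k| = 1$, the larger $|k|$ contributions fit immediately into the same geometric tail, and summing (with the factor of $2$ for the two-sided symmetry in $k$) yields the stated bound.
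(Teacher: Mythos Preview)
Your aliasing identity and the term-by-term bounding are exactly the paper's approach: the paper also writes $p_t(x)-p_t^{\ast}(x)=\int_{-L}^{L}p_0(u)\sum_{k\neq 0}\phi_{\sqrt t}(x-u-2kL)\,du$, bounds the inner sum $H_r((x-u)/\sqrt t)$ (with $r=2L/\sqrt t$) by $H_r(0)=2\sum_{k\ge 1}\phi(kr)$, dominates $e^{-k^2r^2/2}$ by $e^{-kr^2/2}$ to obtain the geometric series, and finally shows that $t\mapsto \sqrt t\,\{e^{2L^2/t}-1\}$ is decreasing so as to replace $t$ by $t_{\max}$. You are right that the $|k|=1$ contribution is the crux.

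Unfortunately the ``Mills-ratio-type'' patch you outline cannot deliver the factor $\phi_{\sqrt{t_{\max}}}(2L)$, because the inequality in the lemma does not hold uniformly in $p_0$. Take $p_0$ to be (a continuous approximation of) the uniform density on $[-L,L]$; then $\bar p_0\equiv 1/(2L)$ gives $p_t\equiv 1/(2L)$, while $p_t^{\ast}(L)=(2L)^{-1}\int_0^{2L}\phi_{\sqrt t}(w)\,dw$, so $p_t(L)-p_t^{\ast}(L)=(2L)^{-1}\bigl(1-\int_0^{2L}\phi_{\sqrt t}(w)\,dw\bigr)$. With $L=t_{\max}=t=1$ this is about $0.26$, whereas the lemma's bound is $\sqrt{2/\pi}/(e^2-1)\approx 0.125$. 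The paper's proof hides the same defect in the assertion ``Clearly for all $x\in\mathbb{R}$, $H_r(x)\le H_r(0)$'': this is false, since $H_r(r)-H_r(0)=\phi(0)-\phi(r)>0$, and the argument $(x-u)/\sqrt t$ does range over all of $[-r,r]$. So the gap you flagged is real; neither your sketch nor the paper's argument closes it, and the inequality needs an extra hypothesis (for instance that the support of $p_0$ lies in $[-L+\delta,L-\delta]$ for some $\delta>0$, with the constant then depending on $\delta$) to be valid.
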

\begin{proof}
By breaking the integral in \cref{eqs:pt} into intervals of length $2L$, we have
\begin{equation} \label{eqs:bound-pt-pt-true}
\begin{split}
p_t(x) &= \sum_{k=-\infty}^{+\infty} \int_{-L}^{L} \bar{p}_0(u - 2k L) \phi_{\sqrt{t}} (x - u - 2 k L ) \dd u \\
&= \sum_{k=-\infty}^{+\infty} \int_{-L}^{L} p_0(u) \phi_{\sqrt{t}} (x - u - 2 k L ) \dd u\\
& = p_t^{\ast}(x) + \int_{-L}^{L} p_0(u) \left \{\sum_{k \neq 0} \frac{1}{\sqrt{t}} \phi \left ( \frac{x - u - 2 k L}{\sqrt{t}} \right) \right\} \dd u,
\end{split}
\end{equation}
where we used the fact that $\bar{p}_0(u)$ has period $2L$. Define $H_r(x) := \sum_{k \neq 0} \phi(x - k r)$. Clearly for all $x \in \mathbb{R}$, we have
\begin{equation*}
\begin{split}
H_r(x) \leq H_r(0) = 2 \sum_{k=1}^{\infty} \phi(k r) &= \sqrt{\frac{2}{\pi}} \sum_{k=1}^{\infty} \exp(- k^2 r^2 / 2) \\
&< \sqrt{\frac{2}{\pi}} \sum_{k=1}^{\infty} \exp(- k r^2 / 2) \\
&= \sqrt{\frac{2}{\pi}} \frac{1}{\exp(r^2 / 2) - 1}.
\end{split}
\end{equation*}
Applying this to \cref{eqs:bound-pt-pt-true} with $r = 2L / \sqrt{t}$ and using $\int_{-L}^{L} p_0(u) \dd u =1$, we have
\begin{equation*}
0 < p_t(x) - p_t^{\ast}(x) < \sqrt{\frac{2}{\pi}} \frac{1}{\sqrt{t}\{\exp(2L^2/t) - 1\}}.
\end{equation*}
Let $\gamma(t) := \sqrt{t}\{\exp(2L^2/t) - 1\}$. We claim that $\gamma(t)$ is decreasing in $t \geq 0$. To see this, note that $\gamma'(t) = 2t^{-3/2}\left\{e^{2L^2/t}(t-4L^2) - t \right\}$. By rearranging, $\gamma'(t) < 0$ iff $e^{-2L^2/t} > 1 - 4L^2/t$, which is true by $e^{-2L^2/t} \geq 1 - 2L^2/t$. Therefore, for $t \in [0,t_{\max}]$ the RHS of the previous display is lower bounded by its value at $t_{\max}$. 
\end{proof}

By \cref{lem:aliasing-error}, one can always use a large enough $L$ such that \cref{assump:torus} approximately holds for the data at hand. In the following, we will work under \cref{assump:iid,assump:prior,assump:torus}. Furthermore, since a rescaling $x \leftarrow \pi x / L$ can be applied, without loss of generality, we suppose $L = \pi$ for convenience.

Under \cref{assump:prior,assump:torus}, \cref{thm:heat} still holds if we replace the domain $\mathbb{R}$ with the torus $[-\pi, \pi]$ (or any torus of length $2\pi$). 

\begin{lemma} \label{lem:heat-series}
The set of solutions, in the form of periodic functions in $x$ with period $2 \pi$, to the heat equation \cref{eqs:heat} with initial condition $\bar{p}_0$, has the following trigonometric series representation
\begin{equation}
p_t(x) \sim \sum_{k=-\infty}^{\infty} c_k e^{-k^2 t / 2} e^{ikx}, \quad c_k = \frac{1}{2 \pi} \int_{-\pi}^{\pi} p_0(x) e^{-ikx} \dd x.
\end{equation}
\end{lemma}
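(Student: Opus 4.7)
The plan is to exploit the compactness of the torus by expanding $p_t$ as a Fourier series in the spatial variable for each fixed $t \geq 0$, which reduces the heat equation to a decoupled family of linear ODEs indexed by the Fourier mode $k$. Concretely, any $2\pi$-periodic function in $L^1([-\pi,\pi])$ admits a formal Fourier series, and since $p_t$ is a probability density on $[-\pi,\pi]$ under \cref{assump:torus}, the coefficients
\[
\hat{p}_t(k) = \frac{1}{2\pi} \int_{-\pi}^{\pi} p_t(x) e^{-ikx} \dd x
\]
are well-defined with $|\hat{p}_t(k)| \leq (2\pi)^{-1}$.

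Plugging the expansion $p_t(x) \sim \sum_k \hat{p}_t(k) e^{ikx}$ into \cref{eqs:heat} and using $\partial_x^2 e^{ikx} = -k^2 e^{ikx}$, I would formally interchange differentiation with summation to derive the mode-wise ODE
\[
\frac{d}{dt} \hat{p}_t(k) = -\frac{k^2}{2}\, \hat{p}_t(k),
\]
which integrates to $\hat{p}_t(k) = \hat{p}_0(k)\, e^{-k^2 t/2}$. The initial condition $p_0 = \bar{p}_0$ identifies $\hat{p}_0(k) = c_k$ as in the statement. For the converse direction, I need to verify that the proposed series is a bona fide classical solution for $t > 0$: the bound $|c_k| \leq (2\pi)^{-1}$ combined with the super-polynomial decay of $e^{-k^2 t/2}$ ensures that the series, together with its $x$- and $t$-derivatives of any order, converges absolutely and uniformly on $[-\pi,\pi] \times [\tau,\infty)$ for any $\tau > 0$. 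This justifies termwise differentiation, and each summand $c_k e^{-k^2 t/2} e^{ikx}$ satisfies \cref{eqs:heat} by direct inspection.

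The main subtlety, and the reason for writing ``$\sim$'' rather than ``$=$'', concerns the sense of convergence at $t = 0$: there the series reduces to the Fourier series of $\bar{p}_0$, which under the uniform continuity in \cref{assump:prior} converges to $\bar{p}_0$ in $L^2$ by Parseval, but pointwise convergence would require stronger smoothness (e.g.\ a Dini or H\"older condition). For any $t > 0$, however, the smoothing effect of the heat kernel gives pointwise and uniform convergence. Uniqueness within the periodic class then follows at once: any periodic $L^2$-solution has Fourier coefficients satisfying the same scalar ODE with the same initial data $c_k$, and hence agrees mode-by-mode with the stated series. I do not foresee any deep obstacle in this argument; the only real care is in justifying the termwise differentiation, which is controlled by the rapid decay $e^{-k^2 t/2}$.
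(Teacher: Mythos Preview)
Your proposal is correct and follows essentially the same Fourier-analytic route that the paper indicates: the paper does not give a self-contained proof of \cref{lem:heat-series} but instead refers to \cref{apx:fourier}, where the identical argument (take the spatial transform, obtain the scalar ODE $\partial_t \widehat{p_t} = -\tfrac{1}{2}\xi^2 \widehat{p_t}$, integrate) is carried out for the Fourier transform on $\mathbb{R}$; your version is simply the discrete-frequency analogue on the torus. Your discussion of termwise differentiation for $t>0$ and of the ``$\sim$'' at $t=0$ is in fact more careful than what the paper spells out, and matches the remark following the lemma that cites \citet[Theorem 6.3, Chapter I]{zygmund2002trigonometric} for the pointwise case.
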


The symbol ``$\sim$'' means ``associated with'', since one needs to specify in what sense the series converges to the function $p_t(x)$. For example, a usual notion is the convergence in $L^2$; besides, in a pointwise sense, if $p_0(x)$ is continuous and the series converges uniformly in $x$, then ``$\sim$'' can be replaced by ``$=$''; see \citet[Theorem 6.3, Chapter I]{zygmund2002trigonometric}.

$\{c_k\}$ are complex Fourier coefficients of $p_0(x)$ satisfying complex conjugacy $c_{-k} = \overline{c_k}$, which holds since $p_0(x)$ is real. With $a_k:=2 \Re c_k$ and $b_k:=-2 \Im c_k$, the series in the previous display can be rewritten as
\begin{equation} \label{eqs:partial-sum}
p_t(x) \sim \frac{1}{2} a_0 + \sum_{k=1}^{\infty} (a_k \cos k x + b_k \sin k x) e^{-k^2 t / 2}.
\end{equation}

\subsection{Modeling with trigonometric polynomials}
Now we consider modeling $p_0(x)$, and hence $p_t(x)$ by \cref{lem:heat-series}, in terms of a partial sum of trigonometric polynomials. A trigonometric polynomial of order $N$ takes the form of
\begin{equation} \label{eqs:S-N}
S_N(x) = \sum_{k=-N}^{N} c_k e^{ikx} = \frac{1}{2} a_0 + \sum_{k=1}^{N} (a_k \cos kx + b_k \sin kx)
\end{equation}
for $(2N+1)$ complex Fourier coefficients $\{c_k\}$ satisfying $c_{-k} = \overline{c_k}$. In fact, $\{c_k\}$ are the Fourier coefficients associated with function $S_N(x)$, and are hence uniquely determined from $S_N(x)$. 

Let us fix an order $N$. Suppose $p_0(x) = S_N(x)$, by \cref{lem:heat-series}, the entire density family remains within trigonometric polynomials with order $N$, given by the \emph{closed form} expression
\begin{equation} \label{eqs:SN-family}
S_N(x; t) := \sum_{k=-N}^{N} c_k e^{-k^2 t / 2} e^{ikx}, \quad t \geq 0,
\end{equation}
where $S_N(x; 0) = S_N(x)$. 
In addition to $c_{-k} = \overline{c_k}$ for $k=1,\dots,N$, the coefficients are further subject to an equality constraint due to normalization
\begin{equation} \label{eqs:normalization-c0}
1 = \int_{-\pi}^{\pi} S_N(x) \dd x = (2\pi) c_0.
\end{equation}

Taking the normalization constraint into account, $S_N$ has $2N$ degrees of freedom. $S_N$ can also be uniquely parametrized by $(2N+1)$ interpolating nodes $\{(x_\nu, f_\nu): \nu=0,\pm 1,\dots,\pm N\}$ such that $S_N(x_\nu) = f_\nu$ for every $\nu$. In particular, we focus on equidistant nodes 
\begin{equation} \label{eqs:nodes}
x_\nu = \frac{2\pi \nu}{2N+1}, \quad \nu=0,\pm 1,\dots, \pm N
\end{equation}
on the torus $[-\pi , \pi]$. Given any real-valued $\{f_\nu\}$, there is a unique $S_N(x)$ that passes through every $(x_\nu, f_\nu)$, given by
\begin{equation} \label{eqs:dir-interpolation}
S_N(x) = \frac{1}{2N+1} \sum_{\nu=-N}^{N} f_\nu D_N(x - x_\nu),
\end{equation}
where $D_N(u)$ is the Dirichlet kernel of order $N$
\begin{equation} \label{eqs:dir-kernel}
D_N(u) = \sum_{k=-N}^{N} e^{i k u} = 1 + 2 \sum_{k=1}^{N} \cos ku = \frac{\sin((N + 1/2) u)}{\sin u/2},
\end{equation}
with $D_0(u) \equiv 1$. 
From these two expressions, one can easily check that $S_N(x)$ is an order-$N$ trigonometric polynomial. To see that $S_N(x)$ passes through every interpolating point, observe that $D_N(0) = 2N+1$ and $D_N(x_\nu) = 0$ for $\nu \neq 0$. By comparing the trigonometric terms between \cref{eqs:dir-interpolation} and \cref{eqs:S-N}, the Fourier coefficients are recovered as
\begin{equation} \label{eqs:DFT}
c_k = \frac{1}{2N+1} \sum_{\nu = -N}^{N} f_{\nu} e^{-\frac{2\pi}{2N+1} i k \nu}, \quad k=0,\pm 1,\dots,\pm N.
\end{equation}
In particular, the normalization constraint \cref{eqs:normalization-c0} translates to
\begin{equation*}
\frac{1}{2 \pi} = c_0 = \frac{1}{2N+1} \sum_{\nu=-N}^{N} f_{\nu}.
\end{equation*}
The full-rank linear transform \cref{eqs:DFT} is known as the discrete Fourier transform (DFT), which we will denote with shorthand $\{c_k\} = \text{DFT}_{N}(\{f_\nu\})$. $\text{DFT}_N$ and its inverse transform can be computed in $O((2N+1) \log (2N+1))$ time with fast Fourier transform (FFT) algorithms. To sum up, an order-$N$ trigonometric polynomial model of the density family $\{S_N(x; t): t \geq 0\}$ can be parametrized by $(2N+1)$ real-valued $\{f_\nu\}$ such that $\frac{2\pi}{2N+1} \sum_{\nu=-N}^{N} f_\nu = 1$. 

We remark that among other orthogonal polynomials, the trigonometric polynomials are the most natural choice for our problem. They are the eigenfunctions of the heat equation \cref{eqs:heat} and provide an orthogonal basis for representing $p_t(x)$ in closed form. Estimating a density with trigonometric polynomials/series was proposed by \citet{kronmal1968estimation} and has since been extensively studied; see also \citet{wahba1975optimal,walter1979probability,hall1981trigonometric,hall1986rate}. The estimator considered by these authors is primarily a Fourier partial sum sequence with $\{\hat{c}_k: k=\pm 1, \cdots, \pm N\}$ unbiasedly estimated by the sample average of the corresponding $e^{ikx}$. The performance of the estimator, as $N$ grows with $n$ properly, is analyzed in terms of the mean integrated square error relative to the true density. In the following, we show that maximum likelihood estimation can be performed on $S_N$ to improve efficiency. 

\subsubsection{Non-negativity} An obvious drawback of this type of model is that an estimated $S_N(x)$ is not guaranteed to be non-negative for a finite $N$, even if one places constraints $f_k \geq 0$. In the literature, the non-negativity constraint is sometimes relaxed to obtain a faster rate of convergence \citep{terrell1980improving}, or dismissed by arguing that the estimate will stay positive ``most of the time'' \citep{kronmal1968estimation}. Post-processing of an estimator to ensure non-negativity is also proposed; see \citet{kronmal1968estimation,hall1981trigonometric,gajek1986improving}.

In the context of ours, we find this issue particularly disturbing: (i) a negative value in \cref{eqs:SN-family} yields an invalid objective for maximum-likelihood estimation; (ii) $S_N(x; t)$ can have up to $2N$ zeros in $[-\pi, \pi]$ \citep[Theorem 10.1.7]{zygmund2002trigonometric}, and the posterior mean, by Tweedie's formula \cref{eqs:tweedie-mean}, will shoot to \emph{infinity} at these zeros because $\ell'(x) = S_N'(x;t) / S_N(x;t)$; (iii) $S_N(\hat{\Delta}; t)$ is the normalizing constant, when computing the posterior density of the true effect is desired. In the following, we present a simple parametrization that ensures a \emph{bona fide} density. 

Consider the Ces\`{a}ro sum of trigonometric polynomials
\begin{equation}
C_N(x) := \frac{S_0(x) + S_1(x) + \dots + S_N(x)}{N+1},
\end{equation}
which is the arithmetic mean of the first $N+1$ partial sums given by \cref{eqs:S-N} ($S_0(x) \equiv \frac{1}{2} a_0$). The Ces\`{a}ro sum $C_N(x)$ has stronger convergence guarantee compared to the symmetric partial sum $S_N(x)$. Fej\'{e}r's theorem \citep[Section 3.3]{zygmund2002trigonometric} states that as $N \rightarrow \infty$, $C_N(x)$ converges uniformly to $p_0(x)$ if $p_0$ is continuous. By definition, $C_N(x)$ is also a trigonometric polynomial of order $N$
\begin{equation} \label{eqs:cesaro}
C_N(x) = \sum_{k=-N}^{N} \left(1 - \frac{|k|}{N+1} \right) c_k e^{ikx},
\end{equation}
with complex Fourier coefficients reweighted by $1 - |k|/(N+1)$ accordingly. Similarly, let $C_N(x; t)$ be the arithmetic mean of $\{S_m(x; t): m=0,\dots,N\}$ given by \cref{eqs:SN-family}. By linearity, $C_N(x; t)$ satisfies the heat equation \cref{eqs:heat}. 

\begin{definition} \label{def:Fejer}
The Fej\'{e}r kernel is 
\begin{equation} 
K_N(u) := \frac{1}{N+1} \sum_{k=0}^{N} D_k(u),
\end{equation}
where $D_k(\cdot)$ is the Dirichlet kernel given by \cref{eqs:dir-kernel}.
\end{definition}

\begin{lemma}[{\citet[Sec. 3, Chapter III]{zygmund2002trigonometric}}] \label{lem:fejer}
The Fej\'{e}r kernel has the following properties.
\begin{enumerate}[(a)]
\item It holds that
\begin{equation} \label{eqs:fejer}
K_N(u) = \frac{1}{N+1} \left(\frac{\sin \frac{N+1}{2} u}{\sin u/2} \right)^2,
\end{equation}
where in particular $K_N(0) := \lim_{u \rightarrow 0} K_N(u) = N+1$. 
\item $K_N(u) \geq 0$.
\item $\int_{-\pi}^{\pi} K_N(u) \dd x = 2\pi$.
\item For $u \in [-\pi, \pi]$, $K_N(u)$ vanishes iff $u = \frac{2 \pi m}{N+1}$ for $m=\pm 1, \dots, \pm \lfloor \frac{N+1}{2} \rfloor$. 
\end{enumerate}
\end{lemma}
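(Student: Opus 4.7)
The plan is to establish part (a) first as the main computation, and then derive (b), (c), and (d) as direct consequences. The essential algebraic step is to convert the sum of Dirichlet kernels into a squared ratio, which I would do via a telescoping identity based on the product-to-sum formula for sines.

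First, I would use the closed form for $D_k$ from \cref{eqs:dir-kernel}, namely $D_k(u)\sin(u/2) = \sin((k+1/2)u)$, together with the identity $2\sin(u/2)\sin((k+1/2)u) = \cos(ku) - \cos((k+1)u)$. Summing over $k=0,\dots,N$, the right-hand side telescopes to $1 - \cos((N+1)u) = 2\sin^2((N+1)u/2)$, which yields
\begin{equation*}
2\sin^2(u/2)\sum_{k=0}^N D_k(u) = 2\sin^2((N+1)u/2).
\end{equation*}
Dividing by $2(N+1)\sin^2(u/2)$ and handling the removable singularity at $u=0$ by taking a limit (both sides are analytic in $u$, and the limit gives $N+1$ since $D_k(0)=2k+1$ so $\frac{1}{N+1}\sum_{k=0}^N(2k+1)=N+1$) establishes part (a) including the stated value $K_N(0)=N+1$.

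Parts (b) and (c) then follow immediately. Non-negativity in (b) is clear from (a) because $K_N$ is a nonnegative constant times a square. For (c), rather than integrate the closed form, I would integrate the defining sum $\int_{-\pi}^{\pi} D_k(u)\,du$ term-by-term using $\int_{-\pi}^\pi e^{iju}\,du = 2\pi\,\mathbbm{1}\{j=0\}$, which shows $\int_{-\pi}^\pi D_k = 2\pi$ for every $k$; averaging over $k=0,\dots,N$ yields $\int_{-\pi}^\pi K_N = 2\pi$.

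For (d) I would use the factored form from (a): $K_N(u)=0$ iff the numerator $\sin\bigl((N+1)u/2\bigr)$ vanishes while the denominator $\sin(u/2)$ does not. On $[-\pi,\pi]$, the denominator vanishes only at $u=0$, which we have already shown yields $K_N(0)=N+1\ne 0$, so it does not produce a zero. The numerator vanishes at $u = 2\pi m/(N+1)$ for integer $m$, and the constraint $u\in[-\pi,\pi]$ with $m\ne 0$ reduces to $|m|\le \lfloor (N+1)/2\rfloor$; a quick check that $\sin(u/2)\ne 0$ at $u=\pm\pi$ (which arises when $N+1$ is even) confirms these are genuine zeros. I expect the main obstacle to be only bookkeeping in (d)---namely correctly recording the floor in the range of $m$ and verifying that the endpoint case $u=\pm\pi$ does not simultaneously annihilate the denominator---but no deep argument is needed since this is the classical Zygmund computation.
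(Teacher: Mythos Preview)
Your proof is correct and follows the standard classical computation. Note that the paper does not actually prove this lemma: it is stated with a citation to \citet[Sec.~3, Chapter III]{zygmund2002trigonometric} and used without proof, so there is no ``paper's own proof'' to compare against. Your telescoping argument for (a) and the immediate deductions of (b)--(d) are exactly the textbook route one finds in Zygmund, so nothing further is needed.
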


\begin{lemma} \label{lem:fejer-interpolation}
With $\{(x_\nu, f_\nu)\}_{\nu=-N}^{N}$ given by \cref{eqs:nodes,eqs:dir-interpolation}, $C_N(x)$ in \cref{eqs:cesaro} can be expressed as
\begin{equation}
C_N(x) = \frac{1}{2N+1} \sum_{\nu=-N}^{N} f_\nu K_N(x - x_\nu). \label{eqs:cesaro-kernel-form}
\end{equation}
\end{lemma}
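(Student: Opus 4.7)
The plan is to start from the defining expression of $C_N$ given by \cref{eqs:cesaro}, namely
\[
C_N(x) = \sum_{k=-N}^{N}\Bigl(1-\tfrac{|k|}{N+1}\Bigr) c_k\, e^{ikx},
\]
and substitute the DFT formula \cref{eqs:DFT} for the coefficients $c_k$ in terms of the nodal values $\{f_\nu\}$. After interchanging the two finite sums (which is unconditional, as everything is finite), one obtains
\[
C_N(x) \;=\; \frac{1}{2N+1}\sum_{\nu=-N}^{N} f_\nu \sum_{k=-N}^{N}\Bigl(1-\tfrac{|k|}{N+1}\Bigr) e^{i k (x - x_\nu)}.
\]
The claim then reduces to showing that the inner sum equals $K_N(x-x_\nu)$.

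The required identity is a standard Fourier expansion of the Fej\'er kernel: I would verify
\[
K_N(u) \;=\; \sum_{k=-N}^{N}\Bigl(1-\tfrac{|k|}{N+1}\Bigr) e^{iku}.
\]
Starting from \cref{def:Fejer}, I expand each Dirichlet kernel $D_k(u)=\sum_{j=-k}^{k} e^{iju}$ and count how many of the first $N+1$ partial sums contain the exponential $e^{iju}$: for any fixed $j$ with $|j|\le N$, the frequency $j$ appears in $D_k$ for every $k\ge |j|$ with $k\le N$, giving a multiplicity of $N+1-|j|$. Dividing by $N+1$ yields the claimed coefficient $1-|j|/(N+1)$, and coefficients for $|j|>N$ vanish because $C_N$ is of order $N$.

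Combining these two steps immediately gives \cref{eqs:cesaro-kernel-form}. There is no real obstacle — the argument is entirely algebraic, using the linearity of the DFT and the explicit Fourier series of $K_N$. The one bookkeeping point worth being careful about is that the $x_\nu$ are equispaced in a way that is compatible with the truncation at frequency $N$ (so that aliasing does not contaminate the DFT inversion); this is already built into the choice of $(2N+1)$ nodes in \cref{eqs:nodes} and the reconstruction formula \cref{eqs:dir-interpolation}. As a sanity check on the result, one can verify at $N=0$ that $C_0(x) = f_0/(2\pi) \cdot 2\pi K_0(x-x_0)/1$ collapses correctly, and that integrating \cref{eqs:cesaro-kernel-form} against $\dd x$ reproduces the normalization $\tfrac{2\pi}{2N+1}\sum_\nu f_\nu$ via property (c) of \cref{lem:fejer}.
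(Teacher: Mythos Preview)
Your proposal is correct and follows essentially the same route as the paper: substitute the DFT expression \cref{eqs:DFT} for $c_k$ into \cref{eqs:cesaro}, swap the two finite sums, and identify the inner sum over $k$ as the Fej\'er kernel. The only cosmetic difference is that the paper works in the real cosine form $1 + 2\sum_{k=1}^N (1-k/(N+1))\cos k(x-x_\nu)$ and expands $K_N$ directly from $D_k(u)=1+2\sum_{j=1}^k \cos ju$, whereas you stay in complex exponential form and obtain the Fej\'er coefficients by a multiplicity count; the content is identical.
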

\begin{proof}
Using $S_N(x) = \sum_{k=-N}^{N} c_k e^{-ikx}$ with $\{c_k\}$ given by \cref{eqs:DFT}, $C_N(x)$ in \cref{eqs:cesaro} can be expressed as
\begin{equation*}
\begin{split}
C_N(x) &= \frac{1}{2N+1} \sum_{k=-N}^{N} \left(1 - \frac{|k|}{N+1} \right) \sum_{\nu=-N}^{N} f_{\nu} e^{- \frac{i 2\pi k \nu}{2N+1}} e^{ikx} \\
&= \frac{1}{2N+1} \sum_{\nu=-N}^{N} f_{\nu} \left \{1 + \sum_{k=1}^{N} \left(1 - \frac{k}{N+1} \right) (e^{ik(x-x_{\nu})} + e^{-ik(x-x_{\nu})}) \right\} \\
&= \frac{1}{2N+1} \sum_{\nu=-N}^{N} f_{\nu} \left\{ 1 + 2 \sum_{k=1}^{N} \left(1 - \frac{k}{N+1} \right) \cos k(x - x_{\nu}) \right\},
\end{split}
\end{equation*}
where $x_{\nu} = \frac{2\pi \nu}{2N+1}$. By \cref{def:Fejer}, we also have
\begin{equation*}
K_N(u) = \frac{1}{N+1} \sum_{k=0}^{N} D_k(u) = \frac{1}{N+1} \left(N+1 + 2 \sum_{k=1}^{N} (N+1-k) \cos ku \right),
\end{equation*}
where we used \cref{eqs:dir-kernel} for $D_k(u)$. The lemma is proven by comparing the previous two displays.
\end{proof}

\begin{theorem} \label{thm:non-negative}
Suppose $S_N(x)$ is parametrized by $\{f_\nu\}$ satisfying $f_\nu \geq 0$ and $\frac{2\pi}{2N+1} \sum_{\nu=-N}^{N} f_\nu = 1$, then $C_N(x)$ in \cref{eqs:cesaro} is a valid density on $[-\pi, \pi]$. Further, if there exist $\nu \neq \nu'$ with $f_\nu, f_{\nu'}>0$, then $C_N(x)$ is strictly positive. 
\end{theorem}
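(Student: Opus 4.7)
The plan is to exploit the kernel representation already established in Lemma~\ref{lem:fejer-interpolation}, namely
\begin{equation*}
C_N(x) = \frac{1}{2N+1} \sum_{\nu=-N}^{N} f_\nu\, K_N(x - x_\nu),
\end{equation*}
and then read off the three required properties (non-negativity, integration to one, strict positivity) from the properties of the Fej\'er kernel collected in Lemma~\ref{lem:fejer}.

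For non-negativity, I would simply combine $f_\nu \geq 0$ with Lemma~\ref{lem:fejer}(b) ($K_N(u) \geq 0$), which gives $C_N(x) \geq 0$ pointwise. For the normalization, since $K_N$ has period $2\pi$, translating by $x_\nu$ does not change the integral over one period, so Lemma~\ref{lem:fejer}(c) yields $\int_{-\pi}^{\pi} K_N(x - x_\nu)\,\mathrm{d}x = 2\pi$ for every $\nu$; plugging this in and using the given constraint $\frac{2\pi}{2N+1}\sum_\nu f_\nu = 1$ gives $\int_{-\pi}^{\pi} C_N(x)\,\mathrm{d}x = 1$.

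The substantive part is strict positivity, and this is where I expect the main work to lie. Suppose for contradiction that there is some $x^\star \in [-\pi,\pi]$ with $C_N(x^\star) = 0$. Then $f_\nu K_N(x^\star - x_\nu) = 0$ for every $\nu$, so in particular $K_N(x^\star - x_\nu) = K_N(x^\star - x_{\nu'}) = 0$ for the two indices $\nu \neq \nu'$ with $f_\nu, f_{\nu'} > 0$. By Lemma~\ref{lem:fejer}(d) (interpreting the zero set modulo $2\pi$), there exist integers $m, m'$ not divisible by $N+1$ with
\begin{equation*}
x^\star - x_\nu \equiv \frac{2\pi m}{N+1}, \qquad x^\star - x_{\nu'} \equiv \frac{2\pi m'}{N+1} \pmod{2\pi}.
\end{equation*}
Subtracting and using $x_\nu = 2\pi\nu/(2N+1)$, this forces $(\nu' - \nu)(N+1) \equiv 0 \pmod{2N+1}$. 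The key arithmetic observation is that $\gcd(N+1, 2N+1) = 1$ (since $2(N+1) - (2N+1) = 1$), so we must have $\nu' - \nu \equiv 0 \pmod{2N+1}$. But $\nu, \nu' \in \{-N,\dots,N\}$ gives $0 < |\nu' - \nu| \leq 2N < 2N+1$, a contradiction. Hence $C_N(x) > 0$ for all $x$.

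The main obstacle is the last step: the zero sets of the individual shifted kernels $K_N(\cdot - x_\nu)$ are non-empty, and one must rule out a common zero across distinct translates. The coprimality between the spacing of interpolation nodes (denominator $2N+1$) and the spacing of kernel zeros (denominator $N+1$) is precisely what guarantees disjointness, and this is the only non-routine ingredient in the proof.
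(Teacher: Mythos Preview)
Your proposal is correct and follows essentially the same approach as the paper: both express $C_N$ via Lemma~\ref{lem:fejer-interpolation} as a nonnegative combination of shifted Fej\'er kernels, invoke Lemma~\ref{lem:fejer}(b)--(c) for the density claim, and for strict positivity argue by contradiction that two distinct translates $K_N(\cdot - x_\nu)$ and $K_N(\cdot - x_{\nu'})$ cannot share a zero because $\gcd(N+1,2N+1)=1$ forces $2N+1 \mid (\nu'-\nu)$, which is impossible for $0<|\nu'-\nu|\leq 2N$. Your treatment is in fact slightly more careful than the paper's in working modulo $2\pi$ and in stating the correct condition on $m,m'$ (not divisible by $N+1$ rather than merely nonzero), but the argument is the same.
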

\begin{proof}
From (b) and (c) of \cref{lem:fejer}, we know $\frac{1}{2\pi} K_N(u)$ is a density on $[-\pi, \pi]$. It follows from \cref{lem:fejer-interpolation} that $C_N(x)$ is a mixture of $\{K_N(u - x_\nu)\}$ with corresponding weights $\{\frac{2 \pi}{2N+1} f_\nu\}$. Given $f_\nu \geq 0$ and $\frac{2\pi}{2N+1} \sum_{\nu=-N}^{N} f_\nu = 1$, $C_N(x)$ is a density on $[-\pi, \pi]$. Further, suppose $f_\nu, f_{\nu'} > 0$ for $\nu \neq \nu'$. We prove by contradiction that $C_N(x)$ is positive. Suppose $C_N(y) = 0$. By \cref{lem:fejer}(b), $K_N(y - x_\nu)$ and $K_N(y - x_{\nu'})$ must both vanish. By \cref{lem:fejer}(d), we have $y - x_\nu = \frac{2\pi m}{N+1}$ and $y - x_{\nu'} = \frac{2 \pi m'}{N+1}$ for some non-zero integers $m, m'$. By taking the difference on \cref{eqs:nodes}, we have $\frac{2 \pi (\nu'-\nu)}{2N+1} = \frac{2 \pi (m' - m)}{N+1}$, namely $\frac{N+1}{2N+1}(\nu'-\nu) = m' - m$. We observe that $N+1$ and $2N+1$ are co-prime. Given that $\nu' \neq \nu$, the only possibility is that $\nu' - \nu$ cancels with $2N+1$. However, $|\nu' - \nu| \leq 2N$. By contradiction we conclude that $C_N(x)$ stays positive. 
\end{proof}

\begin{remark}
$C_N(x)$ is a generalized form of Jackson polynomial $J_{N, 2N+1}(x)$ based on ``nodes'' $\{(x_\nu, f_\nu)\}$; see \citet[Sec.~6, Chapter X]{zygmund2002trigonometric}. However, unlike $S_N(x)$, $C_N(x)$ does not pass through these ``nodes''. 
\end{remark}

To summarize, we have shown that a valid density family can be modeled by a family of trigonometric polynomials of order $N$
\begin{equation} \label{eqs:parametrization}
\begin{split}
& C_N(x; t) = \sum_{k=-N}^{N} \left(1 - \frac{|k|}{N+1} \right) c_k e^{-k^2 t/2} e^{ikx}, \quad \{c_k\} = \text{DFT}_{N}(\{f_\nu\}) \\
& \text{subject to} \ f_\nu \geq 0 \  (\nu=0,\pm 1, \dots,\pm N) \ \text{and } \ \frac{2\pi}{2N+1} \sum_{\nu=-N}^{N} f_\nu = 1.
\end{split}
\end{equation}
This density family satisfies the heat equation in \cref{thm:heat} with the domain replaced by the torus $[-\pi, \pi]$.

 \subsection{Maximum likelihood estimation} 
We are now ready to estimate the density family with trigonometric polynomials. For the sake of efficiency, we consider maximum likelihood estimation for the model in \Cref{eqs:parametrization}.
Let $f := \{f_\nu\}_{\nu=-N}^{N}$ be the parameter vector and let $C_{N,f}(x;t)$ be $C_N(x;t)$ defined through $f$. The log-likelihood of the dataset is
\begin{equation}
\ell_{n}(f) = \sum_{i=1}^{n} \log C_{N,f}(\hat{\Delta}_i; s_i^2).
\end{equation}
Since (i) $-\log(\cdot)$ is convex, (ii) $C_{N,f}(x; t)$ is linear in $\{c_k\}$, and (iii) $\{c_k\} = \text{DFT}_N(f)$ is linear and full-rank, indeed $(-\ell_n)$ is convex in $f$. Let $\mathcal{S}_{2N} \subset \mathbb{R}^{2N+1}$ denote the $2N$-dimensional unit simplex. The MLE can thus be obtained from the following convex optimization
\begin{equation} \label{eqs:opt}
\min \left\{-\sum_{i=1}^{n} \log C_{N,f}(\hat{\Delta}_i; s_i^2): \quad \kappa_N f \in \mathcal{S}_{2N} \right\},
\end{equation}
where $\kappa_N := 2\pi/(2N+1)$ is a constant. 

\subsubsection{Accelerated projected gradient} The log-likelihood is continuously differentiable in $f$. The constrained convex optimization can be solved by the more general proximal gradient method, formulated as
\begin{equation*}
\min_{f \in \mathbb{R}^{2N+1}} \ell_n(f) + \delta_{\kappa_N^{-1} \mathcal{S}_{2N}}(f),
\end{equation*}
where the indicator $\delta_{A}(x)$ takes $+\infty$ when $x \notin A$ and zero otherwise. 

We use the Fast Iterative-Shrinkage Thresholding Algorithm (FISTA) of \citet{beck2009fast}, which solves unconstrained convex optimization of the form $\min_{x} h(x) + g(x)$ for a smooth convex function $h$ and a non-smooth convex function $g$. In each iteration, the algorithm executes Nesterov's accelerated gradient step with $\nabla h$, followed by a proximal operator with respect to $g$ acting on the updated coordinate. In our case, $h = \ell_n$ and $g = \delta_{\kappa_{N}^{-1} \mathcal{S}_{2N}}$, and the proximal operator becomes a projection.

\begin{algorithm}[H]
\caption{Accelerated Projected Gradient for MLE} \label{alg:projected-gradient}
 \KwData{$(s_i, \hat{\Delta}_i)_{i=1}^{n}$}
 \KwInput{order of trigonometric polynomial $N$, domain half-length $L$, step size $\gamma_t > 0$}
 $x_0 \leftarrow \text{median}(\{\hat{\Delta}_i\})$ \;
 $\hat{\Delta}_i \leftarrow (\hat{\Delta}_i - x_0) \pi / L$, $s_i \leftarrow s_i \pi / L$ for all $i$ \;
 $f^{(0)} \leftarrow f^{(-1)} \leftarrow (1,\dots,1)^{\T} / \kappa_N$ \;
 \For{$m=1,2,\dots$ until convergence}{
  $ y \leftarrow f^{(m-1)} + \frac{m-2}{m+1} (f^{(m-1)} - f^{(m-2)} )$ \;
  $ f^{(m)} \leftarrow (\kappa_N)^{-1} \proj_{\mathcal{S}_{2 N}} \left(\kappa_N \{y + \gamma_m \nabla \ell_n(y)\} \right)$ \;
 }
 \KwOutput{MLE $\hat{f} = f^{(m)}$}
\end{algorithm} 

The step-size should be either set as a constant that is no greater than the inverse of a Lipschitz constant of $\ell_n$, or determined by a line search; see \citet{beck2009fast} for more details. The resulting algorithm achieves $\ell_{n}^{\ast} - \ell_{n}(f^{(m)}) \leq O(1/m^2)$, where $\ell_n^{\ast}$ denotes the maximum of $\ell_n$. The iterates $\ell_n(f^{(m)})$ are not monotonic in general, and a simple restarting trick from \citet{o2015adaptive} can be employed to suppress oscillation. 
We implemented the algorithm with automatic differentiation from \emph{Autograd} \citep{maclaurin2015autograd} for computing $\nabla \ell_n(f)$, which is able to differentiate through the FFT. The log-likelihood is can be efficiently evaluated with matrix-vector multiplications. 

The operator $\proj_{\mathcal{S}_{2N}}(\cdot)$ in \Cref{alg:projected-gradient} is the Euclidean projection of a vector onto the unit simplex. We implement it with the following algorithm from \citet{duchi2008efficient}; see also \citet{wang2013projection} for an exposition. 

\begin{algorithm}[H]
\caption{Projection of a vector onto the unit simplex} \label{alg:projection-simplex}
\KwInput{$y \in \mathbb{R}^N$}
Order $y$ as $y^{(1)} \geq \cdots \geq y^{(N)}$ \;
$\rho \leftarrow \max \left\{1 \leq j \leq N: y^{(j)} + \frac{1}{j}(1 - \sum_{i=1}^{j} y^{(i)}) > 0 \right\}$ \;
$ \lambda \leftarrow \left(1 - \sum_{i=1}^{\rho} y^{(i)} \right) / \rho $ \;
\KwOutput{$\bm{x}$ with $x_i = \max\{y^{(i)} + \lambda, 0\}$ for $i=1,\dots,N$.}
\end{algorithm}

We remark that \citet{wager2014geometric} also used DFT for a related problem, which, in our terms, is to estimate the marginal density $p_{s=1}$ when $s_{i} \equiv 1$ in the dataset (no heterogeneity in $s$). His approach is geometric, namely to find the $L^2$ projection of the empirical measure $\mathbb{P}_n$ onto the space of possible marginal densities under $t=1$. By Parseval's identity, the $L^2$ distance becomes the $\ell_2$ norm in the space of Fourier coefficients. Therefore, the projection can formulated as a quadratic program that can be efficiently solved. Besides, \citet{wager2014geometric} also uses simplex constraints on node points, but its effect is only approximate since his parametrization does not guarantee $\hat{p}_1$ to be non-negative. In contrast to our approach, \citet{wager2014geometric} does not consider heterogeneity in the error scale, and his estimator is not an MLE.

\subsection{Model selection} \label{sec:model-sel}
In practice, the order $N$ of trigonometric polynomial $C_N$ should be chosen empirically. This is a classical bias-variance trade-off: $N$ being too small introduces bias from missing high-frequency components; $N$ being too big introduces spurious oscillations in the estimate. This phenomenon is illustrated by simulation studies in \cref{sec:numerical}. In the following, we discuss two criteria for model selection: (i) the predicted log-likelihood, which is typically used for evaluating probabilistic models, and (ii) a square loss that is targeted at the accuracy of the estimated posterior mean, which exploits \possessivecite{hyvarinen2005estimation} score matching technique.

\subsubsection{Predicted log-likelihood}
One model selection criterion common to probabilistic modeling is the predicted log-likelihood on the held-out data. 
\begin{lemma} \label{lem:identifiable}
Fix any $s \geq 0$, $g$ is uniquely identifiable (up to Lebesgue almost everywhere) from the marginal $p_{s}(\cdot) = \int \phi_{s}(\cdot - u) g(u) \dd u$. 
\end{lemma}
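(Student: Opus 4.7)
The plan is to reduce the identifiability question to the injectivity of the Fourier transform on $L^1(\mathbb{R})$ by exploiting the fact that convolution with a Gaussian has a nowhere-vanishing characteristic function. The case $s = 0$ is trivial since $p_0 = g$, so I will focus on $s > 0$.

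First, I would suppose for contradiction that $g_1$ and $g_2$ are two probability densities giving rise to the same marginal, i.e.\ $g_1 \ast \phi_s = g_2 \ast \phi_s$ Lebesgue-a.e. Letting $h := g_1 - g_2 \in L^1(\mathbb{R})$, this says $h \ast \phi_s = 0$ a.e. Next, I would take Fourier transforms on both sides. Since $h \in L^1$ and $\phi_s \in L^1$, the convolution identity for Fourier transforms gives
\begin{equation*}
\widehat{h \ast \phi_s}(\xi) = \widehat{h}(\xi)\,\widehat{\phi_s}(\xi) = 0 \quad \text{for all } \xi \in \mathbb{R}.
\end{equation*}
The key fact is that the Fourier transform of the Gaussian kernel, $\widehat{\phi_s}(\xi) = e^{-s^2 \xi^2 / 2}$, is strictly positive for every $\xi \in \mathbb{R}$ whenever $s > 0$. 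Dividing, $\widehat{h}(\xi) \equiv 0$ on $\mathbb{R}$.

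Finally, I would invoke the injectivity of the Fourier transform on $L^1(\mathbb{R})$: if $h \in L^1$ has $\widehat{h} \equiv 0$, then $h = 0$ Lebesgue-a.e. Consequently $g_1 = g_2$ a.e., establishing the claimed identifiability. The only step that requires any care is justifying the convolution--product formula and appealing to $L^1$ Fourier uniqueness, but since densities belong to $L^1$ this is routine; there is no real obstacle. An alternative and essentially equivalent route is to phrase the argument entirely in terms of characteristic functions of the underlying distributions $G_1, G_2$, noting that the characteristic function of $G_j \ast \mathcal{N}(0, s^2)$ is $\varphi_{G_j}(\xi) e^{-s^2 \xi^2 / 2}$; equality of these and nonvanishing of the Gaussian factor forces $\varphi_{G_1} = \varphi_{G_2}$, and the uniqueness theorem for characteristic functions then yields $G_1 = G_2$, whence the densities agree a.e.
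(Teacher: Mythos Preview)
Your proposal is correct and follows essentially the same approach as the paper: both arguments exploit that the Fourier transform (characteristic function) of $\phi_s$ is nowhere vanishing, so $\mathcal{F}(p_s)$ determines $\mathcal{F}(g)$, and then Fourier uniqueness on $L^1$ recovers $g$ a.e. Your ``alternative route'' via characteristic functions is in fact exactly what the paper does, while your primary route just rephrases the same idea as injectivity of the convolution map rather than as an explicit inversion formula.
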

\begin{proof}
By \cref{eqs:solution-fourier-infinite}, the characteristic function of $G$ is determined by  $\mathcal{F}(g)(\xi) = e^{\frac{1}{2} \xi^2 s^2} \mathcal{F}(p_s)(\xi)$, where $\mathcal{F}(p_s)$ is the characteristic function of $p_s$. $g$ is uniquely determined (up to Lebesgue almost everywhere) by $\mathcal{F}^{-1}\mathcal{F}(g)$.
\end{proof}

Suppose $p_s$ and $\hat{p}_s$ are the marginal densities under $G$ and $\widehat{G}$ respectively. It follows that the Kullback-Leibler divergence
\begin{equation*}
0 \leq \KL(p_s \| \hat{p}_s) = \int p_s (x) \log \frac{p_{s}(x)}{\hat{p}_{s}(x)} \dd x = C_{G} - \E_{p_s} \log \hat{p}_{s}(\hat{\Delta}),
\end{equation*}
where the equality holds if and only if $G = \widehat{G}$, the constant $C_G$ does not depend on $\widehat{G}$. This motivates model selection with the averaged predictive log-likelihood
\begin{equation}
\E_{s \sim Q} \E_{p_s} \log \hat{p}_{s}(\hat{\Delta}^{\ast}) = \frac{1}{n^{\ast}} \sum_{i=1}^{n^{\ast}} \log \hat{p}_{s_i^{\ast}} (\hat{\Delta}^{\ast}_i) + O_{p}(1/\sqrt{n^{\ast}}), \label{eqs:held-out-ll}
\end{equation}
which is uniquely maximized when $\widehat{G} =_{d} G$. Samples $(\hat{\Delta}_i^{\ast}, s_i^{\ast})$ come from a held-out dataset of size $n^{\ast}$. $Q$ is either the population distribution or simply an empirical distribution for $s$. A model with a higher predictive log-likelihood is preferred. 

\subsubsection{Score-matching for the posterior mean} As we have seen in \cref{sec:bayes-optimal-decision}, the posterior mean is crucial to deciding whether one should launch a change. We consider a particular loss targeted at accurately approximating the true posterior mean based on the score matching technique. The following lemma is adapted from \citet{hyvarinen2005estimation}.

\begin{lemma} \label{lem:hyvarinen}
Suppose $p$, $\hat{p}$ are periodic functions with period $2\pi$ and are densities on $[-\pi, \pi]$. Suppose $p$ is continuously differentiable and $\hat{p}$ is twice continuously differentiable. Let $\ell = \log p$ and $\hat{\ell} = \log \hat{p}$. It holds that 
\begin{equation}
\E_{p} (\ell' - \hat{\ell}')^2 = \E_{p} \left\{ (\hat{\ell}')^2 + 2 \hat{\ell}'' \right\} + C_p,
\end{equation}
when all the terms on the RHS are finite, where the constant $C_p = \int_{-\pi}^{\pi} p(x) (\ell'(x))^2 \dd x$ only depends on $p$.
\end{lemma}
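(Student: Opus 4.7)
\medskip

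\textbf{Proof plan.} The plan is to use the standard Hyvärinen score-matching trick: expand the square and remove the unknown score $\ell'$ by integration by parts, turning the cross term into something that depends on $\hat{p}$ alone. The key point specific to this lemma is that the boundary terms vanish not by decay at infinity (as in the original $\mathbb{R}$-version of \citet{hyvarinen2005estimation}), but by the periodicity of the functions on the torus $[-\pi,\pi]$.

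First I would expand
\begin{equation*}
\E_p (\ell' - \hat{\ell}')^2 \;=\; \E_p (\ell')^2 \;-\; 2\,\E_p[\ell'\,\hat{\ell}'] \;+\; \E_p (\hat{\ell}')^2,
\end{equation*}
identify $\E_p(\ell')^2 = C_p$, and focus on the cross term. Using $p\,\ell' = p'$, rewrite
\begin{equation*}
\E_p[\ell'\,\hat{\ell}'] \;=\; \int_{-\pi}^{\pi} p(x)\,\ell'(x)\,\hat{\ell}'(x)\,\dd x \;=\; \int_{-\pi}^{\pi} p'(x)\,\hat{\ell}'(x)\,\dd x.
\end{equation*}
Apply integration by parts to obtain
\begin{equation*}
\int_{-\pi}^{\pi} p'(x)\,\hat{\ell}'(x)\,\dd x \;=\; \bigl[p(x)\,\hat{\ell}'(x)\bigr]_{-\pi}^{\pi} \;-\; \int_{-\pi}^{\pi} p(x)\,\hat{\ell}''(x)\,\dd x.
\end{equation*}

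The main (and only nontrivial) step is to justify that the boundary term vanishes. Since $p$ is periodic with period $2\pi$ one has $p(\pi) = p(-\pi)$; since $\hat{p}$ is likewise periodic and $C^2$, its score $\hat{\ell}' = \hat{p}'/\hat{p}$ is also $2\pi$-periodic, so $\hat{\ell}'(\pi) = \hat{\ell}'(-\pi)$. Hence the evaluated bracket is zero. This is where the torus hypothesis matters: on $\mathbb{R}$ one would instead need tail decay of $p\hat{\ell}'$, which is the standard regularity condition in \citet{hyvarinen2005estimation}.

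Putting it together, $\E_p[\ell'\hat{\ell}'] = -\E_p[\hat{\ell}'']$, so
\begin{equation*}
\E_p (\ell' - \hat{\ell}')^2 \;=\; C_p \;+\; \E_p(\hat{\ell}')^2 \;+\; 2\,\E_p[\hat{\ell}''],
\end{equation*}
which is the claimed identity. All manipulations are valid under the stated regularity (continuity of $p,\ell'$ and $C^2$ of $\hat{p}$), with the finiteness assumption on the RHS ensuring that the interchange in the integration-by-parts step and the finiteness of $C_p$ are not issues.
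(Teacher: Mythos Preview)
Your proof is correct and follows essentially the same approach as the paper: expand the square, rewrite the cross term via $p\ell'=p'$, integrate by parts, and use $2\pi$-periodicity to kill the boundary term. The paper's version is nearly identical, differing only in notation (it writes the cross term as $\int p'\hat{p}'/\hat{p}\,\dd x$ directly rather than first passing through $\hat{\ell}'$).
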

\begin{proof} We have
\begin{equation*}
\begin{split}
\E_{p} (\ell' - \hat{\ell}')^2 &= \int_{-\pi}^{\pi} \left ( \frac{p'(x)}{p(x)} - \frac{\hat{p}'(x)}{\hat{p}(x)} \right )^2 p(x) \dd x \\
&= \int_{-\pi}^{\pi} p(x) (\ell'(x))^2 \dd x + \int_{-\pi}^{\pi} p(x) (\hat{\ell}'(x))^2 \dd x - 2 \int_{-\pi}^{\pi} \frac{p'(x) \hat{p}'(x)}{\hat{p}(x)} \dd x,
\end{split}
\end{equation*}
where via integration by parts we obtain
\begin{equation*}
\begin{split}
\int_{-\pi}^{\pi} \frac{p'(x) \hat{p}'(x)}{\hat{p}(x)} \dd x &= \int_{-\pi}^{\pi} \frac{\hat{p}'(x)}{\hat{p}(x)} \dd p(x) = p(x) \hat{\ell}'(x)|^{+\pi}_{-\pi} - \int_{-\pi}^{\pi} p(x) \hat{\ell}''(x) \dd x \\
&= -\int_{-\pi}^{\pi} p(x) \hat{\ell}''(x) \dd x,
\end{split}
\end{equation*}
where the boundary difference term vanishes due to the periodicity of $2 \pi$ . It follows that 
\begin{equation*}
\E_{p} (\ell' - \hat{\ell}')^2 = \E_{p} \left \{ (\hat{\ell}')^2 + 2 \hat{\ell}'' \right \} + \E_{p} (\ell'_{p})^2.
\end{equation*}
\end{proof}

\begin{lemma}
Fix any $s>0$. Suppose $p_0$ is the true prior and $\E_s[\Delta \mid \hat{\Delta}]$ is the posterior mean under $p_0$. Suppose $\hat{p}_0$ is the estimated prior, and $\hat{\E}_{s}[\Delta \mid \hat{\Delta}]$ is the posterior mean under $\hat{p}_0$. Under \cref{assump:iid,assump:prior,assump:torus}, it holds that 
\begin{equation*}
\E_{\hat{\Delta} \sim p_s} \left (\E_{s}[\Delta \mid \hat{\Delta}] - \widehat{\E}_{s}[\Delta \mid \hat{\Delta}] \right)^2 = s^4 \E_{s} \left \{\left(\hat{\ell}'_s(\hat{\Delta}) \right)^2 + 2 \hat{\ell}''_{s}(\hat{\Delta}) \right\} + C_{p_0,s},
\end{equation*}
when every term on the RHS is finite. $C_{p_0,s}$ does not depend on $\hat{p}_0$.
\end{lemma}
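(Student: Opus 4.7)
The plan is to reduce the claim directly to the combination of Tweedie's formula (\Cref{lem:tweedie}) and Hyv\"arinen's identity (\Cref{lem:hyvarinen}); no further analysis should be needed beyond checking the hypotheses of the latter.

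First, by Tweedie's formula applied to both the true prior $p_0$ and the estimated prior $\hat{p}_0$ at the fixed scale $s$, I would write
\begin{equation*}
\E_{s}[\Delta\mid\hat{\Delta}] - \widehat{\E}_{s}[\Delta\mid\hat{\Delta}] = s^{2}\bigl(\ell'_{s}(\hat{\Delta}) - \hat{\ell}'_{s}(\hat{\Delta})\bigr),
\end{equation*}
where $\ell_{s}=\log p_{s}$ and $\hat{\ell}_{s}=\log \hat{p}_{s}$ are the log marginals on the torus. Squaring and taking expectation under $\hat{\Delta}\sim p_{s}$ gives
\begin{equation*}
\E_{\hat{\Delta}\sim p_{s}}\bigl(\E_{s}[\Delta\mid\hat{\Delta}]-\widehat{\E}_{s}[\Delta\mid\hat{\Delta}]\bigr)^{2} = s^{4}\,\E_{p_{s}}\bigl(\ell'_{s}-\hat{\ell}'_{s}\bigr)^{2}.
\end{equation*}

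Second, I would apply \Cref{lem:hyvarinen} with $p=p_{s}$ and $\hat{p}=\hat{p}_{s}$ to rewrite the score-matching discrepancy:
\begin{equation*}
\E_{p_{s}}\bigl(\ell'_{s}-\hat{\ell}'_{s}\bigr)^{2} = \E_{p_{s}}\bigl\{(\hat{\ell}'_{s})^{2}+2\hat{\ell}''_{s}\bigr\} + C_{p_{s}},
\end{equation*}
where $C_{p_{s}}=\int_{-\pi}^{\pi}p_{s}(x)(\ell'_{s}(x))^{2}\dd x$. Multiplying through by $s^{4}$ yields the stated identity with $C_{p_{0},s}:=s^{4}C_{p_{s}}$, which depends only on $p_{0}$ and $s$ (since $p_{s}$ is determined by $p_{0}$ and $s$ through \Cref{eqs:pt}), not on $\hat{p}_{0}$.

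The only non-routine step is checking that \Cref{lem:hyvarinen} applies. I would verify its three requirements in turn: (i) under \Cref{assump:torus}, both $p_{s}$ and $\hat{p}_{s}$ are defined on $[-\pi,\pi]$ as $2\pi$-periodic functions, since each is obtained by convolving the periodicized prior with a Gaussian kernel (equivalently, by \Cref{lem:heat-series}, each is given by a trigonometric series in $x$); (ii) because $s>0$, Gaussian convolution makes both $p_{s}$ and $\hat{p}_{s}$ infinitely differentiable, so the smoothness requirements hold automatically; (iii) the finiteness of the three expectations on the right-hand side is exactly the hypothesis ``when every term on the RHS is finite''. The integration-by-parts boundary term in the proof of \Cref{lem:hyvarinen} vanishes by $2\pi$-periodicity of $p_{s}$ and $\hat{\ell}'_{s}$, so no additional tail assumption is needed. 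The main thing to be careful about is confirming periodicity of $\hat{p}_{s}$, but in our parametrization $\hat{p}_{s}=C_{N}(\,\cdot\,;s^{2})$ is manifestly a trigonometric polynomial and hence periodic.
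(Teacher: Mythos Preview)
Your proposal is correct and follows essentially the same approach as the paper: apply Tweedie's formula to reduce the squared posterior-mean difference to $s^{4}\E_{p_{s}}(\ell'_{s}-\hat{\ell}'_{s})^{2}$, then invoke \Cref{lem:hyvarinen} after verifying periodicity (via \Cref{assump:torus}) and smoothness (via Gaussian convolution for $s>0$). Your hypothesis check is, if anything, slightly more explicit than the paper's.
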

\begin{proof}
Suppose $\ell_{s}$ and $\hat{\ell}_{s}$ are the logarithmic marginal densities under $p_0$ and $\hat{p}_0$ respectively. By Tweedie's formula \cref{eqs:tweedie-mean}, the square loss in approximating the posterior mean is
\begin{equation}
\left (\E_{s}[\Delta \mid \hat{\Delta}] - \widehat{\E}_{s}[\Delta \mid \hat{\Delta}] \right)^2 = s^4 \left(\ell'_{s}(\hat{\Delta}) - \hat{\ell}'_{s} (\hat{\Delta}) \right)^2. \label{eqs:square-loss-mean}
\end{equation}
For $s>0$, $p_s(x)$ and $\hat{p}_s(x)$ are infinitely many times differentiable by the property of Gaussian convolution. Further, under \cref{assump:prior,assump:torus}, $p_s$ and $\hat{p}_s$ have period $2\pi$ and are densities on the torus $[-\pi, \pi]$. Hence, we can apply \cref{lem:hyvarinen} and obtain 
\begin{equation*}
\begin{split}
\E_{\hat{\Delta} \sim p_s} \left (\E_{s}[\Delta \mid \hat{\Delta}] - \widehat{\E}_{s}[\Delta \mid \hat{\Delta}] \right)^2 &= s^4 \E_{p_s} \left(\ell'_{s}(\hat{\Delta}) - \hat{\ell}'_{s} (\hat{\Delta}) \right)^2 \\
&= s^4 \E_{s} \left \{\left(\hat{\ell}'_s(\hat{\Delta}) \right)^2 + 2 \hat{\ell}''_{s}(\hat{\Delta}) \right\} + C_{p_0,s},
\end{split}
\end{equation*}
where $C_{p_0,s} = s^4 \E_{p_s} (\ell'_s(\Delta))^2$ does not depend on $\hat{p}_0$.
\end{proof}

Since the $\E_{s}[\Delta \mid \hat{\Delta}]$ has the same unit as $s$, we scale the square loss at $s$ by $s^{-2}$ to make it dimensionless. Dropping the irrelevant constant, we use the following averaged loss for model selection
\begin{equation}
\begin{split}
& \quad \E_{s \sim Q} \E_{\hat{\Delta} \sim p_s} s^{-2} \left (\E_{s}[\Delta \mid \hat{\Delta}] - \hat{\E}_{s}[\Delta \mid \hat{\Delta}] \right)^2 - C_{Q,p_0}\\
&= \E_{s \sim Q} s^2 \E_{p_s} \left \{\left(\hat{\ell}'_s(\hat{\Delta}) \right)^2 + 2 \hat{\ell}''_{s}(\hat{\Delta}) \right\} \\
&= \frac{1}{n^{\ast}} \sum_{i=1}^{n^\ast} s_i^{\ast 2}   \left \{\left(\hat{\ell}'_{s_i^{\ast}}(\hat{\Delta}_i^{\ast}) \right)^2 + 2 \hat{\ell}''_{s_i^{\ast}} (\hat{\Delta}_i^{\ast}) \right \} + O_p(1/\sqrt{n^{\ast}}).
\end{split}
\end{equation}

\subsection{Consistency} \label{sec:consistency}
Let $\hat{p}_{0} := \widehat{C}_N(x; t=0)$ be the MLE for the prior density from solving \cref{eqs:opt}. To achieve consistency, we have to enlarge the class of densities optimized over by letting $N \rightarrow \infty$ as $n \rightarrow \infty$. This type of estimation is called Grenander's method of sieves \citep{grenander1981abstract}. For maximum likelihood, when the class of functions (``sieves'') is dense relative to the space of true density functions, the resulting estimator is consistent under weak conditions. One condition that we require is that $t$ is upper-bounded in data. 

\begin{assumption} \label{assump:t-bounded}
$t \leq t_{\max}$ almost surely for some $t_{\max} < \infty$. 
\end{assumption}

Let $\|\hat{p} - p\|_{\infty}:= \sup_{x} \|\hat{p}(x) - p(x)\|$. 

\begin{theorem} \label{thm:consistency}
Under \cref{assump:iid,assump:prior,assump:torus,assump:t-bounded}, for $N \rightarrow \infty$ as $n \rightarrow \infty$, it holds that $\|\hat{p}_0 - p_0 \|_{\infty} \rightarrow_{p} 0$. 
\end{theorem}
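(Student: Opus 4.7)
The approach is a method-of-sieves argument: as $n \to \infty$ and $N = N(n) \to \infty$, the sieve $\mathcal{P}_N := \{C_{N,f} : \kappa_N f \in \mathcal{S}_{2N}\}$ becomes dense in the class of continuous densities on $[-\pi,\pi]$, and the MLE concentrates on the truth. The proof proceeds in three stages: (i) approximation within the sieve, (ii) a log-likelihood inequality yielding KL convergence of the marginal $\hat p_s$ to $p_s$, and (iii) an inversion step that strengthens this to sup-norm convergence of $\hat p_0$ to $p_0$.

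For Step (i), I choose $\tilde f^{(N)}_\nu = p_0(x_\nu)/Z_N$ with $Z_N$ enforcing $\kappa_N \tilde f^{(N)} \in \mathcal{S}_{2N}$. By \cref{lem:fejer-interpolation} the resulting $\tilde C_N \in \mathcal{P}_N$ is a convex combination of the non-negative Fej\'er kernels $K_N(\cdot - x_\nu)/(2\pi)$. Combining standard bounds on the aliasing error between the DFT of the node samples and the true Fourier coefficients of $p_0$ with Fej\'er's theorem (Ces\`aro sums of the Fourier series of a uniformly continuous periodic function converge uniformly) yields $\|\tilde C_N - p_0\|_\infty \to 0$.

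For Step (ii), the MLE property gives $\ell_n(\hat f) \ge \ell_n(\tilde f^{(N)})$, so
\begin{equation*}
\frac{1}{n}\sum_{i=1}^n \log \frac{\hat p_{s_i}(\hat\Delta_i)}{p_{s_i}(\hat\Delta_i)} \;\ge\; \frac{1}{n}\sum_{i=1}^n \log \frac{\tilde p_{s_i,N}(\hat\Delta_i)}{p_{s_i}(\hat\Delta_i)},
\end{equation*}
where $\tilde p_{s,N}$ is the marginal at scale $s$ under prior $\tilde C_N$. The RHS converges in probability to $-\E_s \KL(p_s \| \tilde p_{s,N})$, which is $o(1)$ by Step (i) together with \cref{assump:t-bounded} (the Gaussian convolutions are uniformly regular on the torus). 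A uniform law of large numbers for $f \mapsto \ell_n(f)/n$ over $\mathcal P_N$---a simplex of dimension $2N$ amenable to standard bracketing entropy bounds---combined with the Jensen-type inequality $\E_s \E_{p_s}\log(\hat p_s / p_s) \le 0$ applied to the sieve MLE, then gives $\E_s \KL(p_s \| \hat p_s) \to_p 0$ provided $N(n)$ grows sufficiently slowly.

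For Step (iii), Pinsker's inequality yields $\|\hat p_{s_\ast} - p_{s_\ast}\|_{L^2} \to_p 0$ for almost every $s_\ast$ in the support of $Q$. Writing $p_0 \sim \sum_k c_k e^{ikx}$ and $\hat p_0 \sim \sum_{|k|\le N} \tilde c_k e^{ikx}$, Parseval gives
\begin{equation*}
\sum_k |c_k - \tilde c_k|^2 e^{-k^2 s_\ast^2} \to_p 0,
\end{equation*}
which controls the low-frequency modes (where the inversion factor $e^{k^2 s_\ast^2/2}$ is benign). Combining this with the tail bound $\sum_{|k|>K}|c_k| \to 0$ from uniform continuity of $p_0$, the sieve truncation $\tilde c_k = 0$ for $|k|>N$, and a careful balance of $K$ against $N$ and $n$, one concludes $\|\hat p_0 - p_0\|_\infty \to_p 0$.

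\textbf{Main obstacle.} The crux is Step (iii): heat diffusion smooths aggressively, so inverting it to recover $p_0$ from $p_{s_\ast}$ is ill-posed, and any $L^2$ error in the marginal amplifies the $k$-th Fourier mode by $e^{k^2 s_\ast^2/2}$. Getting sup-norm (rather than merely $L^2$ or TV) convergence of the prior therefore requires carefully trading the sieve size $N(n)$, a frequency cutoff, and the sample size. A secondary subtlety is establishing the uniform LLN over the growing-dimensional sieve when the log-density can tend to $-\infty$ near zeros of the Fej\'er mixture; this is handled by a truncation or bracketing-entropy argument in the spirit of Geman--Hwang and Wong--Shen.
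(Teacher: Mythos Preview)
Your route is quite different from the paper's, and your Step (iii) is where they diverge most sharply. The paper never performs an inversion from marginal convergence back to the prior. Instead it applies a generic sieve M-estimation consistency result (Chen, 2007, p.~5590) directly on the \emph{prior} space $(\Theta,\|\cdot\|_\infty)$, with population criterion $Q(\theta)=\E\log p_{\theta,t}(X)$. It then checks: identification ($Q$ is uniquely maximized at $\theta_0$, via injectivity of Gaussian deconvolution on the torus); sieve denseness (essentially your Step (i), carried out via the Fej\'er/Jackson interpolant as you propose); compactness of each $\Theta_N$ in $\|\cdot\|_\infty$; and a Glivenko--Cantelli property on each fixed $\Theta_N$ by a finite-bracketing argument that exploits the uniform positive lower bound $p_{\theta,t}(x)\ge\phi_{\sqrt{t_{\max}}}(2\pi)>0$ under \cref{assump:t-bounded}. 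The conclusion $\|\hat p_0-p_0\|_\infty\to_p 0$ then falls out of the abstract theorem in the metric one started with, so the ill-posed deconvolution you flag as the ``main obstacle'' is simply bypassed.

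Your Step (iii), as written, has a real gap. You assert that $\sum_{|k|>K}|c_k|\to 0$ follows from uniform continuity of $p_0$; it does not. Uniform continuity gives uniform convergence of Ces\`aro means (Fej\'er), not absolute summability of the Fourier coefficients---the latter is membership in the Wiener algebra, strictly stronger than anything \cref{assump:prior} provides. Without that summability you cannot pass from mode-wise control to sup-norm control via $\sum_k|c_k-\tilde c_k|$. More broadly, turning $L^2$ (or KL) control of $\hat p_{s_\ast}$ into sup-norm control of $\hat p_0$ is exactly the ill-posed heat inversion, and under only uniform continuity---with no H\"older or Sobolev exponent to trade against the blow-up $e^{k^2 s_\ast^2/2}$, and with no rate restriction on $N(n)$ in the theorem statement---the ``careful balance'' you gesture at is not obviously achievable. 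The paper's route avoids this difficulty entirely; I would recommend adopting it.
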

The consistency can be shown by verifying the generic conditions from \citet[Page 5590]{chen2007large}. We delegate the proof to \cref{apx:sec:consistency}. It is worth mentioning that the sieve MLE is also in general asymptotically efficient in the Fisher sense; see \citet{shen1997methods}. 

Despite the fact that uniform convergence of $\hat{p}_0$ need not imply the convergence of $\hat{p}'_0$ as $n \rightarrow \infty$ (see \citet[
Example 7.5]{rudin1964principles}), by the property of Gaussian convolution, we in fact have uniform convergence of the estimated posterior mean functions. 

\begin{corollary} \label{cor:unif-conv}
Given any $t>0$, $\|\hat{p}_t - p_t\|_{\infty} \rightarrow_{p} 0$ and $\|\hat{p}_t' - p_t'\|_{\infty} \rightarrow_{p} 0$. 
\end{corollary}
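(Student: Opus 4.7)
The plan is to obtain both statements directly from \cref{thm:consistency} by exploiting the convolution representation from \cref{assump:torus}. Under the torus data-generating model, both the true and the estimated marginals are Gaussian convolutions of the (periodicized) prior:
\begin{equation*}
p_t(x) = \int_{-\infty}^{\infty} \bar{p}_0(u)\,\phi_{\sqrt{t}}(x-u)\,\dd u, \qquad \hat{p}_t(x) = \int_{-\infty}^{\infty} \hat{\bar{p}}_0(u)\,\phi_{\sqrt{t}}(x-u)\,\dd u,
\end{equation*}
where $\hat{\bar{p}}_0$ denotes the periodic extension of the MLE $\hat{p}_0$. Since both $p_0$ and $\hat{p}_0$ are $2\pi$-periodic functions on the real line, $\|\hat{\bar{p}}_0 - \bar{p}_0\|_\infty = \|\hat{p}_0 - p_0\|_\infty$, so \cref{thm:consistency} can be invoked in either form.

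First I would bound $\|\hat{p}_t - p_t\|_\infty$. By linearity of the convolution and $\int \phi_{\sqrt{t}} = 1$,
\begin{equation*}
|\hat{p}_t(x) - p_t(x)| \leq \|\hat{\bar{p}}_0 - \bar{p}_0\|_\infty \int \phi_{\sqrt{t}}(x-u)\,\dd u = \|\hat{p}_0 - p_0\|_\infty,
\end{equation*}
uniformly in $x$, so the first claim follows from \cref{thm:consistency}.

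For the derivative, I would differentiate under the integral sign, which is justified for fixed $t>0$ because $\phi_{\sqrt{t}}'(y) = -(y/t)\phi_{\sqrt{t}}(y)$ is absolutely integrable and dominated on any compact $x$-neighborhood. This gives $\hat{p}_t'(x) - p_t'(x) = \int (\hat{\bar{p}}_0 - \bar{p}_0)(u)\,\phi_{\sqrt{t}}'(x-u)\,\dd u$, and hence
\begin{equation*}
\|\hat{p}_t' - p_t'\|_\infty \leq \|\hat{p}_0 - p_0\|_\infty \int |\phi_{\sqrt{t}}'(y)|\,\dd y = \|\hat{p}_0 - p_0\|_\infty \cdot \sqrt{\tfrac{2}{\pi t}},
\end{equation*}
where the last equality follows from $\int|\phi_{\sqrt{t}}'(y)|\,\dd y = t^{-1}\E|\sqrt{t}\,Z| = \sqrt{2/(\pi t)}$ for $Z\sim\N(0,1)$. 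The multiplicative constant is finite for any fixed $t>0$, and \cref{thm:consistency} again yields the conclusion.

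The argument contains no serious obstacle; the only points requiring care are (i) identifying the sup norm of the periodic extension with the sup norm on $[-\pi,\pi]$, which is immediate, and (ii) justifying differentiation under the integral, which rests on the rapid decay of $\phi_{\sqrt{t}}'$ for fixed $t>0$. Note that this strategy relies essentially on smoothing by the Gaussian kernel; without $t>0$, no claim about $\hat{p}_0'$ can be drawn from $\|\hat{p}_0 - p_0\|_\infty \to_p 0$ alone, which is exactly the cautionary remark preceding the corollary.
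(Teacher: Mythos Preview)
Your proposal is correct and follows essentially the same route as the paper: both bound $|\hat p_t(x)-p_t(x)|$ and $|\hat p_t'(x)-p_t'(x)|$ by $\|\hat p_0-p_0\|_\infty$ times the $L^1$ norm of $\phi_{\sqrt t}$ and $\phi_{\sqrt t}'$ respectively, then invoke \cref{thm:consistency}. Your version is in fact slightly more explicit, computing the constant $\int|\phi_{\sqrt t}'|=\sqrt{2/(\pi t)}$ and noting why differentiation under the integral is justified, whereas the paper leaves the bound as $s^{-2}\int\phi_s(z)|z|\,\dd z$; these are the same quantity.
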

\begin{proof} With $s = \sqrt{t}$, we have
\begin{equation*}
\begin{split}
\|\hat{p}_t - p_t\|_{\infty} &= \sup_{x} \left |\int(\hat{p}_0(u) - p_0(u)) \phi_{s}(x-u) \dd u \right | \\
& \leq \|\hat{p}_0 - p_0\|_{\infty} \sup_{x}\left| \int \phi_{t}(x-u) \dd u \right| = \|\hat{p}_0 - p_0\|_{\infty} \rightarrow_{p} 0,
\end{split}
\end{equation*}
and 
\begin{equation} \label{eqs:unif-conv}
\begin{split}
\|\hat{p}_t' - p_t'\|_{\infty} &= \sup_{x} \left|\int (\hat{p}_0(u) - p_0(u)) \phi'_s(x-u) \dd u \right| \\
& \leq \|\hat{p}_0 - p_0\|_{\infty} \sup_{x} s^{-2} \int \phi_{s}(x-u) |x-u| \dd u \\
& = \|\hat{p}_0 - p_0\|_{\infty} s^{-2} \int \phi_s(z) |z| \dd z \rightarrow_{p} 0.
\end{split}
\end{equation}
\end{proof}
Recall that $\E_{t} (\Delta \mid \hat{\Delta}=\cdot)$ and $\widehat{\E}_{t} (\Delta \mid \hat{\Delta}=\cdot)$ are posterior mean functions $[-L,L] \rightarrow [-L,L]$ under $p_0$ and $\hat{p}_0$ respectively. 
\begin{lemma} \label{lem:unif-consistency-posterior-mean}
Given any $t > 0$, under the same conditions as \cref{thm:consistency}, $\|\widehat{\E}_{t} (\Delta \mid \hat{\Delta}=\cdot) - \E_{t} (\Delta \mid \hat{\Delta} = \cdot)\|_{\infty} \rightarrow_{p} 0$.
\end{lemma}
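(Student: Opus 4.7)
The plan is to reduce the claim to uniform convergence of the score $\ell_t' = p_t'/p_t$ via Tweedie's formula \cref{eqs:tweedie-mean}. With $s = \sqrt{t}$, both posterior mean functions have the form $\hat{\Delta} + t \ell_t'(\hat{\Delta})$ and $\hat{\Delta} + t \hat{\ell}_t'(\hat{\Delta})$, so the $\hat{\Delta}$ terms cancel and
\begin{equation*}
\left\|\widehat{\E}_{t}[\Delta \mid \hat{\Delta}=\cdot] - \E_{t}[\Delta \mid \hat{\Delta}=\cdot] \right\|_{\infty} = t \left\|\frac{\hat{p}_t'}{\hat{p}_t} - \frac{p_t'}{p_t}\right\|_{\infty}.
\end{equation*}
The task is therefore to show this ratio difference converges uniformly in probability to zero.

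Next I would use the algebraic identity
\begin{equation*}
\frac{\hat{p}_t'}{\hat{p}_t} - \frac{p_t'}{p_t} = \frac{(\hat{p}_t' - p_t')\, p_t - p_t' (\hat{p}_t - p_t)}{p_t \hat{p}_t},
\end{equation*}
and control numerator and denominator separately. The numerator is handled directly by \cref{cor:unif-conv}: both $\|\hat{p}_t - p_t\|_\infty$ and $\|\hat{p}_t' - p_t'\|_\infty$ tend to zero in probability, while $\|p_t\|_\infty \leq 1/\sqrt{2\pi t}$ and $\|p_t'\|_\infty$ are finite (the latter by differentiating under the convolution $p_t = \bar p_0 * \phi_{\sqrt{t}}$ and noting $\|\phi_{\sqrt{t}}'\|_\infty < \infty$).

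The key issue, and the main technical point, is a deterministic positive lower bound on the denominator. For $t > 0$, the convolution $p_t(x) = \int \bar{p}_0(u)\phi_{\sqrt{t}}(x-u)\,du$ is strictly positive everywhere, and continuity on the compact torus yields $m_t := \min_{x \in [-L,L]} p_t(x) > 0$. Then, on the event $A_n := \{\inf_x \hat{p}_t(x) \geq m_t/2\}$, the denominator $p_t \hat{p}_t \geq m_t^2/2$ uniformly. Since $\|\hat{p}_t - p_t\|_\infty \to_p 0$ by \cref{cor:unif-conv}, one has $P(A_n) \to 1$.

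Putting the pieces together, on $A_n$ the displayed identity gives
\begin{equation*}
\left\|\frac{\hat{p}_t'}{\hat{p}_t} - \frac{p_t'}{p_t}\right\|_{\infty} \leq \frac{2}{m_t^2}\Bigl(\|\hat{p}_t' - p_t'\|_\infty \,\|p_t\|_\infty + \|p_t'\|_\infty\, \|\hat{p}_t - p_t\|_\infty \Bigr) \to_p 0,
\end{equation*}
which combined with $P(A_n) \to 1$ yields the unconditional convergence in probability. Multiplying by the fixed constant $t$ recovers the claim. The only nontrivial step is the uniform positivity of $\hat{p}_t$, which, thanks to compactness of the torus and the smoothing effect of Gaussian convolution at $t > 0$, follows cleanly from \cref{cor:unif-conv}; no additional regularity of $\hat p_0$ beyond what the sieve MLE provides is required.
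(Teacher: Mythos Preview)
Your proof is correct and follows essentially the same route as the paper's: reduce via Tweedie's formula to the score difference $\hat{p}_t'/\hat{p}_t - p_t'/p_t$, then control it using \cref{cor:unif-conv} together with a uniform positive lower bound on $p_t$ (the paper cites \cref{eqs:marginal-dens-lb}) and a uniform upper bound on $|p_t'|$. The only differences are cosmetic --- you use an exact algebraic identity and an event $A_n$ to handle the random denominator $\hat{p}_t$, whereas the paper uses a first-order Taylor expansion of $(a,b)\mapsto a/b$; one small slip is that the bound $\|p_t\|_\infty \le 1/\sqrt{2\pi t}$ holds for the non-toric $p_t^{\ast}$ but not for the periodicized $p_t$ of \cref{eqs:pt}, though all you actually need is $\|p_t\|_\infty < \infty$, which is immediate from $\|p_t\|_\infty \le \|p_0\|_\infty$.
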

\begin{proof}
By \cref{lem:tweedie}, it suffices to show 
\begin{equation*}
\sup_{x \in [-L,L]} \left |\frac{\hat{p}_t'(x)}{\hat{p}_t(x)} - \frac{p_t'(x)}{p_t(x)} \right | \rightarrow_{p} 0.
\end{equation*}
For any $x \in [-L,L]$, by Taylor expansion on $f(a,b) = a/b$ we have
\begin{equation*}
\begin{split}
\left |\frac{\hat{p}_t'(x)}{\hat{p}_t(x)} - \frac{p_t'(x)}{p_t(x)}\right| &= \left|\frac{1}{p_t(x)} (\hat{p}'_t(x) - p'_t(x)) - \frac{p'_t(x)}{p_t^2(x)} (\hat{p}_t(x) - p_t(x)) \right| \\
& \quad \quad + o\left(\|\hat{p}'_t - \hat{p}'_t\|_{\infty} + \|\hat{p}_t - p_t\|_{\infty}\right).
\end{split}
\end{equation*}
Note that on $[-L,L]$, $p_t$ is lower-bounded by a positive constant (see \cref{eqs:marginal-dens-lb} in \cref{apx:sec:consistency}), and $|p_t'|$ is upper bounded by a positive constant (a computation similar to \cref{eqs:unif-conv}). The previous display is thus upper bounded by
\begin{equation*}
\left \|\frac{\hat{p}_t'(x)}{\hat{p}_t(x)} - \frac{p_t'(x)}{p_t(x)}\right\|_{\infty} \leq c_1 \|\hat{p}_t' - p_t'\|_{\infty} + c_2 \|\hat{p}_t - p_t\|_{\infty} + o\left(\|\hat{p}'_t - \hat{p}'_t\|_{\infty} + \|\hat{p}_t - p_t\|_{\infty}\right)
\end{equation*}
for $0 < c_1, c_2 < \infty$. The result then follows from \cref{cor:unif-conv}.
\end{proof} \section{Numerical results} \label{sec:numerical}
We present some results on simulated examples and real large-scale experiments. 

\subsection{Simulations} In the following we consider several simulation studies where $p_0$ is chosen to be a known prior. 

\subsubsection{Uniform distribution} We set $p_0 = \unif(-4,4)$. The domain half-length is chosen to be $L = 8$. We simulate $n=2,000$ data points $(\hat{\Delta}_i, s_i)$ with $s_i \sim \unif(0,1)$. The reader is referred to \citet{wager2014geometric} for a similar example where $s_i$ is fixed to 1. To select $N$, we use Monte Carlo cross-validation that randomly splits between the training set ($90\%$ of data) and the test set ($10\%$ of data). The random split is repeated 100 times. We select $N$ from options $\{4, 6, 12, 16, 24, 32, 48, 64\}$. See \cref{tab:model-sel} for the results based on the two model-selection criteria proposed in \cref{sec:model-sel}; $N=32$ is selected by both criteria (the highest predicted log-likelihood and the lowest score-matching loss). 

\begin{table}[!htb]
\caption{Model selection results (standard errors are shown in brackets)}
\label{tab:model-sel}
\scalebox{0.75}{
\begin{tabular}{cccccccccccc}
\toprule
Dataset & Method & $N$ & 4 & 6 & 12 & 16 & 24 & 32 & 48 & 64 \\
\midrule
\multirow{4}{*}{Uniform} & \multirow{4}{*}{\textsf{Spectral}} & \multirow{2}{*}{log-likelihood} & -2.264 & -2.240 & -2.213 & -2.205 & -2.197 & \textbf{-2.193} & -2.197 & -2.199 \\
 &  & & (0.002) & (0.002) & (0.002) & (0.003) & (0.003) & (0.003) & (0.003) & (0.003) \\
& & \multirow{2}{*}{score-matching} & -0.082 & -0.090 & -0.101 & -0.105 & -0.107 & \textbf{-0.114} & -0.108 & -0.101 \\
& & & (0.001) & (0.002) & (0.002) & (0.002) & (0.003) & (0.002) & (0.003) & (0.004) \\
\midrule
 & \multirow{4}{*}{\textsf{Spectral}} & \multirow{2}{*}{log-likelihood} & -2.149 & -2.075 & -1.992 & -1.980 & -1.973 & -1.962 & \textbf{-1.955} & -1.967 \\
Mixture of &  & & (0.003) & (0.003) & (0.003) & (0.003) & (0.003) & (0.004) & (0.004) & (0.004) \\
2 Gaussians & & \multirow{2}{*}{score-matching} & -0.139 & -0.205 & -0.261 & -0.261 & -0.273 & -0.286 & \textbf{-0.294} & -0.284 \\
& & & (0.004) & (0.002) & (0.004) & (0.004) & (0.004) & (0.004) & (0.004) & (0.005) \\
\midrule
& & $N$ & 512 & 1024 & 1536 & 2048 & 3072 & 4096 \\
\midrule
\multirow{4}{*}{Amazon} & \multirow{4}{*}{\textsf{Spectral}} & \multirow{2}{*}{log-likelihood} & 0.070 & 0.072 & 0.072 & \textbf{0.074} & 0.072 & 0.072 \\
 &  & & (0.001) & (0.001) & (0.001) & (0.001) & (0.001) & (0.001) \\
& & \multirow{2}{*}{score-matching} & -0.73 & -0.75 & -0.75 & \textbf{-0.77} & -0.71 & -0.74 \\
& & & (0.01) & (0.01) & (0.01) & (0.01) & (0.01) & (0.01) \\
\midrule
& & $K$ & 1 & 2 & 3 & 4 & 5 \\
\midrule
\multirow{4}{*}{Amazon} &  & \multirow{2}{*}{log-likelihood} & -0.148 & 0.068 & \textbf{0.072} & 0.069 & 0.069 \\
 & \textsf{Gaussian} & & (0.001) & (0.001) & (0.001) & (0.001) & (0.001) \\
& \textsf{mixture} & \multirow{2}{*}{score-matching} & -0.12 & -0.75 & \textbf{-0.78} & \textbf{-0.78} & \textbf{-0.78}\\
& & & (0.002) & (0.01) & (0.01) & (0.01) & (0.01) \\
\bottomrule
\end{tabular}}
\end{table}

\begin{figure}[!hb]
\centering
\includegraphics[width=0.3\textwidth]{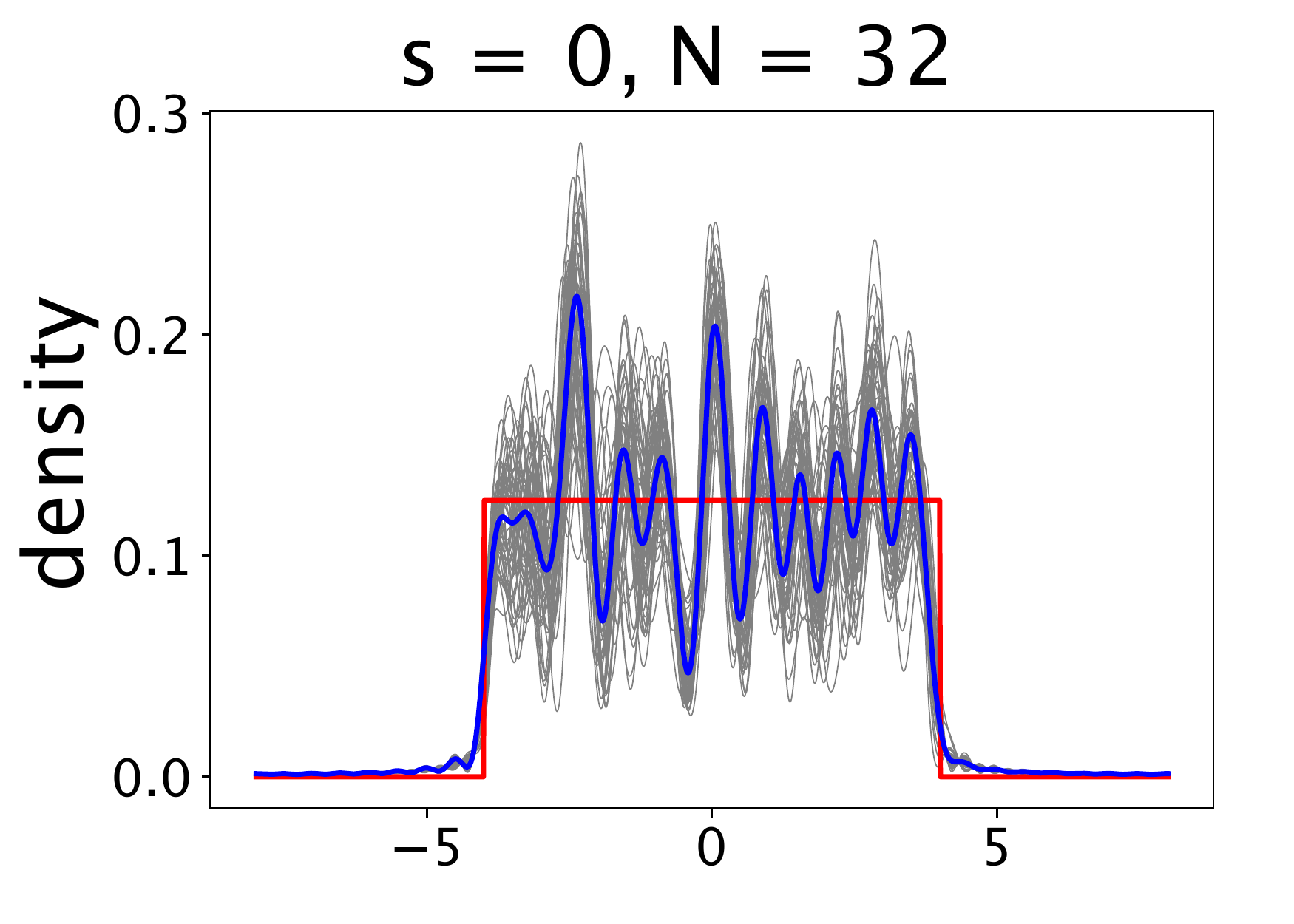}
\includegraphics[width=0.3\textwidth]{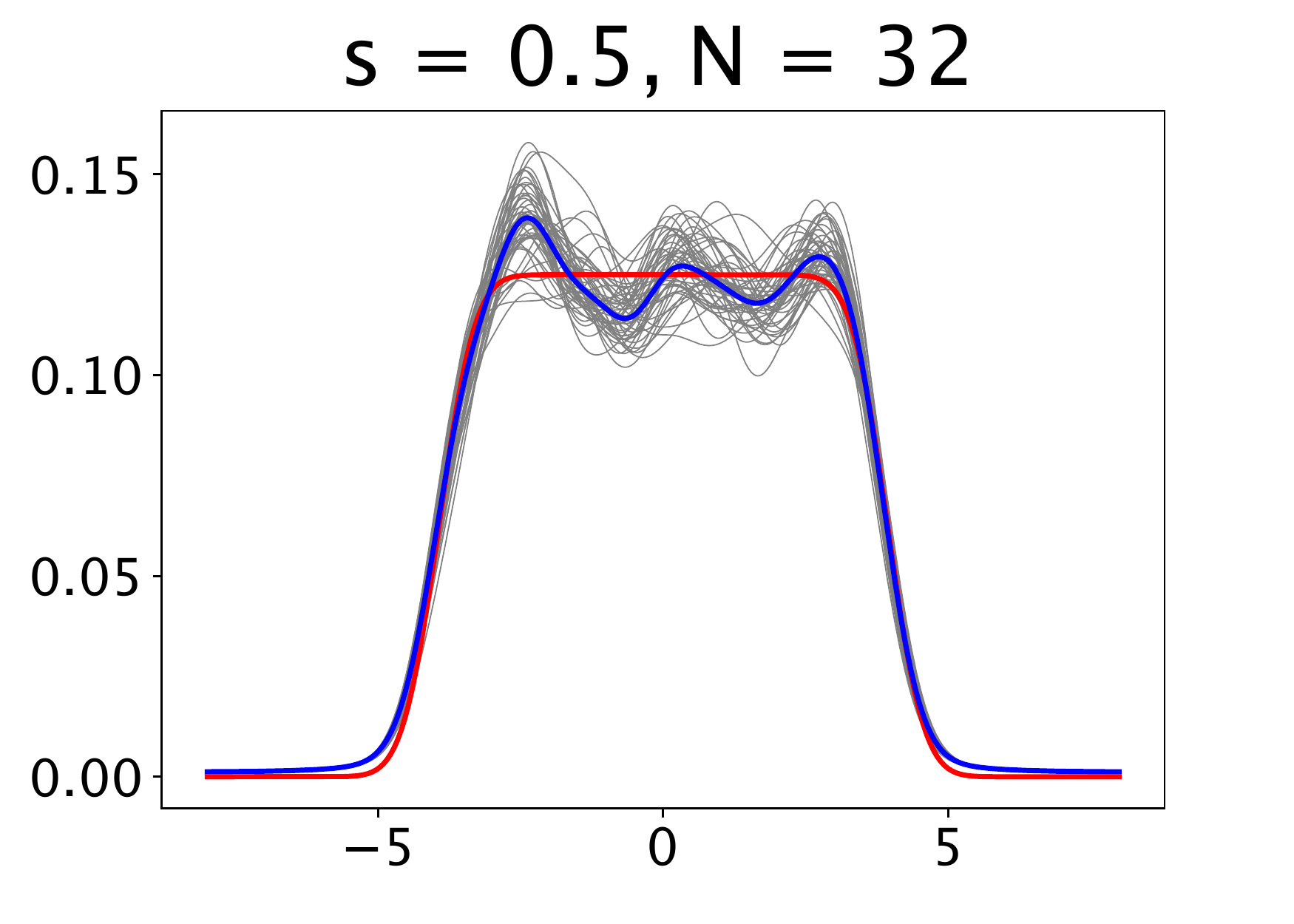}
\includegraphics[width=0.3\textwidth]{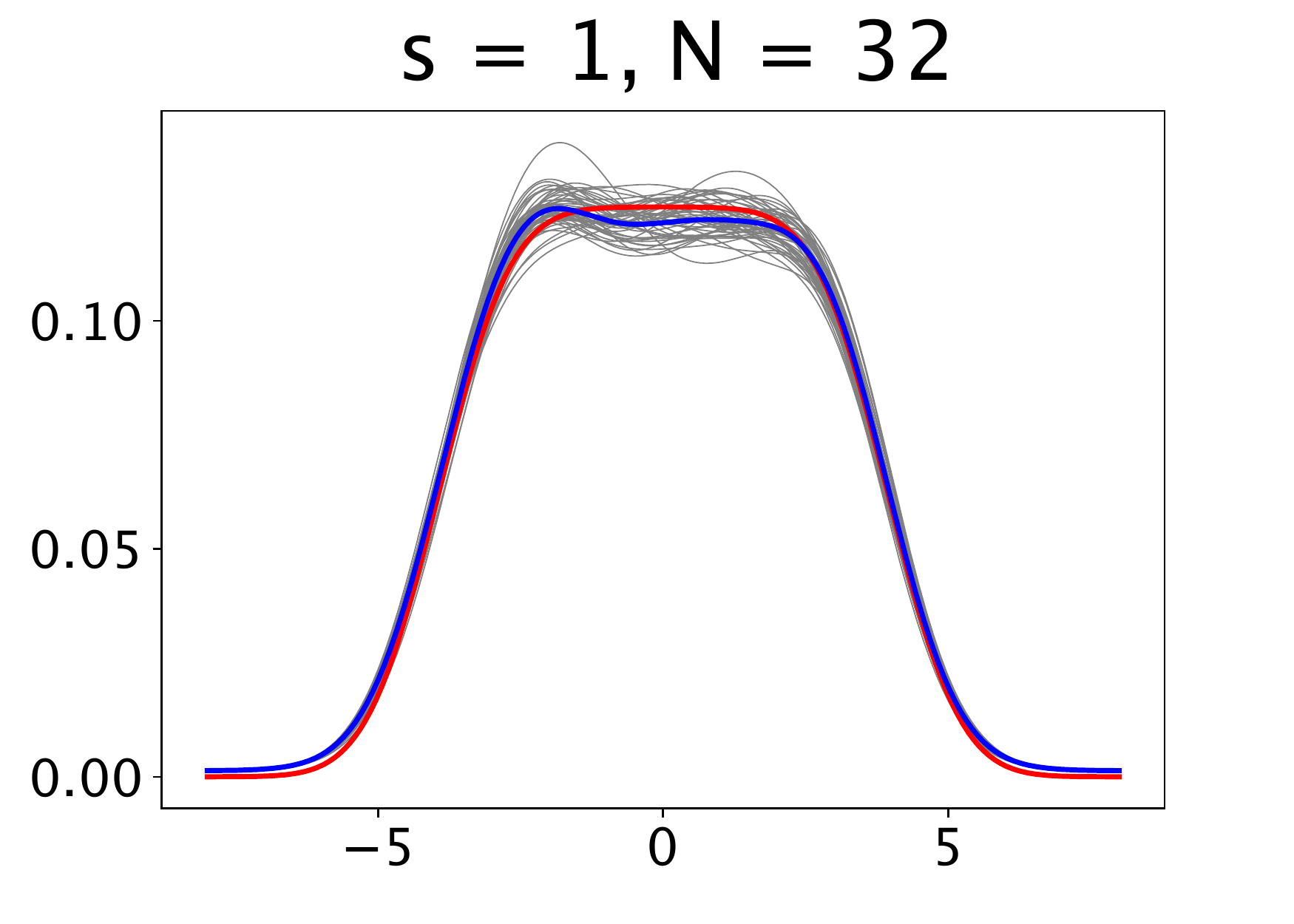}
\includegraphics[width=0.29\textwidth]{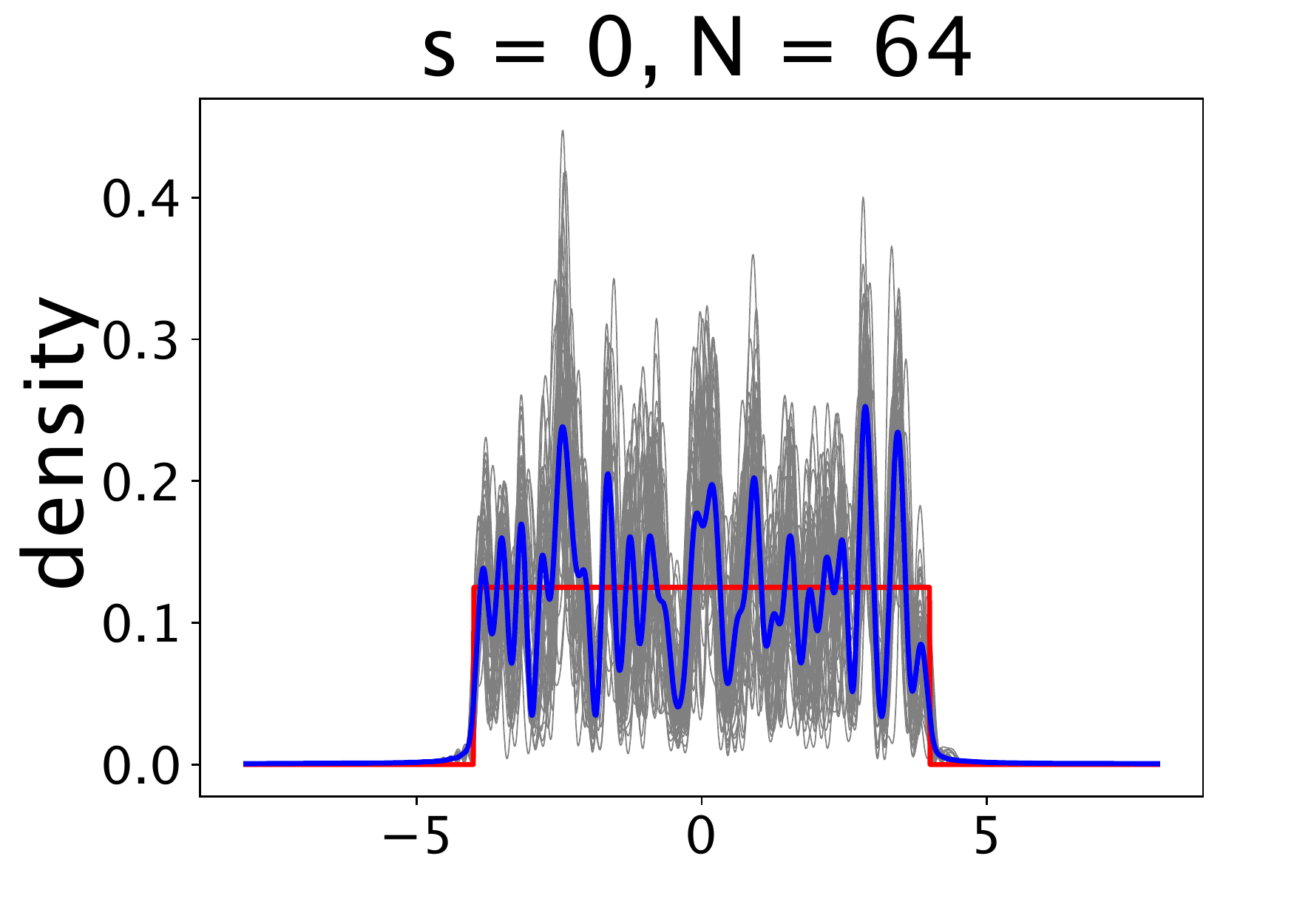}
\includegraphics[width=0.3\textwidth]{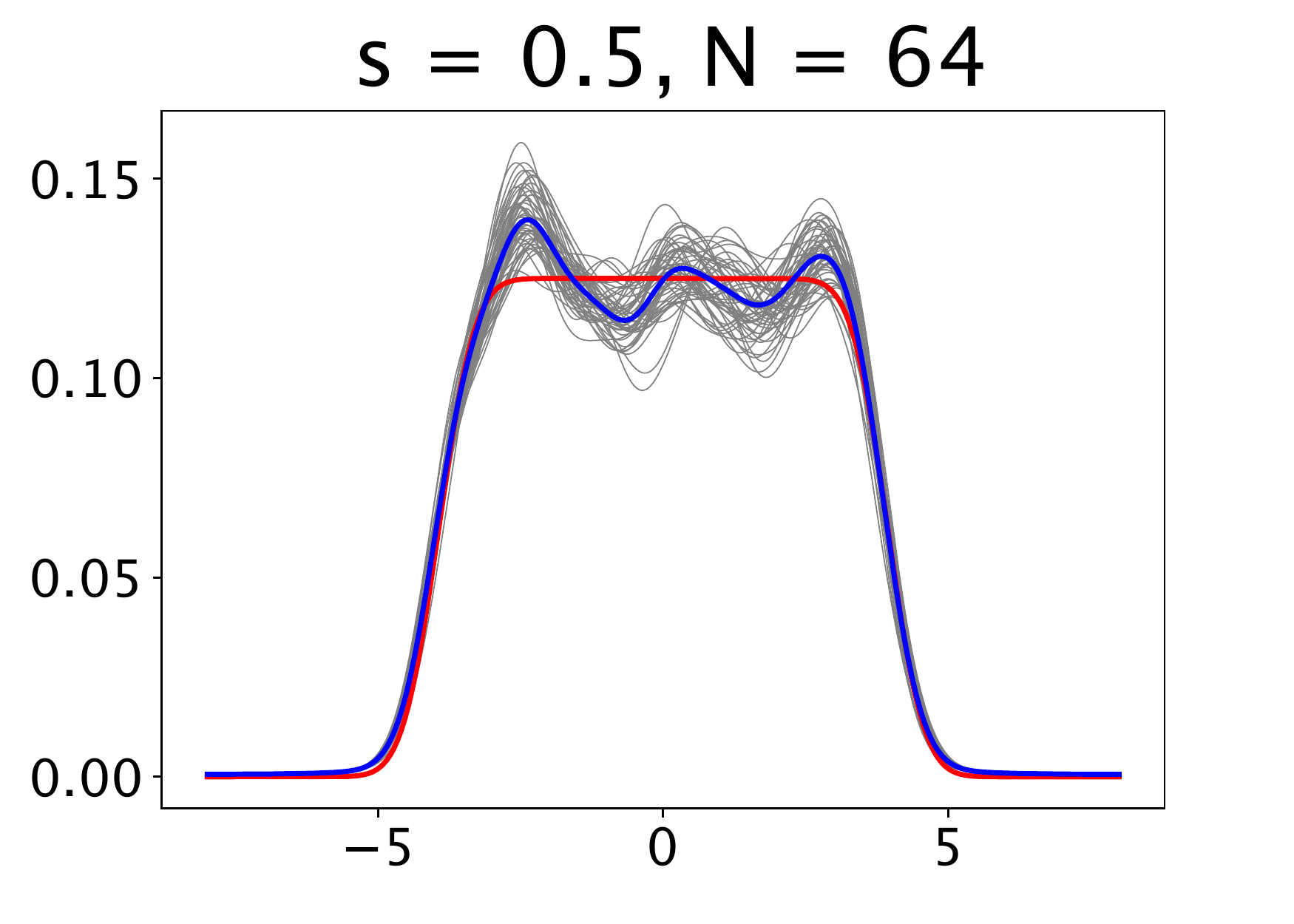}
\includegraphics[width=0.3\textwidth]{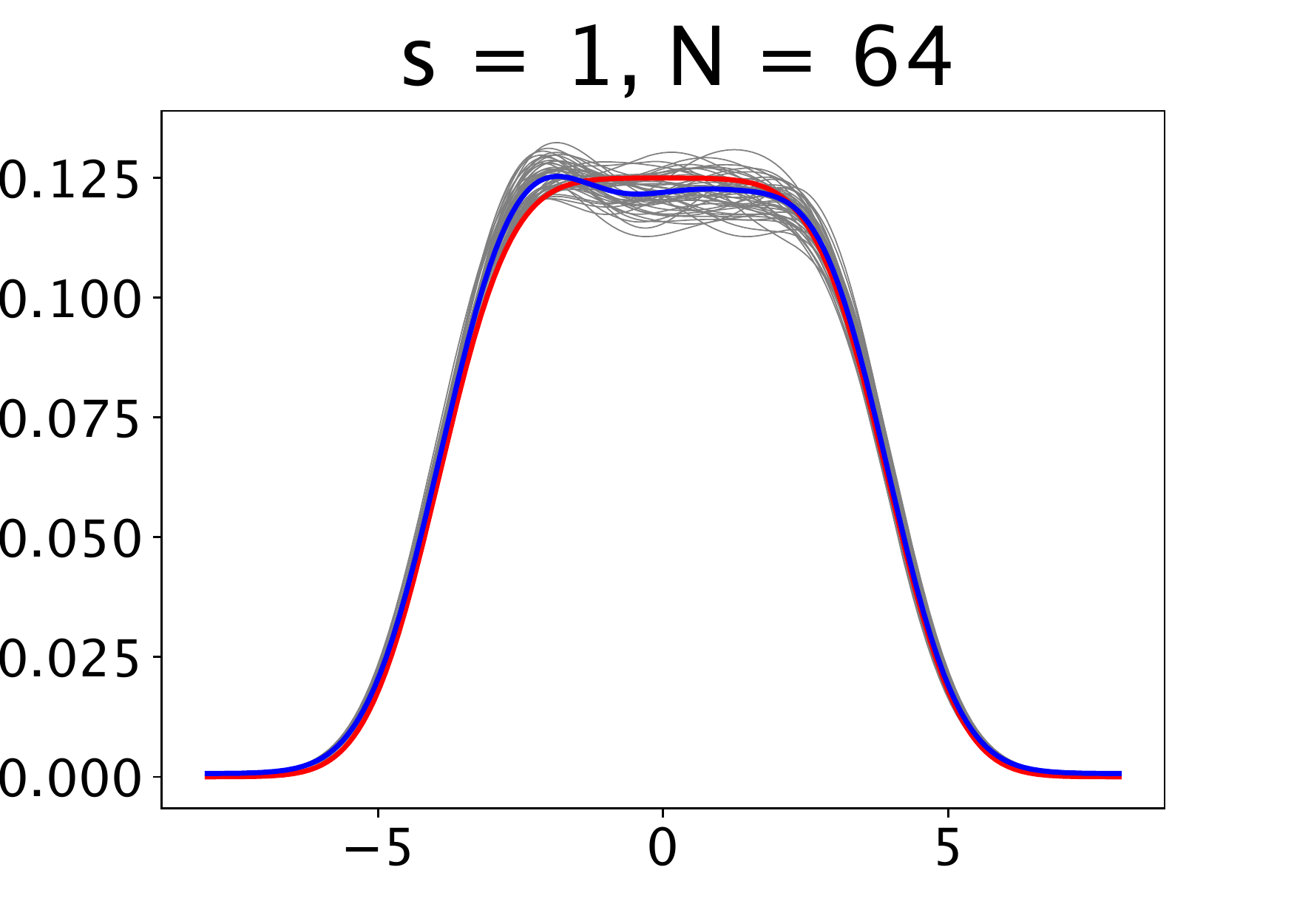}
\caption{Comparison of estimates under $p_0 = \unif(-4,4)$. Estimated (blue) versus true densities (red) $p_s(\cdot)$ for $s=0,0.5,1$ are shown (top: using $N=32$ as selected from cross validation, bottom: using a larger $N=64$). The grey curves are 50 bootstrap estimates. The error in high-frequency components diminishes quickly as $s$ grows.}
\label{fig:simu-unif}
\end{figure}

\Cref{fig:simu-unif} compares the estimates from $N=32$ (chosen by cross-validation) and from a larger $N=64$. Choosing $N=64$ introduces more high-frequency oscillations in the estimate for $p_0$ (under-smoothed). The difference between the two estimates diminishes as we compare $\hat{p}_s$ for a larger $s$. The high-frequency errors are damped very quickly; see \cref{lem:heat-series}. 

\subsubsection{Mixture of Two Gaussians}
We consider a mixture of two Gaussians
\begin{equation*}
p_0 = 0.3 \N(-1.5, 0.2^2) + 0.7 \N(2, 1.0^2)
\end{equation*}
and set $L = 8$. We simulate 1,000 samples with $s_i \sim \unif(0, 1.5)$. See \cref{fig:mog-data} for the marginal densities and the scatterplot of a simulated dataset. Again we cross-validate on the order of trigonometric polynomials $N$. \cref{tab:model-sel} shows the predicted log-likelihood and the score-matching loss computed from held-out data ($10\%$ of samples), based on 200 random splits of the same dataset. $N=48$ is selected in terms of both criteria. 
In \cref{fig:simu-mog}, we compare the estimates from $N=6$ (too small, over-smoothed) and from $N=48$ (selected by cross-validation) from 50 realizations of the sampling distribution. As a reference, we also include estimates from fitting the true model.

\begin{figure}[!hb]
\centering
\includegraphics[width=0.45\textwidth]{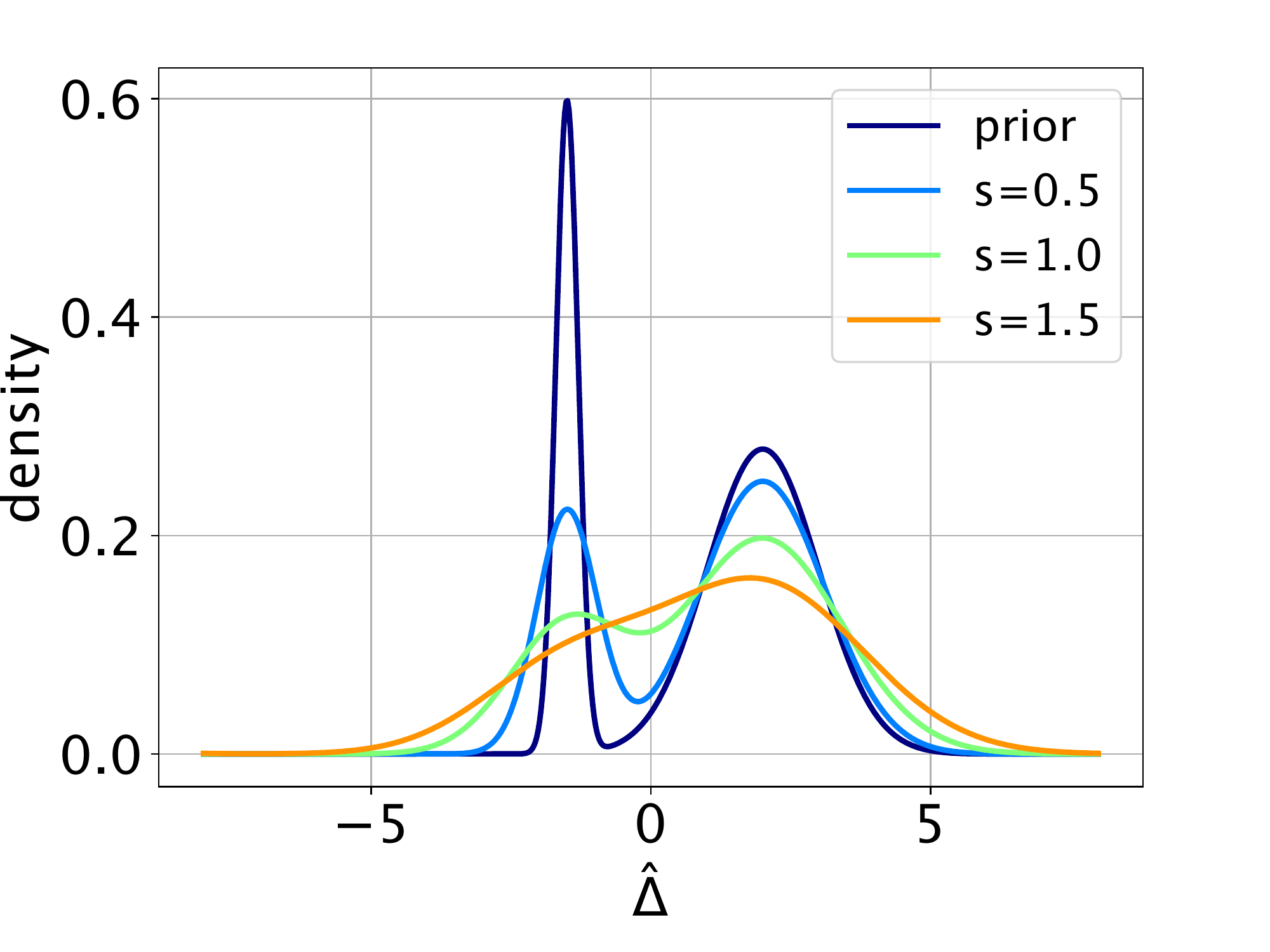}
\includegraphics[width=0.45\textwidth]{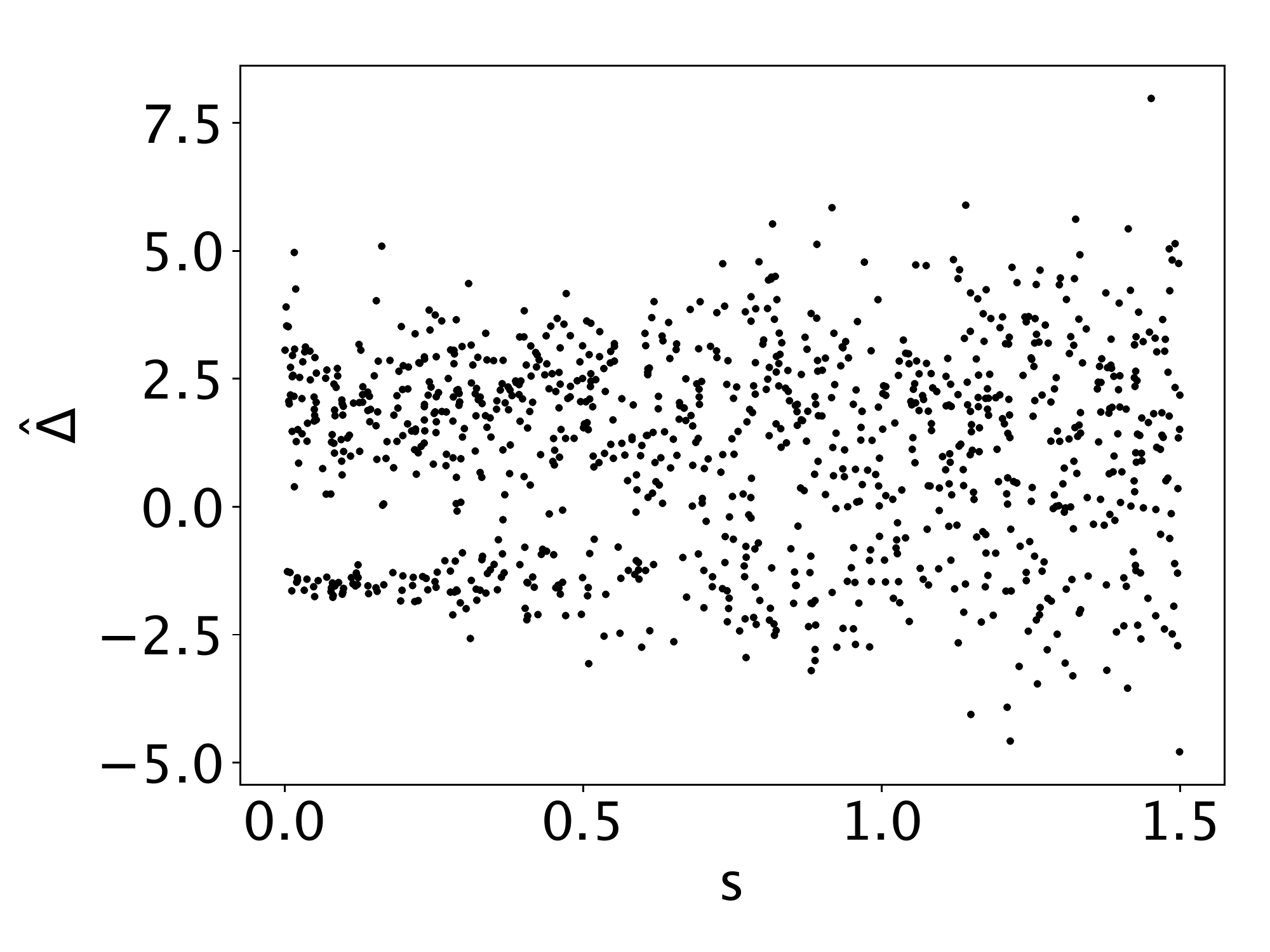}
\caption{The marginal densities $p_s(\hat{\Delta})$ (left) and the simulated data points $(s_i, \hat{\Delta}_i)$, where $p_0$ is a mixture of two Gaussians.}
\label{fig:mog-data}
\end{figure}

\begin{figure}[!htb]
\centering
\includegraphics[width=0.3\textwidth]{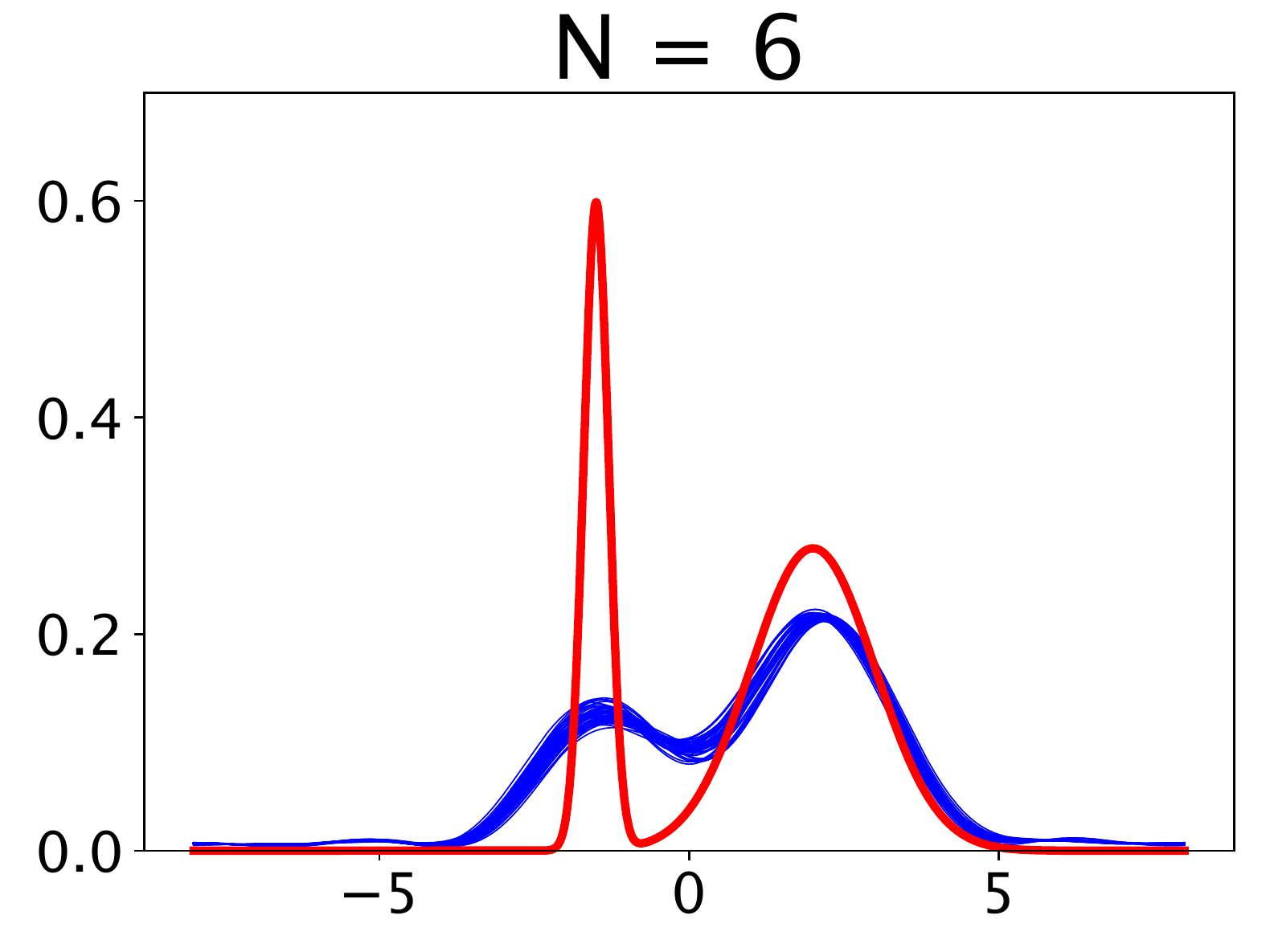}
\includegraphics[width=0.3\textwidth]{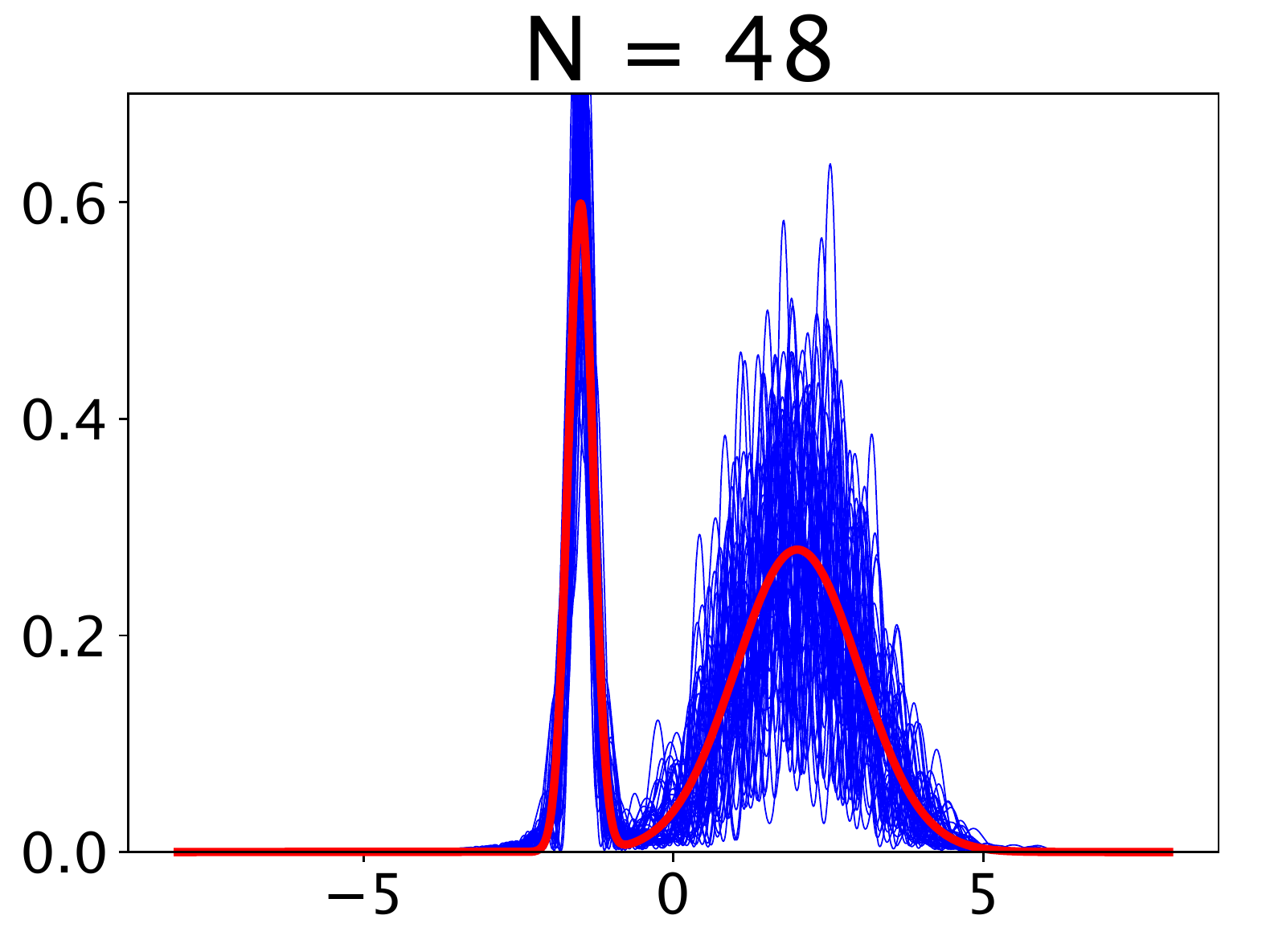}
\includegraphics[width=0.3\textwidth]{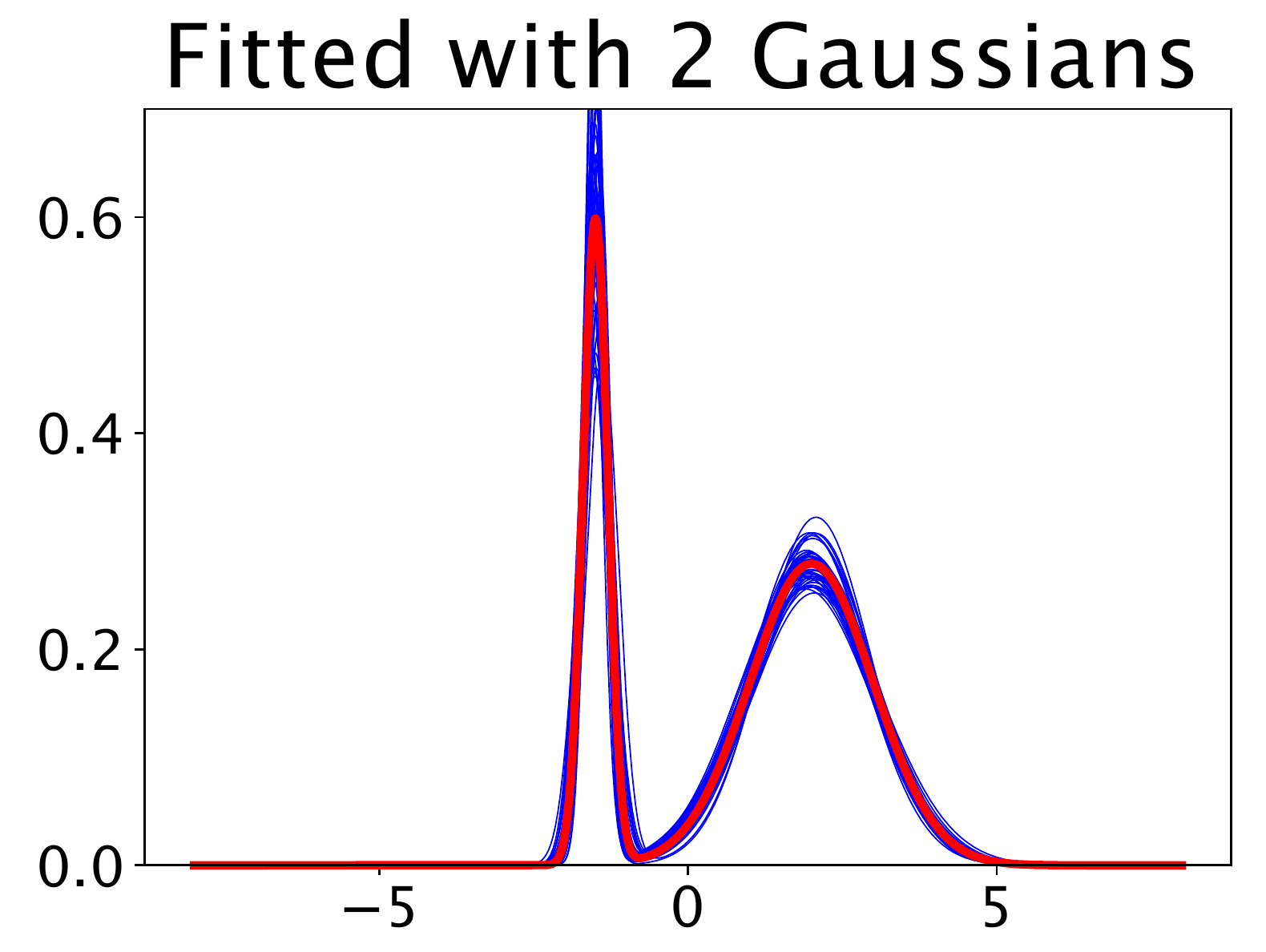}
\caption{Estimates compared to the true density when $p_0$ is a mixture of two Gaussians. The first two plots come from our method with $N=6$ (over-smoothed) and $N=48$ (selected by cross-validation). The last plot corresponds to fitting the true model. To illustrate the performance of the estimators over hypothetical replications, the blue curves are estimated from 50 simulated datasets, each consisting of 1,000 data points.}
\label{fig:simu-mog}
\end{figure}

\subsection{Large-scale experimentation at Amazon} We apply our method to large-scale A/B tests run at Amazon. \cref{fig:xs-amzn} shows a subset of 680 past experiments coming from the same population, where $\hat{\Delta}_i$ is the empirical estimate of some effect measured in some standardized duration and unit. We choose an appropriately large $L$, which leaves 8 data points off the domain; those data points are approximately projected to the boundary via $(\hat{\Delta}, s) \leftarrow (\text{sign}(\hat{\Delta}) L, L s / |\hat{\Delta}|)$ (the transform is exact if the associated true effect is zero). 

\begin{figure}[!htb]
\centering
\includegraphics[width=0.8\textwidth]{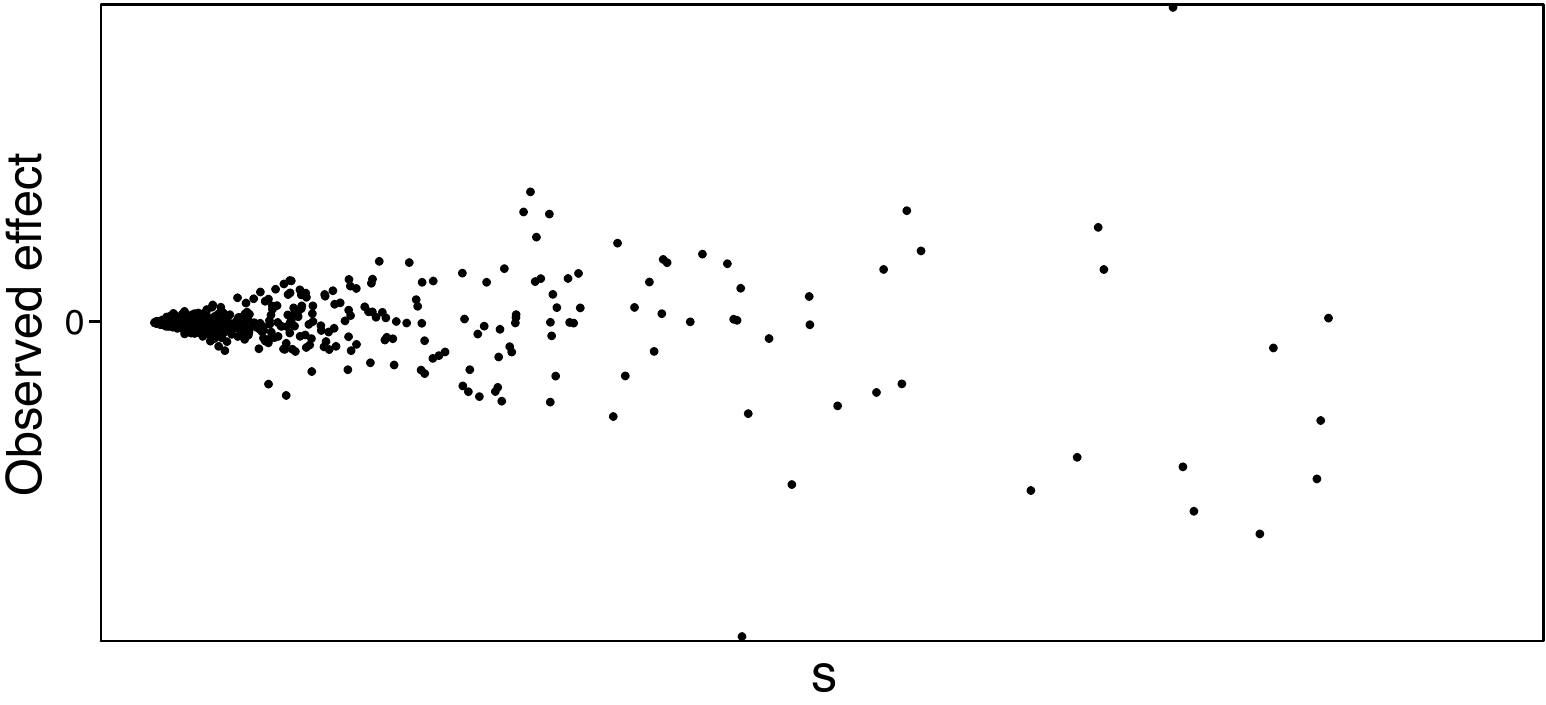}
\caption{680 experiments $(\hat{\Delta}_i, s_i)$ from the same population run at Amazon.}
\label{fig:xs-amzn}
\end{figure}

We select $N$ from $\{512, 1024, 1536, 2048, 3072, 4096\}$. We ran cross-validation 400 times randomly holding out $10\%$ of data. \Cref{tab:model-sel} shows the two model selection criteria. Both the predicted log-likelihood and the score-matching loss prefer $N=2,048$ from the list of options. We fit the full dataset with this selected $N$. \Cref{fig:prior-amzn} displays the estimated prior density. The pointwise $95\%$ confidence bands are estimated from 500 bootstrap replicates.

\begin{figure}[!htb]
\centering
\includegraphics[width=0.8\textwidth]{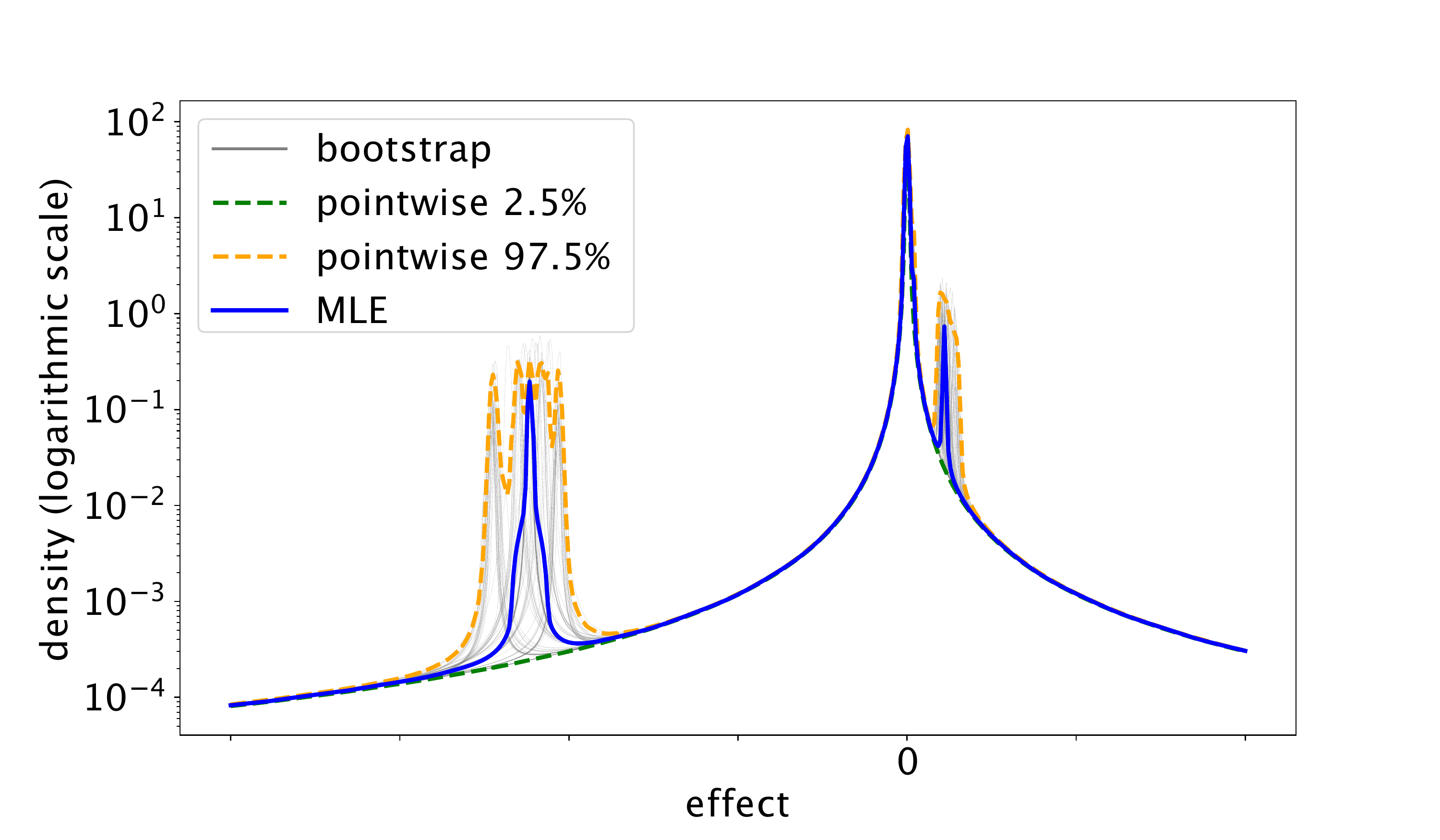}
\caption{The prior over the true effect estimated from a set of experiments run at Amazon. The pointwise confidence bands are estimated from bootstrap. The density is drawn in the logarithmic scale; see \cref{apx:fig:prior-amzn} in the Appendix for the linear scale. Also compare to \cref{apx:fig:prior-amzn-mog}, which is fitted with a mixture of Gaussians.}
\label{fig:prior-amzn}
\end{figure}

\paragraph{Comparison} We compare to fitting the prior with a mixture of $K$ Gaussian distributions; see \cref{apx:sec:EM} for the fitting algorithm. We select $K$ using the same cross-validated criteria. As shown in \cref{tab:model-sel}, $K=3$ is selected. The prior density fitted is shown in \cref{apx:fig:prior-amzn-mog} in the Appendix. The difference between our method and the mixture of Gaussians is apparent when plotting on the logarithmic scale --- our method fits heavier tails (areas with large $|\Delta|$); compare \cref{fig:prior-amzn} and \cref{apx:fig:prior-amzn-mog}.

As a consequence of the heavier tails in the estimated prior, our method imposes \emph{milder} shrinkage when making posterior inference. To illustrate this effect, in \cref{fig:shrinkage-amzn} we plot the amount of shrinkage in the posterior mean corresponding to different noise scales $s$. By Tweedie's formula \cref{eqs:tweedie-mean}, the amount of shrinkage upon observing $\hat{\Delta} = x$ and $\hat{s}=s$ is
\begin{equation*}
\hat{\E}_{s}[\Delta \mid \hat{\Delta}=x] - x = s^2 \hat{\ell}'_s(x) = s^2 \hat{p}'_s(x) / \hat{p}_s(x).
\end{equation*}
Note that strict positivity of $\hat{p}_s(x)$ proved in \cref{thm:non-negative} guarantees that the estimated posterior mean is finite. Also, \cref{lem:unif-consistency-posterior-mean} guarantees that the shrinkage curves will uniformly converge to the corresponding true curves as $n \rightarrow \infty$.
In the bottom panel of \cref{fig:shrinkage-amzn}, we compare to the shrinkage functions from estimating the prior as a mixture of three Gaussians. We can see that the mixture-of-Gaussian model imposes much \emph{stronger} shrinkage compared to our method (the dashed diagonal lines in the plots represent the strongest possible shrinkage that always shrinks the posterior mean to exactly zero).

\begin{figure}[!htb]
\centering
\includegraphics[width=0.75\textwidth]{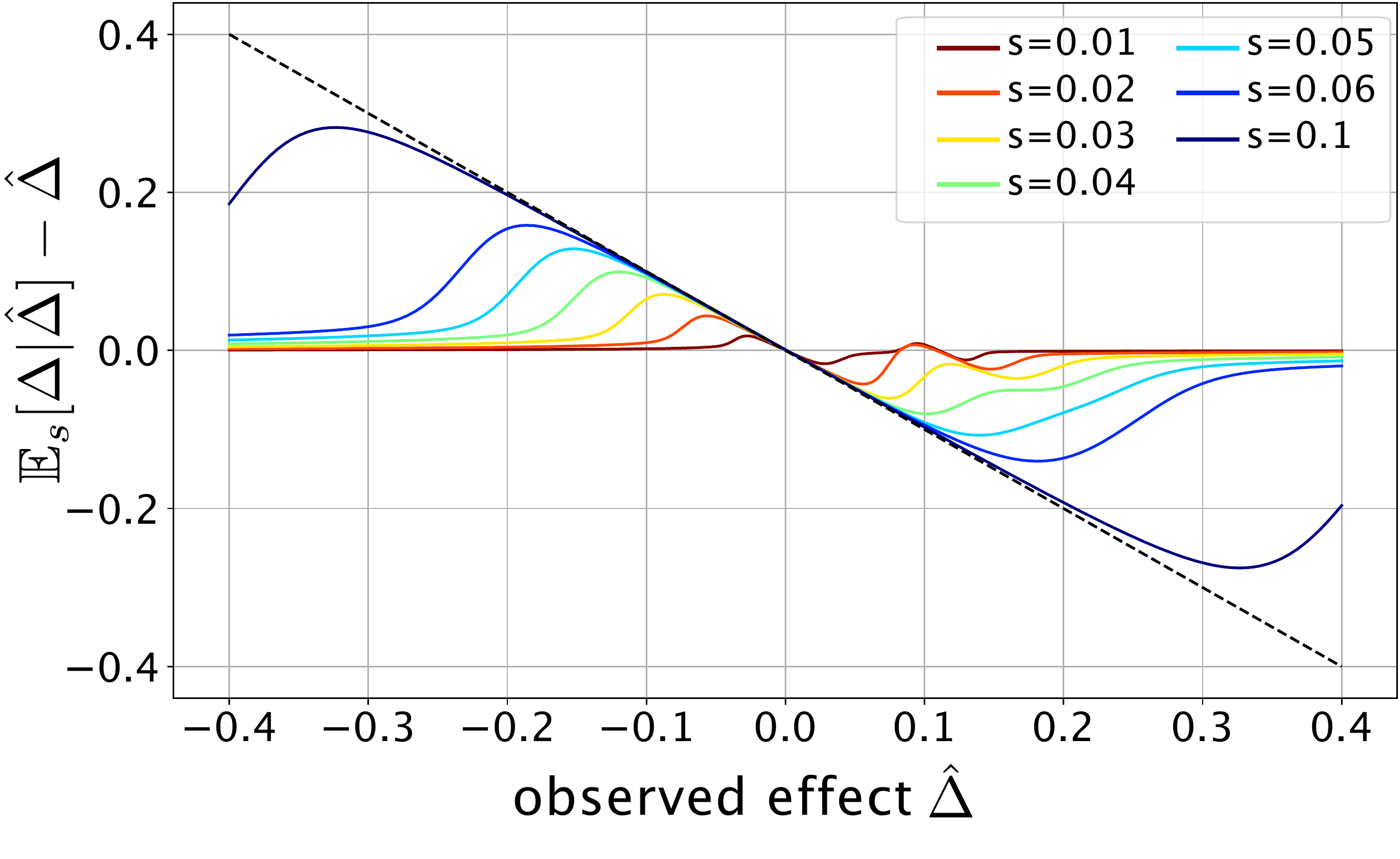}
\includegraphics[width=0.75\textwidth]{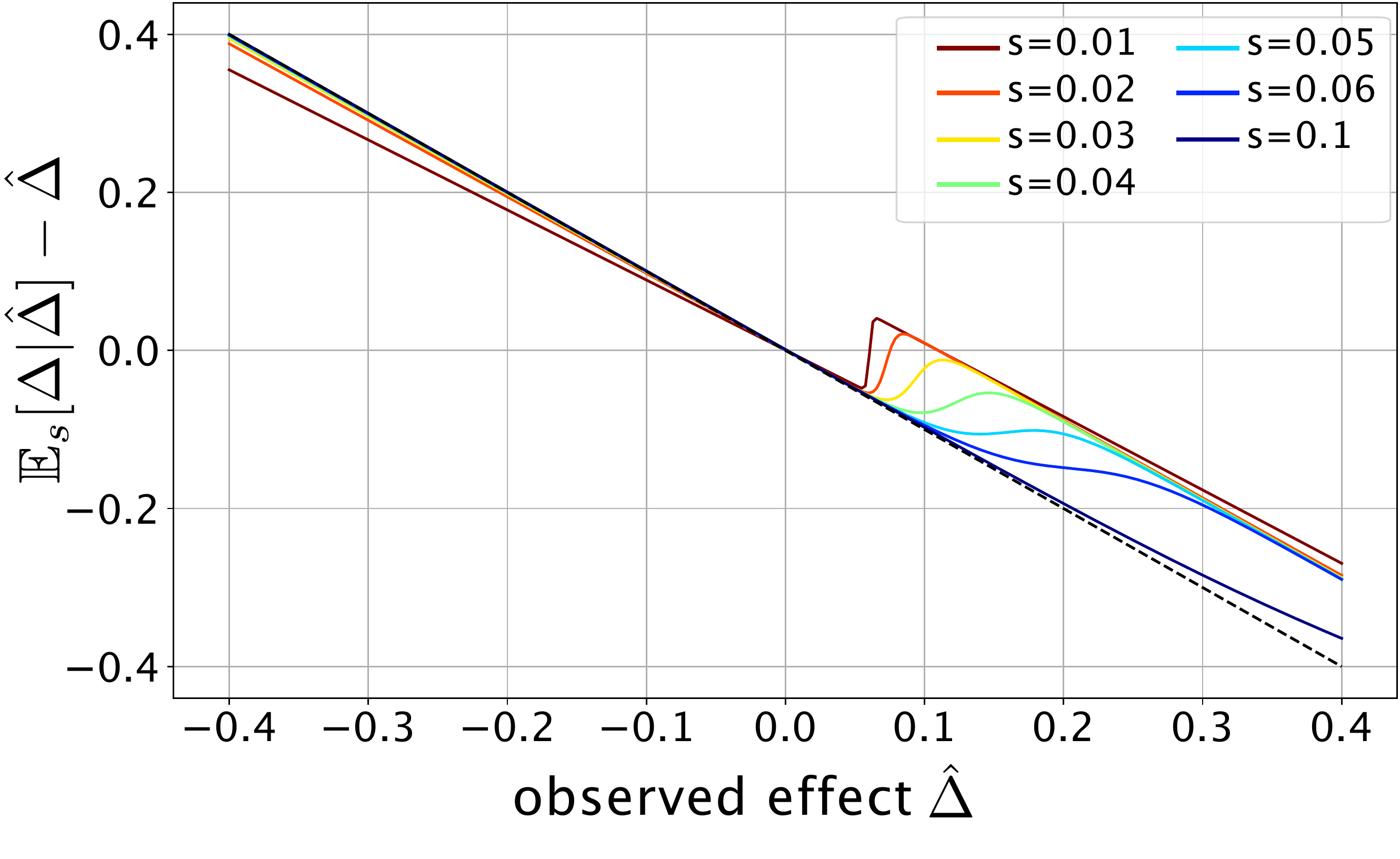}
\caption{The amount of shrinkage in the posterior mean function $\E_{s}[\Delta \mid \hat{\Delta}] - \hat{\Delta} = s^2 \ell'_s(\hat{\Delta})$ estimated from Amazon data (top: our spectral method, bottom: prior fitted with a mixture of three Gaussians). Colors correspond to different values of $s$. The strongest shrinkage is $y=-\hat{\Delta}$ as $s \rightarrow \infty$ (dashed diagonal line), which always sets the posterior mean to zero. See also \Cref{apx:fig:shrinkage-amzn} for a wider range of $\hat{\Delta}$.}
\label{fig:shrinkage-amzn}
\end{figure}

Additionally, the posterior density of $\Delta$ upon observing $s$ and $\hat{\Delta}=x$ can be computed via 
\begin{equation}
\hat{p}_{s}(\Delta \mid \hat{\Delta}=x) = \hat{p}_0(\Delta) \phi_{s}(\Delta - x) / \hat{p}_{s}(x).
\end{equation}
See \cref{apx:fig:posterior-dens-amzn} for an example of the estimated posterior density. Again, we can observe that the prior fitted with mixture of Gaussians implies stronger shrinkage; note that under the mixture of Gaussian model, little mass accumulates around $\hat{\Delta}$ for any of the noise scales considered. 

\begin{figure}[!htb]
\centering
\includegraphics[width=0.75\textwidth]{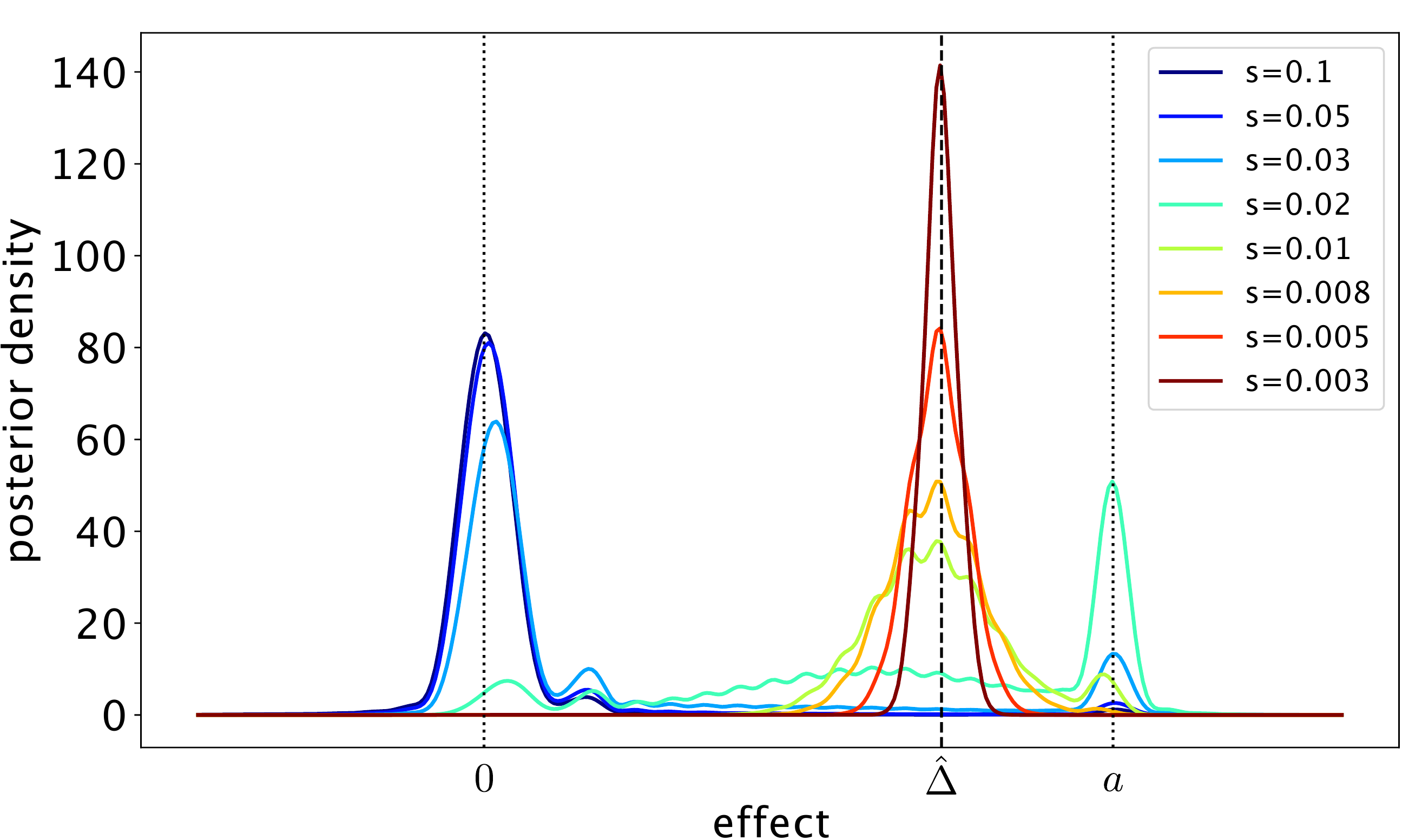}
\includegraphics[width=0.75\textwidth]{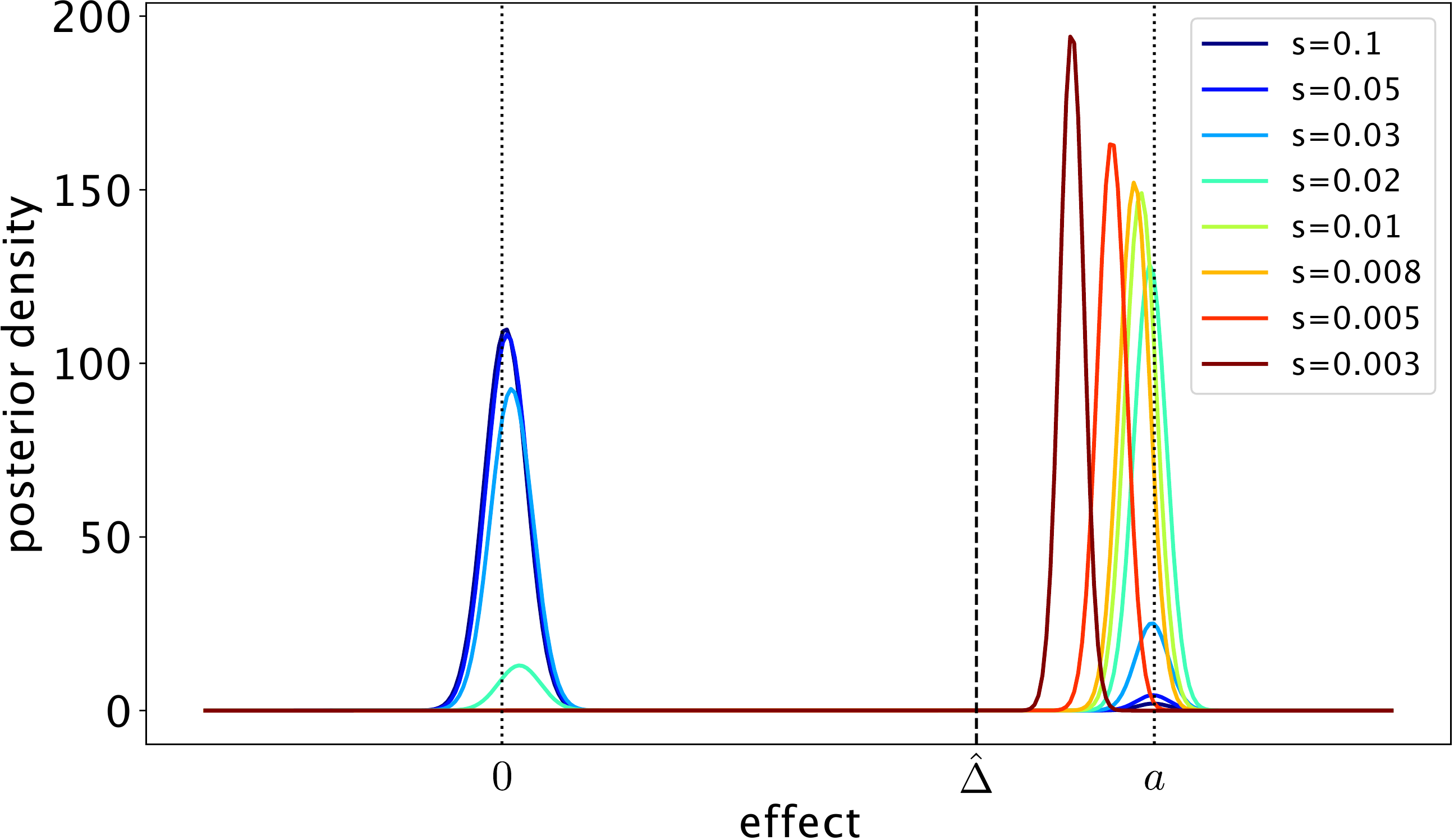}
\caption{Estimated posterior density $p_{s}(\Delta \mid \hat{\Delta})$ for Amazon data under different $s$ (top: our spectral method, bottom: prior fitted with a mixture of three Gaussians). The observed $\hat{\Delta}$ is marked as dashed. As the error scale $s$ decreases from $0.1$ (corresponding to increasing experimental sample size), the posterior is initially concentrated at zero, but then gradually shifts towards $a$ (a prior mode near $\hat{\Delta}$), and should finally concentrate around $\hat{\Delta}$. It is clear that the mixture-of-Gaussian prior imposes stronger shrinkage.}
\label{apx:fig:posterior-dens-amzn}
\end{figure}

 \section{Concluding remarks} \label{sec:discussion}
We have developed a new, principled and intuitive framework for the analysis of large-scale randomized experiments. We first characterized the density family arising from the problem with a PDE (the heat equation), which unifies ``$f$-modeling'' (marginal) and ``$g$-modeling'' (prior) approaches towards empirical Bayes \citep{efron2014two} under (asymptotic) Gaussian likelihood. Second, we estimated the density family with trigonometric polynomials, which are eigenfunctions of the heat equation. Third, we introduced a novel parametrization of non-negative trigonometric polynomials that ensures a bona fide probability density, which further guarantees that the implied posterior mean is finite. Fourth, we presented an efficient convex optimization algorithm for maximum likelihood estimation. Moreover, towards model selection, we connected the square loss in estimating the posterior mean to \possessivecite{hyvarinen2005estimation} score-matching via Tweedie's formula. Lastly, we showed that our estimator, as a sieve MLE, is uniformly consistent in estimating the prior density and the posterior mean functions. Our methodology provides a simple and scalable approach to analyzing large-scale randomized experiments, that offers many advantages relative to commonly used $t$-tests or fitting the prior to a parametric mixture model with the EM algorithm.

It is worth mentioning that the methodology developed here is applicable to more general settings, where one wants to estimate the prior distribution for some parameter $\theta \in \mathbb{R}$ from many asymptotically normal estimators $\hat{\theta}_i$, which correspond to the parameters $\theta_i$ realized independently from the prior. Here is an example. 

\begin{example}
Suppose an urn is filled with an infinite number of coins, and the probability of heads for a random coin is $\theta \in [0,1]$. For each experiment $i$, one can draw a coin from the urn, flip $n_i$ times and observe the number of heads $m_i$. The MLE for $\theta_i$ is $\hat{\theta}_i = m_i / n_i$, which is asymptotically normal when $n_i \rightarrow \infty$. Under a large number of coin tosses for each experiment, the observational model is equivalent to $\theta_i \iid G$, $\hat{\theta}_i = \theta_i + s_i Z_i$ with $s_i \approx \sqrt{\hat{\theta}_i (1 - \hat{\theta}_i) / n_i}$; compare with \cref{eqs:observational}.
\end{example}

We conclude with a few remarks. First, closely related to our work, \citet{walter1981orthogonal,walter1991bayes} and \citet{carrasco2011spectral} considered estimating the prior with orthogonal polynomials chosen with respect to the Gaussian convolution/integration kernel, which are Hermite polynomials; see \citet[Example 1]{carrasco2011spectral} and \citet[Appendix B.1]{walter1991bayes}. Here we take a differential perspective instead of an integral one, and the resulting polynomials are trigonometric. Second, in terms of nonparametric maximum likelihood, \citet{laird1978nonparametric} and \citet{leonard1984some} considered estimating the prior in the form of a mixture of delta functions, fitted with some iterative schemes such as the EM. Third, we leave the removal or relaxation of our technical assumptions to further studies, such as the toric simplification of the domain (\cref{assump:torus}). 
\section*{Acknowledgements}
RG thanks Hongxiang Qiu for pointing to references on sieve estimation, and Don Percival for detailed comments.

\appendix
\section{Bayesian analysis of experiments} \label{apx:bayes}
\subsection{Proof of \cref{thm:bayes-freq-coverage}}
\begin{proof}
By Bayes' theorem, 
\begin{equation*}
P \left(\Delta_i \in A \,\left|\, G_{A}(\hat{\Delta}_i) = \varphi \right. \right) = \frac{P(\Delta_i \in A, G_{A}(\hat{\Delta}_i) = \varphi)}{P(G_{A}(\hat{\Delta}_i) = \varphi)},
\end{equation*}
where the denominator is positive by assumption. By the law of total probability, the numerator becomes
\begin{equation*}
\begin{split}
P(\Delta_i \in A, G_{A}(\hat{\Delta}_i) = \varphi) &= \int_{A} P\left(G_{A}(\hat{\Delta}_i) = \varphi \,\left|\, \Delta_i = u \right. \right) g(u) \dd u \\
&= \int_{A} \left( \int_{\{x: G_{A}(x) = \varphi \}} \phi_{s_i}(x - u) \dd x \right ) g(u) \dd u \\
&= \int_{A} \int_{\{x: G_{A}(x) = \varphi\}} \frac{\phi_{s_i}(x - u) g(u)}{p_{s_i}(x)} p_{s_i}(x) \dd u \dd x \\
&\overset{(i)}{=} \int_{\{x: G_{A}(x) = \varphi\}} \left ( \int_{A} \frac{\phi_{s_i}(x - u) g(u)}{p_{s_i}(x)} \dd u \right) p_{s_i}(x) \dd x \\
&\overset{(ii)}{=} \int_{\{x: G_{A}(x) = \varphi\}} G_{A}(x) p_{s_i}(x) \dd x \\
&= \varphi \int_{\{x: G_{A}(x) = \varphi\}} p_{s_i}(x) \dd x \\
&= \varphi P(G_{A}(\hat{\Delta}_i) = \varphi),
\end{split}
\end{equation*}
where (i) uses Fubini's theorem, and (ii) uses \cref{eqs:posterior-density}. The result is proven by noting that $P(G_{A}(\hat{\Delta}_i) = \varphi)$ cancels with the denominator.
\end{proof}

\subsection{Bayes optimal decision making}
In the context of \cref{sec:bayes-optimal-decision}, suppose $\mathcal{A}$ is the collection of possible actions, and we have a loss function $l(\Delta, a): \mathbb{R} \times \mathcal{A} \rightarrow \mathbb{R}$. Without loss of generality, consider a deterministic decision function $d(\hat{\Delta}): \mathbb{R} \rightarrow \mathcal{A}$. The expected loss, with respect to the sampling distribution of future experiments, of making decision $d$ is
\begin{equation*}
\begin{split}
\E l(\Delta, d(\hat{\Delta})) &= \int g(\Delta) \left [ \int l(\Delta, d(\hat{\Delta})) p(\hat{\Delta} \mid \Delta) \dd \hat{\Delta} \right ] \dd \Delta \\
&= \int p(\hat{\Delta}) \left [\int l(\Delta, d(\hat{\Delta})) p(\Delta \mid \hat{\Delta}) \dd \Delta \right ] \dd \hat{\Delta} \\
& \geq \int p(\hat{\Delta})  \left [\int l(\Delta, d_B(\hat{\Delta})) p(\Delta \mid \hat{\Delta}) \dd \Delta \right ] \dd \hat{\Delta},
\end{split}
\end{equation*}
where the Bayes decision rule $d_B(\hat{\Delta})$ minimizes the posterior risk $\int l(\Delta, d(\hat{\Delta})) p(\Delta \mid \hat{\Delta}) \dd \Delta$ for every $\hat{\Delta}$. 

\section{General solution to the heat equation} \label{apx:fourier}
For a function $h: \mathbb{R} \rightarrow \mathbb{C}$ and $h \in L^1(\mathbb{R})$, the Fourier transform $\mathcal{F}h: \mathbb{R} \rightarrow \mathbb{C}$ is defined as 
\begin{equation*}
\mathcal{F}h(\xi) := \frac{1}{\sqrt{2 \pi}} \int_{\mathbb{R}} e^{-i x \xi} h(x) \dd x. 
\end{equation*}
And its inverse Fourier transform is defined as
\begin{equation*}
\mathcal{F}^{-1}h(\xi) := \frac{1}{\sqrt{2 \pi}} \int_{\mathbb{R}} e^{i x \xi} h(x) \dd x. 
\end{equation*}
It holds that $h = \mathcal{F}^{-1} \mathcal{F} h$. The Fourier transform for $p_t(x)$ is therefore
\begin{equation*}
(\mathcal{F}p_{t})(\xi) = \frac{1}{\sqrt{2 \pi}} \int_{\mathbb{R}} e^{-i x \xi} p_{t}(x) \dd x,
\end{equation*}
which is also called the characteristic function of $p_t$ in statistics.
Taking Fourier transform on both sides of \cref{eqs:heat} with respect to $x$, we have
\begin{equation*}
\mathcal{F} \left(\frac{\partial p_{t}(x)}{\partial t} \right)(\xi) = \frac{1}{2} \mathcal{F} \left(\frac{\partial^2 p_t(x)}{\partial x^2} \right)(\xi).
\end{equation*}
Using the property that $\mathcal{F}\left(\frac{\partial^m h}{\partial x^m}\right)(\xi) = (i \xi)^{m} \mathcal{F}h(\xi)$, the previous display becomes
\begin{equation*}
\frac{\partial}{\partial t} \mathcal{F}(p_t)(\xi) = -\frac{1}{2} \xi^2 \mathcal{F}(p_t)(\xi),
\end{equation*}
which is an ordinary differential equation in $t$. The solution is
\begin{equation}
\mathcal{F}(p_t)(\xi) = \mathcal{F}(p_0)(\xi) e^{-\frac{1}{2} \xi^2 t} = \mathcal{F}(g)(\xi) e^{-\frac{1}{2} \xi^2 t}, \label{eqs:solution-fourier-infinite}
\end{equation}
where $\mathcal{F}(g)$ is the Fourier transform of the prior density. The factor $e^{-\frac{1}{2} \xi^2 t}$ describes how components corresponding to different $\xi$ are damped over time. The inverse Fourier transform of \cref{eqs:solution-fourier-infinite} leads to the solution \cref{eqs:green} in terms of Green function. From the previous display, it becomes clear that we can evaluate the density family $p_t(x) = (\mathcal{F}^{-1}\mathcal{F}p_t)(x)$ at any $x \in \mathbb{R}$,  $t > 0$ if we can represent $\mathcal{F}(g)(\xi)$ for $\xi \in \mathbb{R}$; and the density family can be approximated if we can approximate $\mathcal{F}(g)$.

\section{Proof of consistency} \label{apx:sec:consistency}
Our proof of consistency will be based on the following result. 
\begin{lemma}[{\citet[Page 5590]{chen2007large}}] \label{lem:Chen}
Let $(\Theta, d)$ be a metric space. Let $\{\Theta_N\}$ be a sequence of sieves such that $\Theta_N \subseteq \Theta$. Let $\hat{\theta}_n$ be an (approximate) maximizer of the sample criterion function within sieve $\Theta_N$, namely one that satisfies
\begin{equation} \label{eqs:approximate-mle}
Q_n(\hat{\theta}_n) \geq \sup_{\theta \in \Theta_{N}} Q_n(\theta) - o_p(1), 
\end{equation}
for some $N \rightarrow \infty$ as $n \rightarrow \infty$. 
Then $d(\hat{\theta}_n, \theta_0) \rightarrow_{p} 0$ under the following conditions. 
\begin{enumerate}[(a)]
\item (i) $Q(\theta)$ is continuous at $\theta_0$ in $\Theta$, $Q(\theta_0) > -\infty$; (ii) for all $\varepsilon > 0$, $Q(\theta_0) > \sup_{\theta \in \Theta: d(\theta, \theta_0) \geq \varepsilon} Q(\theta)$. 

\item For any $\theta \in \Theta$ there exists $\theta_N \in \Theta_N$ such that $d(\theta, \theta_N) \rightarrow 0$ as $N \rightarrow \infty$. 

\item For each $k \geq 1$, (i) $Q_n(\theta)$ is measurable for all $\theta \in \Theta$, and (ii) $Q_n(\theta)$ is upper semicontinuous on $\Theta_k$ under metric $d(\cdot, \cdot)$. 

\item $\Theta_k$ is compact under $d(\cdot, \cdot)$ for every $k \geq 1$.

\item For every $k \geq 1$, $\sup_{\theta \in \Theta_k} \|Q_n(\theta) - Q(\theta)\| \rightarrow_{p} 0$. 
\end{enumerate}
\end{lemma}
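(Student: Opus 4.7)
The plan is to adapt the classical Wald/M-estimator consistency argument to the sieve setting, via a sandwich on the population criterion $Q$ evaluated at $\hat{\theta}_n$. The goal is to show $Q(\hat{\theta}_n) \to_p Q(\theta_0)$, and then invoke the identification condition (a)(ii) to convert this into convergence in the metric $d$.

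First I would use the sieve-approximation condition (b) to pick a sequence $\theta_N \in \Theta_N$ with $d(\theta_N,\theta_0) \to 0$, and then use continuity of $Q$ at $\theta_0$ (condition (a)(i)) to conclude $Q(\theta_N) \to Q(\theta_0)$. This gives the "anchor" point inside $\Theta_N$ that nearly achieves the population optimum. Next, from the approximate-MLE defining inequality \eqref{eqs:approximate-mle}, I would write
\begin{equation*}
Q_n(\hat{\theta}_n) \;\geq\; Q_n(\theta_N) - o_p(1) \;=\; Q(\theta_N) + [Q_n(\theta_N)-Q(\theta_N)] - o_p(1),
\end{equation*}
and invoke condition (e) at the fixed-index sieve containing $\theta_N$ to absorb the bracketed term into $o_p(1)$, yielding $Q_n(\hat{\theta}_n) \geq Q(\theta_0) - o_p(1)$.

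The second direction of the sandwich is the delicate one: I need to pass from $Q_n(\hat{\theta}_n)$ to $Q(\hat{\theta}_n)$, i.e.\ control $|Q_n(\hat{\theta}_n)-Q(\hat{\theta}_n)|$. Together with compactness of each $\Theta_k$ (condition (d)), upper semicontinuity and measurability of $Q_n$ (condition (c)), and the uniform convergence in condition (e) applied on the appropriate sieve containing $\hat{\theta}_n$, this gives $Q(\hat{\theta}_n) \geq Q(\theta_0) - o_p(1)$. Since $\theta_0$ is a maximizer of $Q$, we also have $Q(\hat{\theta}_n) \leq Q(\theta_0)$, so $Q(\hat{\theta}_n) \to_p Q(\theta_0)$.

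Finally, I would invoke the well-separated maximum condition (a)(ii): for fixed $\varepsilon > 0$, set $\eta := Q(\theta_0) - \sup_{d(\theta,\theta_0)\geq\varepsilon} Q(\theta) > 0$. On the event $\{d(\hat{\theta}_n,\theta_0) \geq \varepsilon\}$ we have $Q(\hat{\theta}_n) \leq Q(\theta_0)-\eta$, which is incompatible with $Q(\hat{\theta}_n) \to_p Q(\theta_0)$; hence $P(d(\hat{\theta}_n,\theta_0)\geq \varepsilon) \to 0$, completing the proof.

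\textbf{Main obstacle.} The technical crux is bridging the gap between condition (e), which gives uniform convergence on each \emph{fixed} compact sieve $\Theta_k$, and the need to control $Q_n - Q$ at $\hat{\theta}_n \in \Theta_{N(n)}$ where the sieve index grows. One clean way around this is to restrict attention, for each $\varepsilon$, to a fixed $k_0$ for which $\theta_{k_0} \in \Theta_{k_0}$ already witnesses $Q(\theta_{k_0}) > Q(\theta_0) - \eta/4$, and to argue by contradiction: if $\hat{\theta}_n$ were bounded away from $\theta_0$ with positive probability along a subsequence, then tightness on compact sieves plus upper semicontinuity of $Q_n$ would let us pass to a limit contradicting condition (a)(ii). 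Verifying measurability of the suprema (so that the $o_p$ terms are bona fide), and handling the interplay between the growth rate of $N$ and the uniform convergence on $\Theta_N$ —- effectively assumed implicit in the statement —- is the most subtle step and the place I would spend the most care.
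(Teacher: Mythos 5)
The paper never proves this lemma---it is imported verbatim from \citet[Page 5590]{chen2007large}---so the benchmark is the standard sieve-consistency argument given there. Your skeleton (anchor $\theta_N$ from (b) with $Q(\theta_N)\to Q(\theta_0)$ by (a)(i), a sandwich forcing $Q(\hat\theta_n)\geq Q(\theta_0)-o_p(1)$, then the well-separated maximum (a)(ii)) is exactly that argument, and you correctly flag the only delicate point. But your handling of that point is where the proposal genuinely fails. First, the anchor: $\theta_N$ lives in $\Theta_{N(n)}$ with a \emph{growing} index, so the bracketed term $Q_n(\theta_{N(n)})-Q(\theta_{N(n)})$ is not covered by condition (e) ``at the fixed-index sieve containing $\theta_N$''---there is no such fixed index. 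Nor can you substitute a fixed anchor $\theta_{k_0}\in\Theta_{k_0}$ as in your ``main obstacle'' paragraph: the sieves are not assumed nested (the paper's remark after the lemma makes a point of this, and the Fej\'er-kernel sieves used later are not nested), so $\theta_{k_0}$ need not belong to $\Theta_{N(n)}$ and the inequality $Q_n(\hat\theta_n)\geq Q_n(\theta_{k_0})-o_p(1)$ is simply unavailable from \cref{eqs:approximate-mle}. Second, the estimator term: your proposed rescue---extract a limit of $\hat\theta_n$ ``by tightness on compact sieves plus upper semicontinuity of $Q_n$''---does not go through, because $\hat\theta_n$ is not confined to any fixed compact $\Theta_k$, the union $\bigcup_k\Theta_k$ need not be precompact in $(\Theta,d)$, and upper semicontinuity of each $Q_n$ (a per-$n$, per-realization property) gives no control when you pass to limits jointly in $n$ and $\theta$.

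The correct way to close the gap is either to read (e) as uniform convergence over the growing sieve $\Theta_{N(n)}$ (which is Chen's actual condition), or to exploit the clause ``for some $N\to\infty$'': since for each fixed $k$ one has $a_n(k):=\sup_{\theta\in\Theta_k}|Q_n(\theta)-Q(\theta)|\rightarrow_p 0$ as $n\to\infty$, a routine diagonalization produces a sequence $N(n)\to\infty$ growing slowly enough that $a_n(N(n))\rightarrow_p 0$. With that single display both problematic terms are controlled at once: $Q(\hat\theta_n)\geq Q_n(\hat\theta_n)-a_n(N(n))\geq Q_n(\theta_{N(n)})-a_n(N(n))-o_p(1)\geq Q(\theta_{N(n)})-2a_n(N(n))-o_p(1)$, and $Q(\theta_{N(n)})\to Q(\theta_0)$ by (b) and (a)(i); your final appeal to (a)(ii) then finishes the proof exactly as you wrote it. A minor further correction: conditions (c) and (d) are not what delivers the sandwich---they serve to guarantee that a measurable (near-)maximizer over the compact sieve exists in the first place, so the event probabilities and $o_p(1)$ statements are well defined.
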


\begin{remark}
\citet{chen2007large} also requires that $\Theta_{N} \subseteq \Theta_{N+1}$ for every $N$. But establishing consistency does not require the nestedness of sieves; see also \citet[Lemma A1]{newey2003instrumental} and \citet[Conditions (A-D)]{shen1997methods}.
\end{remark}

For the ease of presentation, we will adopt the standard notations from \citet{chen2007large}. Without loss of generality, let us assume $L = \pi$. Let $\Theta$ denote the parameter space under \cref{assump:prior}, namely
\begin{equation} \label{eqs:param-space}
\Theta := \left\{\theta(x): \theta \text{ is a uniformly continuous density on $[-\pi, \pi]$} \right\}.
\end{equation}
For $\theta \in \Theta$, 
\begin{equation} \label{eqs:p-theta-x-t}
p_{\theta, t} := \int_{-\infty}^{\infty} \bar{\theta}(u) \phi_{\sqrt{t}} (x - u) \dd u
\end{equation}
is the implied marginal density on $[-\pi, \pi]$, where $\bar{\theta}: \mathbb{R} \rightarrow \mathbb{R}$ is the periodicized version of $\theta$ satisfying $\bar{\theta}(x + 2 \pi) = \bar{\theta}(x)$. 
Let us define the population ``criterion function'' as the expected log-likelihood $Q(\theta): \Theta \rightarrow \mathbb{R}$, given by
\begin{equation} \label{eqs:Q}
Q(\theta) := \E_{t} \E_{X \mid t} \log p_{\theta, t}(X).
\end{equation}
\begin{lemma} \label{lem:identification}
Under $\theta_0$, $Q(\theta)$ is uniquely maximized at $\theta_0$. 
\end{lemma}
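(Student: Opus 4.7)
The plan is to reduce the claim to a Kullback--Leibler statement via Gibbs' inequality and then apply a Fourier identifiability argument on the torus.

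First, since $X \mid t \sim p_{\theta_0, t}$ under the true model, I would write
\[
Q(\theta_0) - Q(\theta) \;=\; \E_t\!\left[\int_{-\pi}^{\pi} p_{\theta_0, t}(x)\log\frac{p_{\theta_0, t}(x)}{p_{\theta, t}(x)}\dd x\right] \;=\; \E_t\, \KL\!\left(p_{\theta_0, t}\,\|\,p_{\theta, t}\right),
\]
which is non-negative term by term by Gibbs' inequality. This immediately yields $Q(\theta) \leq Q(\theta_0)$.

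To get uniqueness of the maximizer, I would suppose $Q(\theta) = Q(\theta_0)$. Then the expectation of a non-negative quantity vanishes, so $\KL(p_{\theta_0, t}\,\|\,p_{\theta, t}) = 0$ almost surely under the law of $t$; in particular for at least one $t$ in its support. Fix any such $t$. Then $p_{\theta, t}(x) = p_{\theta_0, t}(x)$ for Lebesgue-a.e.\ $x \in [-\pi,\pi]$. By \cref{lem:heat-series}, the complex Fourier coefficients of $p_{\theta, t}$ viewed as a $2\pi$-periodic density are $c_k(\theta)\,e^{-k^2 t/2}$, where $\{c_k(\theta)\}$ are the Fourier coefficients of $\bar\theta$. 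Equating Fourier coefficients of $p_{\theta, t}$ and $p_{\theta_0, t}$ and cancelling the strictly positive factor $e^{-k^2 t/2}$ gives $c_k(\theta) = c_k(\theta_0)$ for every $k \in \mathbb{Z}$, so $\theta = \theta_0$ Lebesgue-a.e.\ on $[-\pi,\pi]$ by uniqueness of Fourier series. If the law of $t$ happens to be a point mass at $0$, the identification is immediate from $p_{\theta, 0} = \theta$.

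The main obstacle is verifying that $Q(\theta_0) > -\infty$, so that the subtraction $Q(\theta_0) - Q(\theta)$ is well-defined and the KL expression above is legitimate. Under \cref{assump:prior}, $\theta_0$ is bounded on $[-\pi,\pi]$, so after Gaussian smoothing $p_{\theta_0, t}$ is uniformly bounded above for every $t$; for any $t > 0$ the same convolution makes $p_{\theta_0, t}$ strictly positive on $[-\pi,\pi]$ with a lower bound depending only on $t_{\max}$ (an argument analogous to the bound used in the proof of \cref{lem:aliasing-error}). Hence $|\log p_{\theta_0, t}|$ is uniformly bounded for $t$ in any interval $[\epsilon, t_{\max}]$, and an atom of the $t$-law at $0$ is handled separately using $\int \theta_0 \log \theta_0\,\dd x < \infty$ for the bounded density $\theta_0$. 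This closes the argument.
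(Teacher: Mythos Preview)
Your proof is correct and follows the same line as the paper: express $Q(\theta_0)-Q(\theta)$ as an expected KL divergence and then use a Fourier identifiability argument to deduce $\theta=\theta_0$. The paper cites \cref{lem:identifiable} for the identifiability step and adds one line using continuity of $\theta,\theta_0\in\Theta$ to upgrade Lebesgue-a.e.\ equality to pointwise equality (you should add this sentence); your explicit Fourier-series computation via \cref{lem:heat-series} and your care with the finiteness of $Q(\theta_0)$ are welcome refinements but do not change the approach.
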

\begin{proof}
It suffices to show that for any fixed $t \geq 0$, $Q(\theta; t):= \E_{X\mid t} \log p_{\theta, t}(X)$ is uniquely maximized at $\theta = \theta_0$. Note that, $-Q(\theta; t) = \KL(p_{\theta_0, t} \| p_{\theta, t}) + \text{const}$, which is uniquely minimized when $p_{\theta, t} = p_{\theta_0, t}$. By \cref{lem:identifiable}, $p_{\theta, t} = p_{\theta_0, t}$ iff $\theta_0 = \theta$ Lebesgue almost everywhere. Further, since $\theta, \theta_0$ are continuous by definition of $\Theta$, $\theta_0 = \theta$ Lebesgue almost everywhere iff $\theta_0 = \theta$. 
\end{proof}
The sample version of the criterion function is defined as
\begin{equation} \label{eqs:Qn}
Q_n(\theta) := \mathbb{P}_n \log p_{\theta, t}(X),
\end{equation}
where $\mathbb{P}_n$ is the empirical measure over $(t, X)$. The maximum likelihood estimator $\widehat{C}_{N}(x; t=0)$ of the prior density can be written as 
\begin{equation} \label{eqs:theta-n}
\hat{\theta}_n = \argmax_{\theta \in \Theta_N} Q_n(\theta),
\end{equation}
where $N$ grows with $n$ and $\Theta_N$ is the class of densities representable under our parametrization. More precisely, by \cref{eqs:cesaro-kernel-form} and the conditions $f_\nu \geq 0$, $\frac{2 \pi}{2N + 1} \sum_{\nu=-N}^{N} f_\nu = 1$, the class can be expressed as
\begin{equation}
\Theta_N = \left\{\sum_{\nu=-N}^{N} \gamma_\nu \widetilde{K}_{N} (x - x_\nu): \gamma \in \mathcal{S}_{2N} \right \},
\end{equation}
where $\gamma_\nu = 2 \pi f_\nu / (2N+1)$, $\widetilde{K}_{N}(\cdot) = K_{N}(\cdot) / (2\pi)$ and $x_\nu = \frac{2 \pi \nu}{2 N + 1}$. Recall that $\mathcal{S}_{2N}$ is the $2N$-dimensional unit simplex. 

\begin{lemma} \label{lem:sieve-compact}
$\Theta_N$ is compact in $\|\cdot\|_{\infty}$ for every $N \geq 1$.
\end{lemma}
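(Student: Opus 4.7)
The plan is to realize $\Theta_N$ as the continuous image of the compact simplex $\mathcal{S}_{2N}\subset\mathbb{R}^{2N+1}$ under a bounded linear map into $(C[-\pi,\pi],\|\cdot\|_{\infty})$, and then invoke the fact that the continuous image of a compact set is compact.

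First, I would define $\Phi:\mathbb{R}^{2N+1}\to C[-\pi,\pi]$ by
\begin{equation*}
\Phi(\gamma)(x) \;=\; \sum_{\nu=-N}^{N}\gamma_\nu\,\widetilde{K}_N(x-x_\nu),
\end{equation*}
where $\widetilde{K}_N = K_N/(2\pi)$ and $x_\nu = 2\pi\nu/(2N+1)$. Each translate $\widetilde{K}_N(\cdot - x_\nu)$ is continuous on $[-\pi,\pi]$ (in fact a fixed trigonometric polynomial of order $N$), so $\Phi$ indeed maps into $C[-\pi,\pi]$. By construction, $\Theta_N = \Phi(\mathcal{S}_{2N})$.

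Next, I would verify that $\Phi$ is continuous with respect to, say, the $\ell_1$ norm on $\mathbb{R}^{2N+1}$ and $\|\cdot\|_\infty$ on $C[-\pi,\pi]$. By \cref{lem:fejer}(a), $K_N(u)\le K_N(0)=N+1$, so $\|\widetilde{K}_N\|_\infty\le (N+1)/(2\pi)$, and hence
\begin{equation*}
\|\Phi(\gamma)-\Phi(\gamma')\|_\infty \;\le\; \frac{N+1}{2\pi}\,\|\gamma-\gamma'\|_1,
\end{equation*}
so $\Phi$ is Lipschitz and a fortiori continuous. Since $\mathcal{S}_{2N}$ is a closed and bounded subset of the finite-dimensional space $\mathbb{R}^{2N+1}$, it is compact by Heine--Borel. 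The image $\Theta_N = \Phi(\mathcal{S}_{2N})$ of a compact set under a continuous map is therefore compact in $(C[-\pi,\pi],\|\cdot\|_\infty)$, which is what we needed.

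There is essentially no obstacle here: the whole argument is finite-dimensional once we observe that $\Theta_N$ is spanned, through a bounded linear parametrization, by $2N+1$ fixed continuous functions. The only thing to double-check is that the parametrization written in the statement of the lemma does agree with $\Phi(\mathcal{S}_{2N})$, which is immediate from \cref{lem:fejer-interpolation} together with the definition $\gamma_\nu = 2\pi f_\nu/(2N+1)$ and the constraints $f_\nu\ge 0$, $\tfrac{2\pi}{2N+1}\sum_\nu f_\nu=1$ given in \cref{eqs:parametrization}.
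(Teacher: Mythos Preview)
Your proof is correct and follows essentially the same approach as the paper: both identify $\Theta_N$ as the image of the compact simplex $\mathcal{S}_{2N}$ under the Lipschitz map $\gamma\mapsto\sum_\nu\gamma_\nu\widetilde{K}_N(\cdot-x_\nu)$, bounding the Lipschitz constant via $\|\widetilde{K}_N\|_\infty=(N+1)/(2\pi)$. The paper phrases the conclusion via sequential compactness whereas you invoke ``continuous image of compact is compact'' directly, but this is a cosmetic difference.
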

\begin{proof}
The metric space $(\Theta_N, \|\cdot\|_{\infty})$ is compact iff it is sequentially compact \citep[Theorem 4.3.14]{kumaresan2005topology}. Now we show sequential compactness. Since $\mathcal{S}_{2N}$ is compact, for any sequence $\gamma^{(n)}$ in $\mathcal{S}_{2N}$ there exists a subsequence $\gamma^{(m_n)} \rightarrow \gamma \in \mathcal{S}_{2N}$. Let $\theta_\gamma$ be the density that corresponds to $\gamma$. We have 
\begin{equation*}
\|\theta_{\gamma^{(m_n)}} - \theta_{\gamma}\|_{\infty} = \left\|\sum_{\nu} \left(\gamma_\nu^{(m_n)} - \gamma_\nu \right) \widetilde{K}_{N}(x - x_k)\right\|_{\infty} \leq \bar{K}_{N} \|\gamma^{(m_n)} - \gamma\|_{\infty} \rightarrow 0,
\end{equation*}
where $\bar{K}_{N}:=\widetilde{K}_{N}(0)=\frac{N+1}{2\pi} < \infty$; see \cref{lem:fejer}(a). 
\end{proof}

$\{\Theta_N\}$ is a sequence of sieves in the sense of \citet{grenander1981abstract}. We show that the  sieves are dense in $\Theta$ with respect to $\|\cdot\|_{\infty}$. 
\begin{lemma} \label{lem:sieve-dense}
For every $\theta \in \Theta$, there exists $\theta_N \in \Theta_N$ such that $\|\theta_N - \theta\|_{\infty} \rightarrow 0$. 
\end{lemma}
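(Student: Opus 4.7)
The plan is to pick $\theta_N$ as the element of $\Theta_N$ obtained by sampling $\theta$ at the equidistant nodes $x_\nu = 2\pi\nu/(2N+1)$ and then renormalizing so the simplex constraint holds exactly. Concretely, set $f_\nu := \theta(x_\nu) \geq 0$ (non-negativity comes for free since $\theta$ is a density), define the un-normalized nodal Fej\'er sum
\begin{equation*}
C_N(x) := \frac{1}{2N+1} \sum_{\nu=-N}^{N} \theta(x_\nu)\, K_N(x - x_\nu),
\end{equation*}
and let $Z_N := (2\pi/(2N+1)) \sum_\nu \theta(x_\nu)$. Taking $\tilde f_\nu := f_\nu / Z_N$ produces a vector $\tilde\gamma_\nu = (2\pi/(2N+1))\tilde f_\nu$ in $\mathcal{S}_{2N}$, so $\theta_N := C_N / Z_N$ lies in $\Theta_N$ by its kernel representation from Lemma~\ref{lem:fejer-interpolation}. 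The proof will then reduce to showing $\|C_N - \theta\|_\infty \to 0$ together with $Z_N \to 1$, since
\begin{equation*}
\|\theta_N - \theta\|_\infty \leq \frac{\|C_N - \theta\|_\infty + |1 - Z_N|\,\|\theta\|_\infty}{Z_N}.
\end{equation*}

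The convergence $Z_N \to 1$ is a direct Riemann-sum statement: since $\theta$ is uniformly continuous on the compact torus, the equispaced sum $(2\pi/(2N+1))\sum_\nu \theta(x_\nu)$ converges to $\int_{-\pi}^\pi \theta = 1$. The uniform convergence $\|C_N - \theta\|_\infty \to 0$ is the heart of the argument and will be where the work is. The key preliminary identity is $\tfrac{1}{2N+1}\sum_\nu K_N(x - x_\nu) = 1$ for every $x$, which follows because the $(2N+1)$-point trapezoidal rule is exact for trigonometric polynomials of degree $\leq 2N$ (so in particular for the degree-$N$ polynomial $K_N(x - \cdot\,)$, integrating to $2\pi$). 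This lets us write
\begin{equation*}
C_N(x) - \theta(x) = \frac{1}{2N+1} \sum_{\nu=-N}^{N} \bigl[\theta(x_\nu) - \theta(x)\bigr] K_N(x - x_\nu).
\end{equation*}
Given $\varepsilon > 0$, uniform continuity yields $\delta > 0$ with $|\theta(y) - \theta(x)| < \varepsilon$ whenever the torus distance between $x$ and $y$ is below $\delta$. Splitting the sum at this threshold, the contribution from near nodes is bounded by $\varepsilon$ using non-negativity of $K_N$ and the identity above, while the contribution from far nodes is controlled by the pointwise decay $K_N(u) \leq 1/((N+1)\sin^2(u/2))$ from Lemma~\ref{lem:fejer}(a), giving a bound of order $2\|\theta\|_\infty/((N+1)\sin^2(\delta/2)) \to 0$ uniformly in $x$. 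Letting $\varepsilon\downarrow 0$ gives $\|C_N - \theta\|_\infty \to 0$.

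The main obstacle is the uniform-in-$x$ control of the tail sum: one must be careful that the summation over "far" nodes, of which there are up to $2N+1$, does not blow up when paired with the $N$-dependent upper bound on $K_N$. As the calculation above shows, the factor $1/(2N+1)$ in front exactly cancels the worst-case node count, leaving a bound $O(1/(N\sin^2(\delta/2)))$; this cancellation is what makes the nodal Fej\'er sum (rather than a Dirichlet-based interpolant) the right choice. Once $\|C_N - \theta\|_\infty \to 0$ and $Z_N \to 1$ are in hand, $\theta_N = C_N / Z_N \in \Theta_N$ and $\|\theta_N - \theta\|_\infty \to 0$, completing the proof.
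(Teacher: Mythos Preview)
Your proposal is correct and follows essentially the same construction as the paper: sample $\theta$ at the equidistant nodes, form the nodal Fej\'er sum, and renormalize to land in $\Theta_N$. The only difference is that where the paper cites \citet[Theorem~6.3, Chapter~X]{zygmund2002trigonometric} for the uniform convergence $\|C_N - \theta\|_\infty \to 0$ (and the boundedness of $C_N$), you supply a direct near/far splitting argument using the identity $\tfrac{1}{2N+1}\sum_\nu K_N(x-x_\nu)=1$ and the pointwise Fej\'er bound---which is essentially the proof of that cited theorem.
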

\begin{proof}
Let us first consider the sieves without the simplex constraint. Define
\begin{equation*}
\widetilde{\Theta}_N := \left\{\sum_{\nu=-N}^{N} \gamma_\nu \widetilde{K}_{N} (x - x_\nu): \gamma \in \mathbb{R}^{2N+1} \right \}.
\end{equation*}
Consider $\tilde{\theta}_N := \tilde{\theta}_{\gamma^{(N)}} \in \widetilde{\Theta}_N$ determined by $\gamma^{(N)}_\nu = \frac{2 \pi}{2N + 1} \theta(x_\nu) \geq 0$. \citet[Theorem 6.3, Chapter X]{zygmund2002trigonometric} shows that (i) $\|\tilde{\theta}_N - \theta\|_{\infty} \rightarrow 0$, and (ii) $\tilde{\theta}_N$ remains within the same bounds as $\theta$. Now let us consider $\theta_N := \theta_{\gamma^{(N)}} \in \Theta_N$ with normalized weights $\gamma^{(N)} = \tilde{\gamma}^{(N)} / \|\tilde{\gamma}^{(N)}\|_{1} \in \mathcal{S}_{2N}$. Using (ii), we have
\begin{equation*}
\|\tilde{\theta}_N - \theta_N\|_{\infty} = \left|1 - \|\tilde{\gamma}^{(N)}\|_{1}^{-1} \right| \|\tilde{\theta}_N\|_{\infty} \leq \left|1 - \|\tilde{\gamma}^{(N)}\|_{1}^{-1} \right| c, 
\end{equation*}
where $c$ is the maximum density of $\theta$. Clearly, $c < \infty$ since $\theta$ is a continuous density on $[-\pi, \pi]$. Note that
\begin{equation*}
1 = \int_{-\pi}^{\pi} \theta(x) \dd x = \sum_{\nu=-N}^{N} \int_{x_\nu}^{x_{\nu+1}} \theta(x) \dd x = \frac{2\pi}{2N+1} \sum_{\nu=-N}^{N} \theta(x_\nu')
\end{equation*}
for $x_\nu' \in [x_\nu, x_{\nu+1}]$ by the mean value theorem. We have
\begin{equation*}
\begin{split}
\left | \|\tilde{\gamma}^{(N)}\|_{1} - 1 \right| &= \left |\frac{2 \pi}{2N+1} \sum_{\nu=-N}^{N}(\theta(x_\nu) - \theta(x_\nu')) \right| \\
& \leq 2 \pi \sup_{|x - x'| \leq 2 \pi / (2N+1)} |\theta(x) - \theta(x')| \rightarrow 0,
\end{split}
\end{equation*}
by uniform continuity of $\theta$. It follows that $\|\tilde{\theta}_N - \theta_N\|_{\infty} \rightarrow 0$. Finally, $\|\theta_N - \theta\|_{\infty} \leq \|\theta_N - \tilde{\theta}_N\|_{\infty} + \|\tilde{\theta}_N - \theta\|_{\infty} \rightarrow 0$, using (i) above.
\end{proof}

Next, we show a uniform law of large numbers for $\Theta_N$ using the following lemma.
\begin{lemma}[{Theorem 19.4 of \citet{van2000asymptotic}}] \label{lem:GC-bracketing}
Every class $\mathcal{F}$ of measurable functions such that $N_{[~]}(\varepsilon, \mathcal{F}, L_1(P)) < \infty$ for every $\varepsilon > 0$ is $P$-Glivenko-Cantelli.
\end{lemma}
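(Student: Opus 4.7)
The plan is to carry out the standard bracketing argument: for each fixed $\varepsilon>0$, use the finite bracketing number to reduce the supremum over $\mathcal{F}$ to a maximum over finitely many empirical averages, then let $\varepsilon\downarrow 0$ along a countable sequence.

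First I would fix $\varepsilon>0$ and select a finite collection of brackets $\{[\ell_j,u_j]\}_{j=1}^{N_\varepsilon}$, where $N_\varepsilon := N_{[~]}(\varepsilon,\mathcal{F},L_1(P))<\infty$ by hypothesis, such that each $f\in\mathcal{F}$ satisfies $\ell_j\le f\le u_j$ pointwise for some $j$ and $P(u_j-\ell_j)\le\varepsilon$. Note that this forces $\ell_j,u_j\in L_1(P)$, which is the only integrability we need. For any $f$ lying in bracket $j$, monotonicity of the expectation gives the two-sided sandwich
\begin{equation*}
(P_n-P)f \;\le\; P_n u_j - P\ell_j \;=\; (P_n-P)u_j + P(u_j-\ell_j) \;\le\; (P_n-P)u_j + \varepsilon,
\end{equation*}
and symmetrically $(P_n-P)f \ge (P_n-P)\ell_j - \varepsilon$. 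Taking the supremum over $f\in\mathcal{F}$ (and, to avoid measurability subtleties, working with outer probability or a countable separant of $\mathcal{F}$) yields
\begin{equation*}
\sup_{f\in\mathcal{F}}\bigl|(P_n-P)f\bigr| \;\le\; \max_{1\le j\le N_\varepsilon}\bigl|(P_n-P)u_j\bigr| + \max_{1\le j\le N_\varepsilon}\bigl|(P_n-P)\ell_j\bigr| + \varepsilon.
\end{equation*}

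Second, I would invoke the classical strong law of large numbers on each of the $2N_\varepsilon$ integrable functions $u_j,\ell_j$; since $N_\varepsilon$ is finite, the two maxima on the right-hand side tend to $0$ almost surely as $n\to\infty$. Hence $\limsup_n \sup_{f\in\mathcal{F}}|(P_n-P)f|\le\varepsilon$ a.s. Finally, applying this argument along a sequence $\varepsilon_k\downarrow 0$ and taking the intersection of the countably many probability-one events gives $\sup_{f\in\mathcal{F}}|(P_n-P)f|\to 0$ a.s., which is the Glivenko--Cantelli property.

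The only subtle point is the measurability of $\sup_{f\in\mathcal{F}}|(P_n-P)f|$: in full generality one states the conclusion in terms of outer probability, but for the application in Appendix~C the class $\Theta_k$ is parametrized by the compact simplex $\mathcal{S}_{2N}$ and the map $\gamma\mapsto \log p_{\theta_\gamma,t}(x)$ is continuous, so a countable dense parameter subset suffices and no non-measurability issue arises. Because the statement is a direct quotation of \citet[Thm.~19.4]{van2000asymptotic}, the natural way to present this in the paper is simply to cite that reference rather than reproduce the argument.
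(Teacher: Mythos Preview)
Your proof is correct and is exactly the standard bracketing argument behind \citet[Theorem~19.4]{van2000asymptotic}. The paper itself does not give a proof of this lemma at all; it merely quotes the statement with attribution and then applies it in the proof of \cref{lem:sieve-GC}. So you have supplied strictly more than the paper does, and you even anticipate this at the end of your write-up.
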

A function class $\mathcal{F}$ is called $P$-Glivenko-Cantelli if $\sup_{f \in \mathcal{F}} | (\mathbb{P}_n - P) f| \rightarrow_{p} 0$, where $\mathbb{P}_n$ is the empirical measure of $P$ under $n$ iid samples. Given two functions $l$ and $u$, let $[l, u]:=\{f: l(x) \leq f(x) \leq u(x) \}$. $[l,u]$ is called an $\varepsilon$-bracket in $L_1(P)$ if $P|l - u| \leq \varepsilon$. The bracketing number $N_{[~]}(\varepsilon, \mathcal{F}, L_1(P))$ is the smallest cardinality $|I|$ of a set of $\varepsilon$-brackets that covers $\mathcal{F}$, i.e., $\mathcal{F} \subseteq \bigcup_{i \in I} [l_i, u_i]$. Similarly, we use $N(\varepsilon, \mathcal{F}, \|\cdot\|)$ to denote the covering number of $\mathcal{F}$, namely the smallest cardinality of $\{f_i: i \in I\}$ such that $\mathcal{F} \subseteq \bigcup_{i \in I} \{f: \|f - f_i\| \leq \varepsilon\}$. See \citet[Chapter 19]{van2000asymptotic} for more background. 

\begin{lemma} \label{lem:sieve-GC}
Under \cref{assump:t-bounded}, $\sup_{\theta \in \Theta_N} \left|Q_n(\theta) - Q(\theta) \right|\rightarrow_{p} 0$ as $n \rightarrow \infty$ for every $N \geq 1$.
\end{lemma}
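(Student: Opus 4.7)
The plan is to invoke \cref{lem:GC-bracketing} by showing that the function class $\mathcal{F}_N := \{ f_\theta(t, x) := \log p_{\theta, t}(x) : \theta \in \Theta_N \}$ has finite $L_1(P)$-bracketing number for every $\varepsilon > 0$, where $P$ denotes the joint law of $(t, X)$. This reduces the problem to constructing $\varepsilon$-brackets in finitely many pieces, exploiting the fact that for fixed $N$ the sieve $\Theta_N$ is a finite-dimensional compact set.

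First I would establish two-sided pointwise bounds on $p_{\theta, t}(x)$ that are uniform over $\theta \in \Theta_N$. Any $\theta \in \Theta_N$ is a convex combination of Fej\'{e}r kernels $\widetilde{K}_N(\cdot - x_\nu)$ with $\|\widetilde{K}_N\|_\infty = (N+1)/(2\pi)$ by \cref{lem:fejer}(a), and convolution with $\phi_{\sqrt{t}}$ is a contraction in $\|\cdot\|_\infty$, giving $p_{\theta, t}(x) \leq (N+1)/(2\pi)$. For the lower bound, use the integral representation $p_{\theta, t}(x) = \int \bar{\theta}(u) \phi_{\sqrt{t}}(x-u) \dd u$ combined with $\phi_{\sqrt{t}}(x - u) \geq \phi_{\sqrt{t}}(2\pi) =: c(t)$ for $x, u \in [-\pi, \pi]$ and $\int_{-\pi}^{\pi} \theta(u) \dd u = 1$ to conclude $p_{\theta, t}(x) \geq c(t) > 0$ whenever $t > 0$.

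Next, by \cref{lem:sieve-compact} the sieve $\Theta_N$ is compact in $\|\cdot\|_\infty$, so for any $\delta > 0$ one can choose a finite $\delta$-cover $\{\theta_1, \dots, \theta_M\}$ with $M = M(\delta, N) < \infty$. For each $j$ I would set $u_j(t, x) := \log( p_{\theta_j, t}(x) + \delta )$ and $l_j(t, x) := \log \max( p_{\theta_j, t}(x) - \delta, c(t) )$. Since convolution is $1$-Lipschitz in $\|\cdot\|_\infty$, any $\theta$ with $\|\theta - \theta_j\|_\infty \leq \delta$ satisfies $l_j(t, x) \leq \log p_{\theta, t}(x) \leq u_j(t, x)$, so $\mathcal{F}_N$ is covered by these $M$ brackets. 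A case split on whether $p_{\theta_j, t}(x) - \delta \geq c(t)$ or not yields the pointwise bound $u_j(t, x) - l_j(t, x) \leq \log(1 + 2\delta/c(t)) \leq 2\delta/c(t)$, hence $\E | u_j - l_j | \leq 2\delta \cdot \E[1/c(t)]$. Choosing $\delta = \varepsilon / ( 2 \, \E[1/c(t)])$ produces $\varepsilon$-brackets, giving $N_{[~]}(\varepsilon, \mathcal{F}_N, L_1(P)) \leq M(\delta) < \infty$, and \cref{lem:GC-bracketing} then delivers the conclusion. Measurability of $f_\theta$ and integrability of the bracket endpoints $u_j, l_j$ themselves follow from the same two-sided bounds.

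\textbf{Main obstacle.} The quantitative ingredient actually used is the integrability $\E[1/c(t)] = \E[\sqrt{2\pi t}\, e^{2\pi^2/t}] < \infty$; because $c(t) = \phi_{\sqrt{t}}(2\pi)$ decays like $e^{-2\pi^2/t}$ as $t \downarrow 0$, this is a condition on the distribution of $t$ near zero that is not spelled out by \cref{assump:t-bounded} alone. In the intended large-scale experimentation setting the noise scale is bounded away from zero by the finite per-experiment sample size, so this is a benign implicit assumption; absent it, one would need either to restrict $\Theta_N$ to densities bounded below by a positive floor (e.g.\ by mixing with a small uniform component) or to use a Hellinger-type bracketing metric that avoids pointwise ratios of densities.
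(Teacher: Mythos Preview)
Your route is the same as the paper's: use compactness of $\Theta_N$ (\cref{lem:sieve-compact}) to extract a finite $\|\cdot\|_\infty$-cover, transfer it to the marginal densities via the $1$-Lipschitz property of Gaussian convolution, sandwich with two-sided bounds on $p_{\theta,t}$, and conclude via \cref{lem:GC-bracketing}. The only structural difference is that the paper brackets $p_{\theta,t}$ first and then argues that $\log$ is Lipschitz on a fixed interval, while you bracket $\log p_{\theta,t}$ directly; this is cosmetic.

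Your ``main obstacle'' is genuine and in fact pinpoints a gap the paper's own argument glosses over. The paper asserts the uniform lower bound $p_{\theta,t}(x) \geq \phi_{\sqrt{t_{\max}}}(2\pi)$ for all $t \in [0,t_{\max}]$, but $t \mapsto \phi_{\sqrt{t}}(2\pi) = (2\pi t)^{-1/2}\exp(-2\pi^2/t)$ is increasing on $(0,4\pi^2)$ and vanishes as $t \downarrow 0$, so the infimum over $t \in [0,t_{\max}]$ is not attained at $t_{\max}$ and is in fact $0$. Your $t$-dependent floor $c(t) = \phi_{\sqrt{t}}(2\pi)$ together with the moment condition $\E[1/c(t)] < \infty$ (or, more simply, a positive lower bound $t \geq t_{\min} > 0$ almost surely) is the honest repair; the paper implicitly needs the same. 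So your proposal is correct and, on this point, more careful than the reference proof.
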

\begin{proof}
By \cref{assump:t-bounded}, let 
\begin{equation*}
\mathcal{F}_N := \left \{p_{\theta}(x,t): [-\pi, \pi] \times [0, t_{\max}] \rightarrow \mathbb{R} \mid \theta \in \Theta_N \right \},
\end{equation*}
where $p_{\theta}(x,t)$ is defined by \cref{eqs:p-theta-x-t}. By definition of $Q$ and $Q_n$ (see \cref{eqs:Q,eqs:Qn}), to show the lemma is to show that $\log \mathcal{F}_N := \{\log f: f \in \mathcal{F}_N\}$ is $P$-Glivenko-Cantelli for every $N \geq 1$. 

Fix any $N \geq 1$. By \cref{lem:GC-bracketing}, we can show the result by showing finite bracketing number for $\log \mathcal{F}_N$ for every $\varepsilon > 0$. First, we show finite bracketing number for $\mathcal{F}_N$. Note that $N(\varepsilon, \Theta_N, \|\cdot\|_{\infty}) < \infty$ for every $\varepsilon > 0$ by \cref{lem:sieve-compact}. Take $\{\theta_i: i \in I\}$ to be an $\varepsilon/2$-cover of $\Theta_N$ in $\|\cdot\|_{\infty}$. For any $p_{\theta_1},p_{\theta_2} \in \mathcal{F}_N$, it holds that
\begin{equation*}
\begin{split}
\|p_{\theta_1} - p_{\theta_2}\|_{\infty} &= \left \| \int_{-\infty}^{+\infty} (\bar{\theta}_1(u) - \bar{\theta}_2(u)) \phi_{\sqrt{t}} (x - u) \dd u \right \|_{\infty} \\
&\leq \|\theta_1(u) - \theta_2(u)\|_{\infty} \left|\int_{-\infty}^{+\infty} \phi_{\sqrt{t}} (x - u) \dd u  \right | = \|\theta_1(u) - \theta_2(u)\|_{\infty}.
\end{split}
\end{equation*}
Hence, for any $p_{\theta} \in \mathcal{F}_N$, there exists some $i \in I$ such that $\|p_{\theta} - p_{\theta_i} \|_{\infty} \leq \varepsilon / 2$. Under \cref{assump:t-bounded}, we claim that $\mathcal{F}_N$ is bounded from above and below by positive constants. The upper bound is given by
\begin{equation*}
\int_{-\infty}^{+\infty} \bar{\theta}(u) \phi_{\sqrt{t}}(x-u) \dd u \leq \|\theta\|_{\infty} \leq \bar{K}_{N},
\end{equation*}
where $\bar{K}_N$ appeared in the proof of \cref{lem:sieve-compact}. And the lower bound comes from
\begin{equation} \label{eqs:marginal-dens-lb}
\begin{split}
\int_{-\infty}^{+\infty} \bar{\theta}(u) \phi_{\sqrt{t}}(x-u) \dd u &\geq \int_{-\pi}^{+\pi} \theta(u) \phi_{\sqrt{t}}(x-u) \dd u \\
& \geq \inf\{\phi_{\sqrt{t}}(x): t \in [0, t_{\max}], x \in [-\pi, \pi] \} \\
& \geq \phi_{\sqrt{t_{\max}}} (2 \pi) > 0.
\end{split}
\end{equation}
For every $\theta_i$ in the cover, we construct a pair of brackets
\begin{equation*}
\begin{split}
l_i(x, t) &:= \left( \int_{-\infty}^{+\infty} \bar{\theta}_i(u) \phi_{\sqrt{t}}(x-u) \dd u - \varepsilon / 2 \right) \vee \phi_{\sqrt{t_{\max}}} (2 \pi), \\
u_i(x, t) &:= \left( \int_{-\infty}^{+\infty} \bar{\theta}_i(u) \phi_{\sqrt{t}}(x-u) \dd u + \varepsilon / 2 \right) \wedge \bar{K}_N.
\end{split}
\end{equation*}
By construction, $\mathcal{F}_N \subseteq \bigcup_{i \in I} [l_i, u_i]$ and $P |u_i - l_i| \leq \varepsilon$. Therefore, we have
\begin{equation*}
N_{[~]}(\varepsilon, \mathcal{F}_N, L_1(P)) \leq N(\varepsilon, \Theta_k, \|\cdot\|_{\infty}) < \infty.
\end{equation*}
Further, since $0 < \phi_{\sqrt{t_{\max}}} (2 \pi) \leq \mathcal{F}_N \leq \bar{K}_N < \infty$, $\log \mathcal{F}_N$ is a Lipschitz transform on $\mathcal{F}_N$ and it follows that $N_{[~]}(\varepsilon, \log \mathcal{F}_N, L_1(P)) < \infty$. By \cref{lem:GC-bracketing}, $\log \mathcal{F}_N$ is $P$-Glivenko-Cantelli. 
\end{proof}

Finally, consistency is established as follows. 
\begin{proof}[Proof of \cref{thm:consistency}]
We prove consistency by verifying the conditions in \cref{lem:Chen}, with $\Theta$ defined in \cref{eqs:param-space} and metric $d(f, g) := \|f-g\|_{\infty}$. Clearly, $\hat{\theta}_n$ as defined in \cref{eqs:theta-n} satisfies \cref{eqs:approximate-mle}. Condition (a) is satisfied due to (i) the continuity of $Q$ and (ii) \cref{lem:identification}. By \cref{lem:sieve-dense}, Condition (b) is satisfied. Condition (c) is clearly satisfied by the definition of $Q_n$ in \cref{eqs:Qn}. Finally, Condition (d) and (e) are verified by \cref{lem:sieve-compact} and \cref{lem:sieve-GC} respectively. 
\end{proof}

\section{Gaussian mixture model of the prior} \label{apx:sec:EM}
\subsection{The EM algorithm}
We first consider a parametric model by assuming that the prior $G$ is a mixture of $K$ Gaussians, namely
\begin{equation*}
G(\Delta) = \sum_{k=1}^{K} \alpha_k \N(\Delta; \mu_k, V_k),
\end{equation*}
where $\alpha_k \geq 0$, $\sum_{k=1}^{K} \alpha_k = 1$ and $V_k \geq 0$. 

The induced marginal likelihood on $\hat{\Delta}$ is again a mixture of Gaussians
\begin{equation*}
\hat{\Delta} \sim \sum_{k=1}^{K} \alpha_k \N(\mu_k, V_k + s^2).
\end{equation*}
And the marginal log-likelihood is
\begin{equation}
\ell_n = \sum_{i=1}^{N} \log \left\{ \sum_{k=1}^{K} \alpha_k \N(\hat{\Delta}_i; \mu_k, V_k + s^2) \right\}.  \label{eqs:marginal-ll-mog}
\end{equation}
Note that $\ell_n$ is not concave in $\{(\alpha_k, \mu_k, V_k^{-1})\}$. 

\paragraph{$\bm{K=1}$} In the simplest case of fitting one Gaussian, setting $\nabla \ell_n = 0$ yields
\begin{equation*}
\hat{\mu} = \frac{\sum_{i=1}^{N} (\hat{V} + s_i^2)^{-1} \hat{\Delta}_i}{\sum_{i=1}^{N} (\hat{V} + s_i^2)^{-1}}, \quad \hat{V} = \frac{1}{N} \sum_{i=1}^N (\hat{\Delta}_i - \hat{\mu})^2 - \frac{1}{N} \sum_{i=1}^N s_i^2.
\end{equation*}
A fixed-point can be found by iterating the two equations. Note the estimates are different from naively fitting a normal from $\{\hat{\Delta}_i\}$. 

\paragraph{$\bm{K > 1}$} It is not straightforward to optimize \cref{eqs:marginal-ll-mog} due to non-convexity. Instead, an EM algorithm \citep{dempster1977maximum} can be constructed by imputing latent component indicators $z_{ik}$ and unobserved true effects $\Delta_i$, such that we can iteratively maximize a lower-bound for $\ell_n$. This algorithm, in the more general multivariate setting, was derived by \citet{bovy2011extreme}. 

The E-step entails updating the following $N \times K$ matrices
\begin{equation*}
q_{ik} \leftarrow \frac{\alpha_k \N(\hat{\Delta}_i; \mu_k, V_k + s_i^2)}{\sum_{l=1}^K \alpha_l \N(\hat{\Delta}_i; \mu_l, V_l + s_i^2)},
\end{equation*}
\begin{equation*}
b_{ik} \leftarrow \frac{s_i^2 \mu_k + V_k \hat{\Delta}_i}{s_i^2 + V_k}, \quad B_{ik} \leftarrow \frac{s_i^2 V_k}{s_i^2 + V_k}.
\end{equation*}
The M-step updates the estimates for $k=1,\dots,K$ by
\begin{equation*}
\alpha_{k} \leftarrow \frac{1}{N} \sum_{i=1}^{N} q_{ik}, \quad \mu_k \leftarrow \frac{\sum_{i=1}^N q_{ik} b_{ik}}{\sum_{i=1}^N q_{ik}}, \quad V_k \leftarrow \frac{\sum_{i=1}^{N} q_{ik} \{(\mu_{k} - b_{ik})^2 + B_{ik}\}}{\sum_{i=1}^{N} q_{ik}}.
\end{equation*}
The EM algorithm converges to a local maximum of $\ell_n$. To approach the global maximum, one can run the algorithm multiple times with random initializations.

\subsection{Posterior inference}
Under a mixture of Gaussian prior, the posterior for $\Delta$ is also a mixture of Gaussians
\begin{equation*}
\Delta \mid \hat{\Delta}, s \sim \sum_{k=1}^K \alpha_k' \N(\mu_k', V_k'),
\end{equation*}
where the updated parameters are given by
\begin{equation*}
\alpha_k' \propto \frac{\alpha_k}{\sqrt{s^2 + V_k}} \exp \left \{ - \frac{(\mu_k - \hat{\Delta})^2}{2(s^2 + V_k)} \right \}, \quad \sum_{k=1}^{K} \alpha_k' = 1
\end{equation*}
and
\begin{equation*}
\mu_k' = \frac{s^2 \mu_k + V_k \hat{\Delta}}{s^2 + V_k}, \quad V_k' = \frac{V_k s^2}{s^2 + V_k}.
\end{equation*}
Suppose $\mu_1 = V_1 = 0$, i.e., the first component is a point mass at zero. Then $\alpha_1'$ can be interpreted as the posterior probability that the true effect is zero.

\section{Additional plots} \label{apx:sec:plots}
We provide additional plots \cref{apx:fig:prior-amzn,apx:fig:prior-amzn-mog,apx:fig:shrinkage-amzn}.

\begin{figure}[!htb]
\centering 
\includegraphics[width=0.85\textwidth]{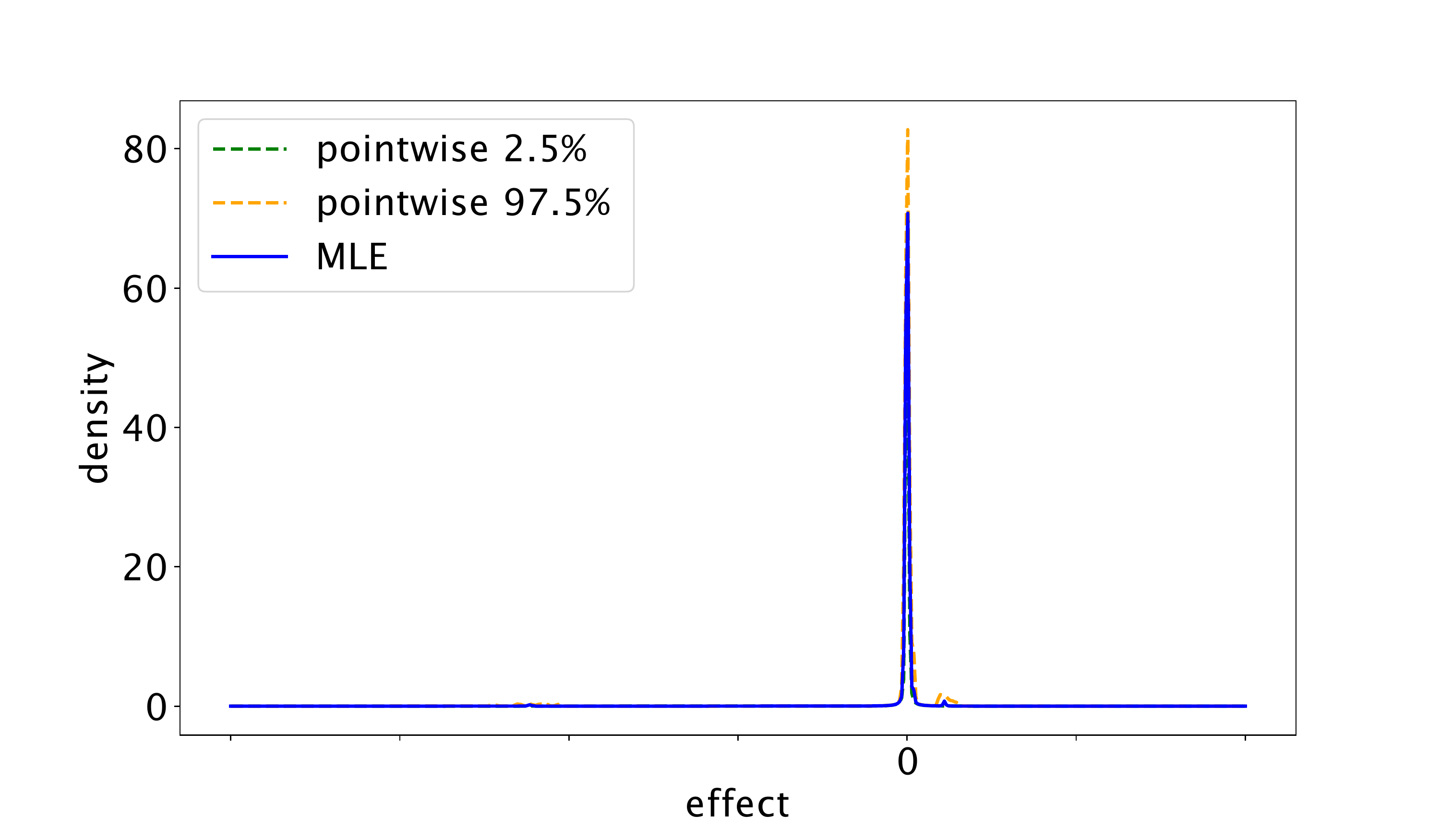}
\caption{The prior over the true effect estimated from a set of experiments run at Amazon using the spectral method developed in this paper. The pointwise confidence bands are estimated from bootstrap replications. See also \cref{fig:prior-amzn} in the main text for the logarithmic scale.}
\label{apx:fig:prior-amzn}
\end{figure}

\begin{figure}[!htb]
\centering
\includegraphics[width=0.75\textwidth]{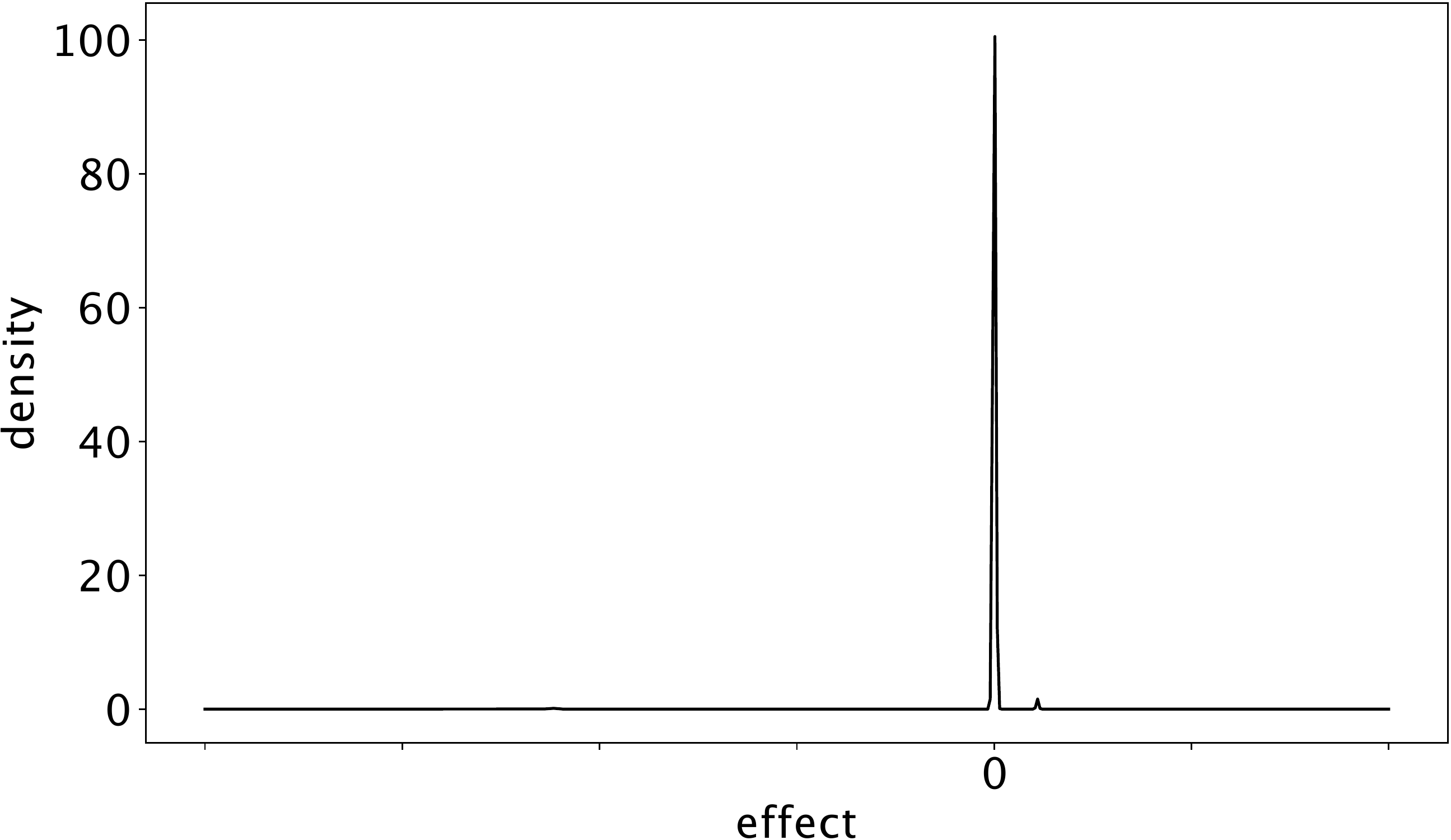}
\includegraphics[width=0.75\textwidth]{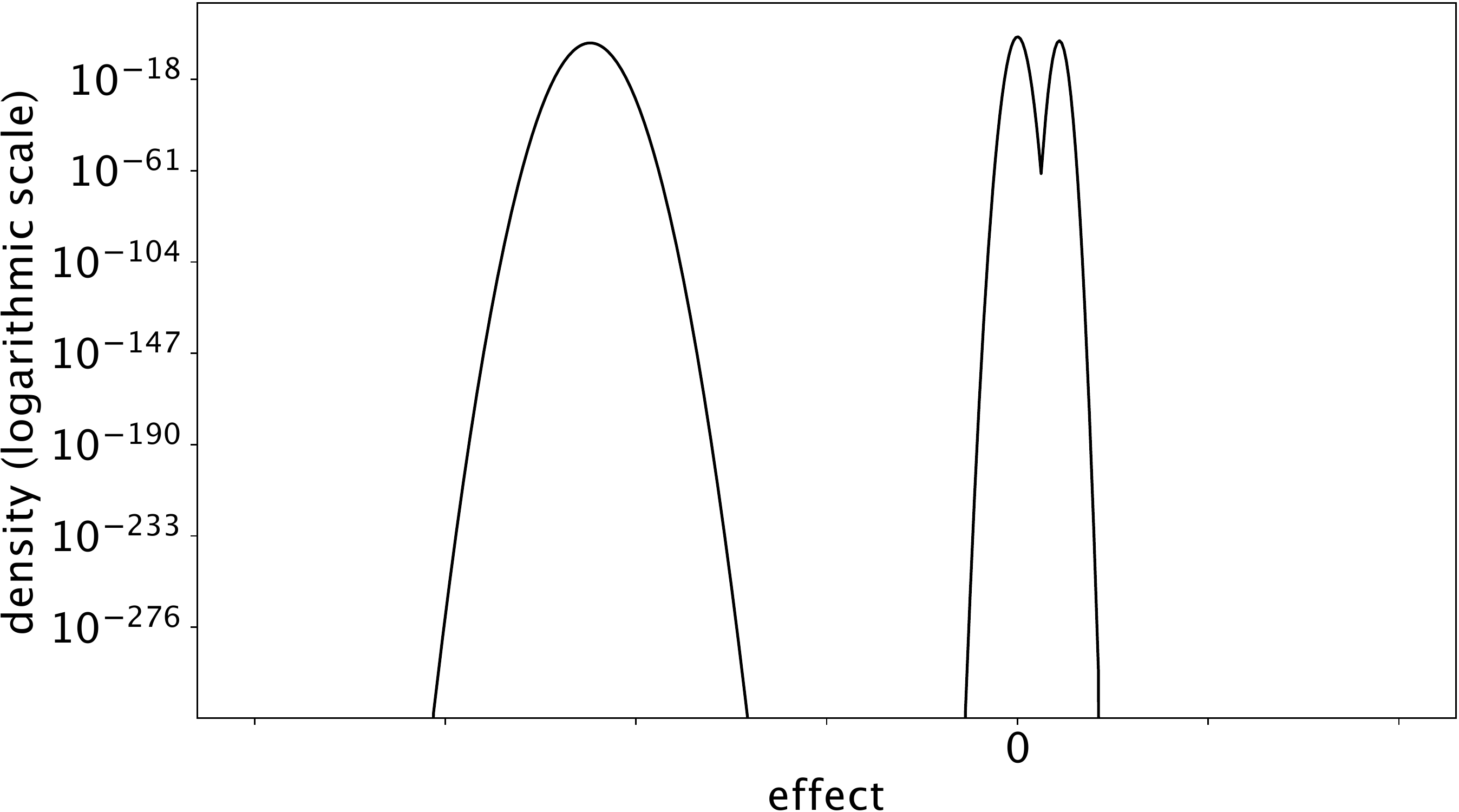}
\caption{The prior over the true effect estimated from a set of experiments run at Amazon fitted with a mixture of three Gaussians (top: in linear scale; bottom: in logarithmic scale). Compare to \cref{fig:prior-amzn} and \cref{apx:fig:prior-amzn} estimated by the spectral method.}
\label{apx:fig:prior-amzn-mog}
\end{figure}

\begin{figure}[!htb]
\centering
\includegraphics[width=0.75\textwidth]{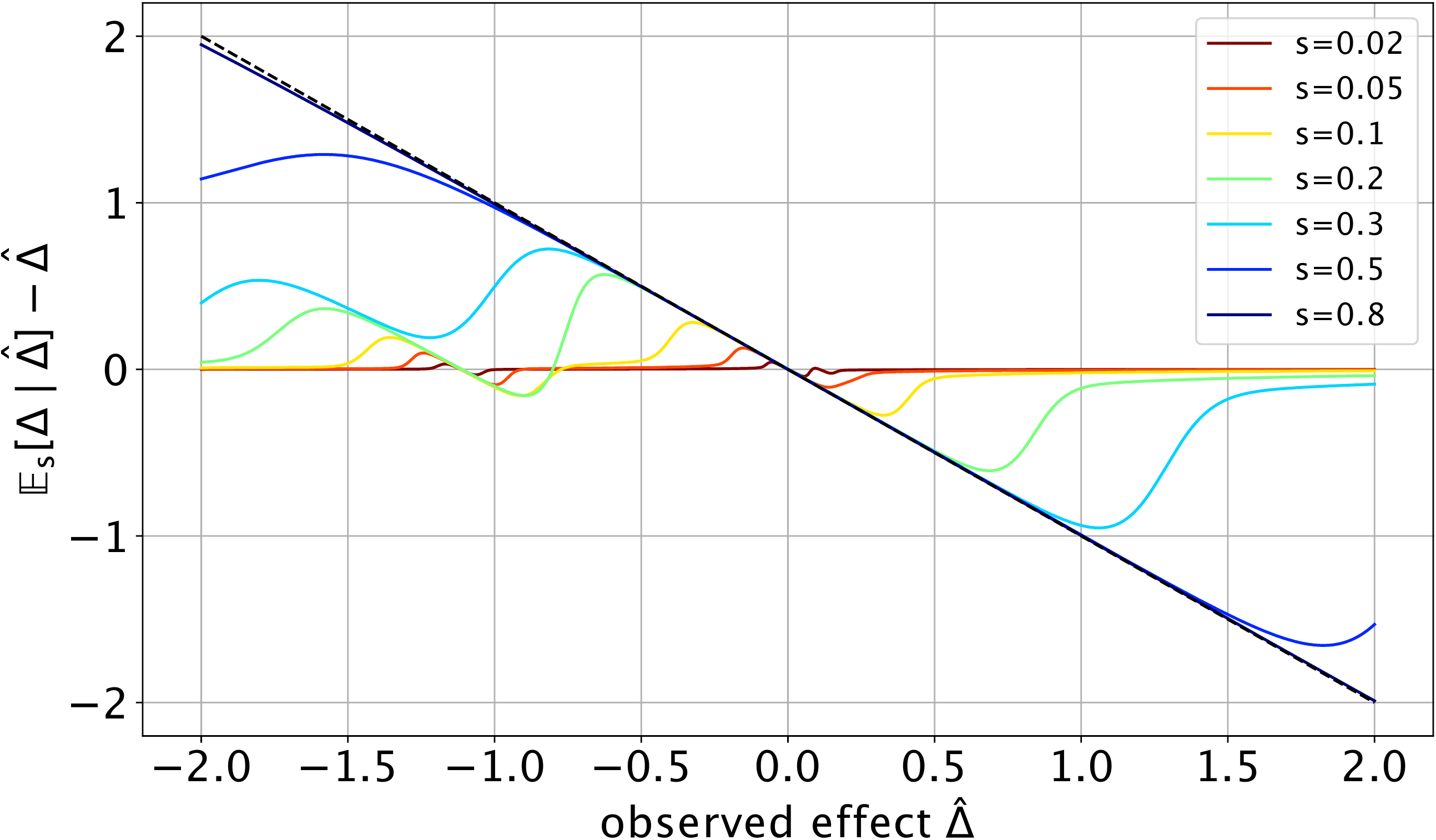}
\includegraphics[width=0.75\textwidth]{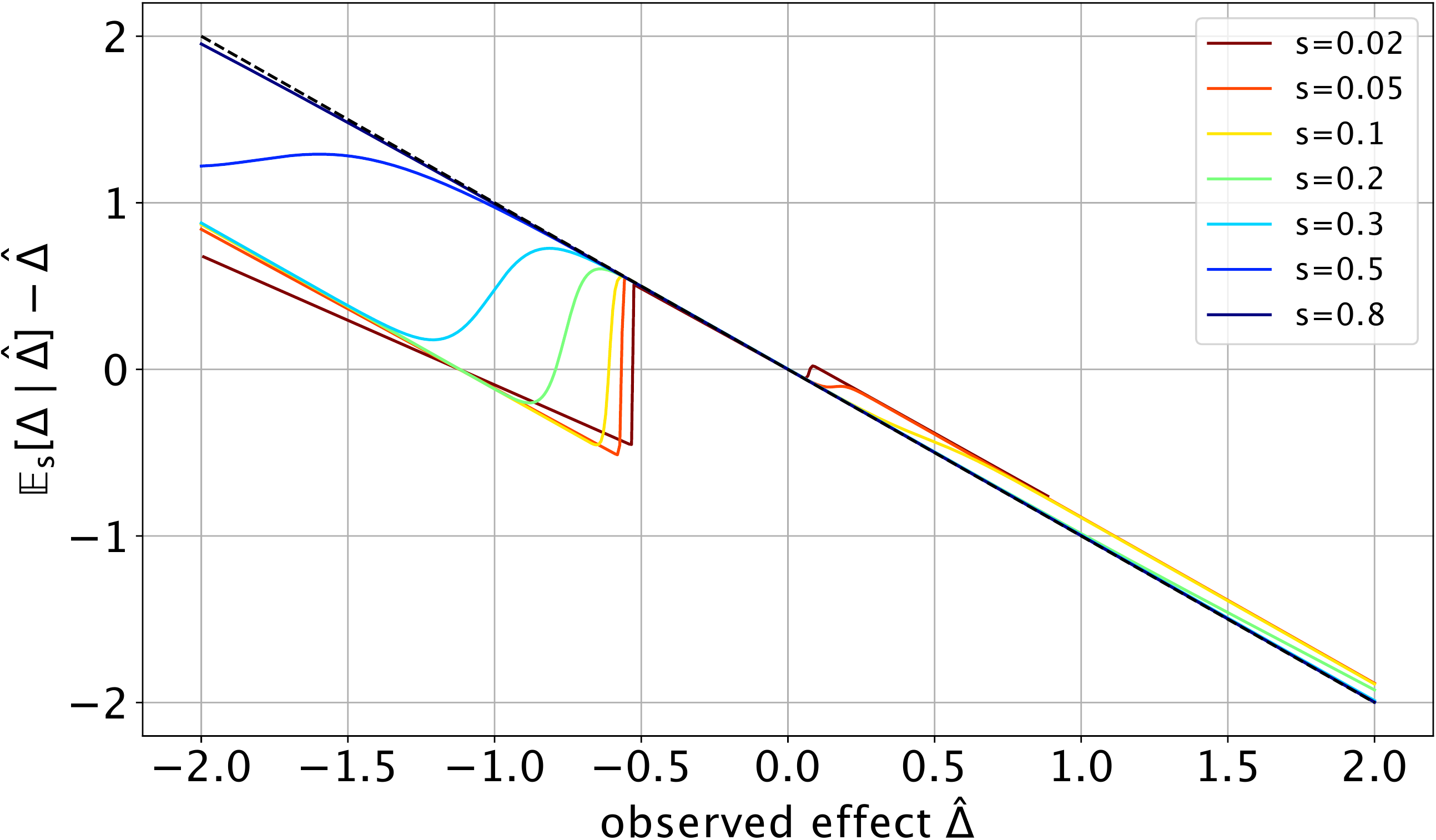}
\caption{The amount of shrinkage in the estimated posterior mean function $\E_{s}[\Delta \mid \hat{\Delta}] - \hat{\Delta} = s^2 \ell'_s(\hat{\Delta})$ for different $s$ (top: our spectral method, bottom: fitting the prior with a mixture of three Gaussians). The strongest shrinkage is $y=-\hat{\Delta}$ as $s \rightarrow \infty$ (dashed diagonal line), which always sets the posterior mean to zero. See also \cref{fig:shrinkage-amzn} for more details around zero.}
\label{apx:fig:shrinkage-amzn}
\end{figure} 

\bibliographystyle{plainnat}

\end{document}